\title{Strong Completeness and the Finite Model Property for Bi-Intuitionistic Stable Tense Logics}
\author{Katsuhiko Sano
\institute{Department of Philosophy, Hokkaido University, Hokkaido, Japan}
\email{v-sano@let.hokudai.ac.jp}
\and
John G. Stell 
\institute{School of Computing, University of Leeds, Leeds, UK}
\email{\quad J.G.Stell@leeds.ac.uk}
}
\begin{document}

\maketitle

\begin{abstract}
Bi-Intuitionistic Stable Tense Logics (BIST Logics) are tense logics with
a Kripke semantics where worlds in a frame are equipped with a pre-order
as well as with an accessibility relation which is `stable' with respect 
to this pre-order. BIST logics are extensions of a logic, $\mathbf{BiSKt}$, which 
arose in the semantic context of hypergraphs, since a special case of the 
pre-order can represent the incidence structure of a hypergraph. In this 
paper we provide, for the first time, a Hilbert-style axiomatisation of 
BISKt and prove the strong completeness of $\mathbf{BiSKt}$. We go on to prove strong 
completeness of a class of BIST logics obtained by extending $\mathbf{BiSKt}$ by 
formulas of a certain form. Moreover we show that the finite model property and the decidability hold for a class of BIST logics. 
\end{abstract}

\section{Introduction}

The motivation for the logics in this paper comes from mathematical morphology as used in image processing and from the extension of this body of techniques 
to the case of sets of pixels having the structure of a graph.
A black and white image can be identified with a subset of $\mathbb{Z}^2$; the black pixels being those in the subset.
Mathematical morphology uses what it calls a {\em structuring element} as a `probe' or a `lens' through which to view an image. By use of the appropriate structuring element
certain features of an image may be removed or rendered less prominent whereas other features can be accentuated. The mathematical basis of this approach is that
a relation $R$ on a set $U$ provides functions which map subsets of $U$ to subsets of $U$. In mathematical morphology these functions are known as {\em dilation} and {\em erosion},
defined as follows where $X \subseteq U$.
$$\begin{array}{lrcl}
\text{Dilation: } & X \dilate  R & = & \{u \in U \mid \text{ for some $x$} \; (x \mathrel{R} u \text{ and } x \in X)\},\\[1ex]
\text{Erosion: }  & R \erode   X & = & \{u \in U \mid \text{ for all $x$} \; (u \mathrel{R} x \text{ implies }    x \in X)\}.  
\end{array}$$
From these operations the {\em opening} and {\em closing} of $X$ by $R$ are defined respectively by $(R \erode X) \dilate R$ and 
$R \erode (X \dilate R)$. Given an appropriate choice of $R$ and when $X$ consists of the black pixels in an image,
 the opening can be used to remove unwanted black pixels from the image. Dually the the closing can be used to add pixels to an image, for example to fill in white cracks between black parts of an image. 
 The most common formulation of mathematical morphology is in terms of structuring elements which usually consist of small patterns of pixels, however it is
 well-known~\cite[p.60]{NajmanTalbot2010} that every structuring element gives rise to a relation on the set of all pixels.
 
A recent development in mathematical morphology has been the extension of the techniques to graphs~\cite{CoustyNajmanCVIU2013}.
One possibility is to start with $\mathbb{Z}^2$ as a grid of pixels but with the graph structure that arises by taking pixels as nodes and putting edges between adjacent pixels. A subgraph then specifies a set of pixels but not necessarily containing all the adjacencies (edges) from the underlying graph.
To extend the operations of dilation and erosion to the case of graphs needs an appropriate notion of a relation on a graph.
While there are several possible notions of what should be meant by a relation on a graph, the definition which appears in~\cite{Stell2015} can be justified in view
of the bijective correspondence between these relations and union-preserving functions on the lattice of subgraphs. Relations on graphs are most conveniently developed in the more general situation of hypergraphs -- in which edges can be incident with arbitrary non-empty sets of nodes. These relations are relations on the set of all edges
and nodes which statisfy a stability condition.

\begin{definition}
A {\em hypergraph} is a set $U$ together with an incidence relation $H \subseteq U \times U$ such that $H$ is reflexive and whenever 
$x \mathrel{H} y$ and $y \mathrel{H} z$ then $x = y $ or $y = z$.
A hypergraph is thus a special case of a set $U$ equipped with a pre-order, $H$. A subset $X \subseteq U$ of a hypergraph $(U,H)$ is a {\em sub hypergraph} or more briefly a {\em subgraph} when for all $x \in X$, if $x \mathrel{H} u$ then
$u \in X$.
\end{definition}
Given a hypergraph, $(U, H)$ every $u \in U$ is either an {\em edge} or a {\em node}. The nodes are those elements, $u$,  for which $u \mathrel{H} v$ implies $u = v$,
and the edges are those $u$ for which there exists a $v$ such that $u \mathrel{H} v$ and $u \neq v$.
If $u$ is an edge, and $v$ is a node, and $u \mathrel{H} v$ then we say that $u$ and $v$ are {\em incident}.
A graph is the special case of a hypergraph in which each edge is incident with either one or two nodes.
The relations we consider on hypergraphs  are
those $R \subseteq U \times U$ which satisfy the stability condition $H;R;H \subseteq R$.

Connections between modal logic and mathematical morphology have already been studied by~\cite{BlochJANCL2002} in the set-based case.
While not following exactly the approach in~\cite{BlochJANCL2002}, the general idea is that 
a frame $(U,R)$ in Kripke semantics provides a set $U$ of pixels, subsets of which are black pixels in  particular images,
and an accessibility relation $R$ generated by a structuring element.
The modalities $\dia$, $\bdia$ and $\Box$ function semantically as
operators taking subsets to subsets, with $\dia$  
associated to $X \mapsto X \dilate \breve{R}$ (where $\breve{R}$ is the ordinary converse of $R$), $\bdia$ associated to $X \mapsto X \dilate R$, and $\Box$ associated to $X \mapsto R \erode X$. A propositional variable $p$ can be understood as a set of pixels; 
the opening and closing of this image are then $\bdia \Box p$ and $\Box \bdia p$ respectively,
and properties of the morphological operations and statements about specific images and parts of images can be expressed in the logic.

Set-based mathematical morphology is naturally associated to a logic which is classical (since the set of subsets of a set is
a Boolean algebra)  and which is modal (since each structuring element provides an accessibility relation).
The generalization of this to mathematical morphology on hypergraphs results in a logic which is bi-intuitionistic (since the set
of subgraphs of a hypergraph forms a bi-Heyting algebra) and which is modal too. 
These considerations led to the development of a bi-intititionistic tense logic, $\mathbf{BiSKt}$, in~\cite{Stell2016}.

The study of intuitionistic modal and tense logic has been done in the several literature, e.g., in~\cite{Esakia2006,Ewald1986,Ono1977,Wolter1999,Sotirov1980,Hasimoto2001}. The bi-intuitionistic tense logic studied in~\cite{Stell2016} can be regarded as both a tense expansion of bi-intuitionistic logic and an expansion of intuitionistic modal logic studied in~\cite{Ono1977,Wolter1999} with the `past-tense' operator and the coimplication. Hilbert-system of bi-intuitionistic logic was first given by Rauszer~\cite{Rauszer1974a}. When we pay attention to the intuitionistic modal fragment of bi-intuitionistic stable tense logic, our Kripke semantics coincides with Kripke semantics for intuitionistic modal logic given in~\cite{Wolter1999}. Finite model property of intuitionistic modal logic was studied in~\cite{Ono1977,Sotirov1980,Hasimoto2001}. As far as the authors know, any general results on strong completeness and finite model property have not been yet known for bi-intuitionistic tense logic. This paper gives the first step toward this direction (see also the discussion in Section \ref{sec:Related}). 

This paper is organized as follows. Section \ref{sec:semantics} introduces the notion of stable relations on preorders and then moves to our syntax and Kripke semantics of bi-intuitionistic stable tense logic. Section \ref{sec:axiomatisation} provides Hilbert system for the smallest bi-intuitionistic stable tense logic and shows that it is sound for Kripke semantics. We note that our axiomatisation for bi-intuitionistic fragment is much simpler than Rauszer's axiomatisation~\cite{Rauszer1974a}. Section \ref{sec:completeness} establishes the strong completeness of the bi-intuitionistic stable tense logic for Kripke semantics and Section \ref{sec:extension} extends this argument to several extensions of $\mathbf{BiSKt}$. Finally, Section \ref{sec:fmp} follows Hasimoto's technique for intuitionistic modal logic to show the finite model property of some extensions of $\mathbf{BiSKt}$, which implies the decidability of the extensions, provided the logics are finitely axiomatizable. Section \ref{sec:Related} comments on the related literature and future work of this paper.

\section{Kripke Semantics for Bi-intuitionistic Stable Tense Logic}
\label{sec:semantics}
\subsection{Stable Relations on Preorders}

\begin{definition}
\label{defn-stable}
Let $H$ be a preorder on a set $U$. We say that $X \subseteq U$ is an {\em $H$-set} if $X$ is closed under $H$-successors, i.e., $uHv$ and $u \in X$ jointly imply $v \in X$ for all elements $u$, $v \in U$. 
Given a preorder $(U,H)$, a binary relation $R \subseteq U \times U$ is {\em stable} if it satisfies $H;R;H \subseteq R$.
\end{definition}

\noindent It is easy to see that a relation $R$ on $U$ is stable, if and only if, $R; H \subseteq R$ and $H;R \subseteq R$. Given any binary relation $R$ on $U$,  $\breve{R}$ is defined as the ordinary converse of $R$. Even if $R$ is a stable relation on $U$, its converse $\breve{R}$ may be not stable. 

\begin{definition}[\cite{Stell2016}]
The {\em left converse} $\leftconv R$ of a stable relation  $R$ is $H;{\breve{R}};H$. 
\end{definition}

It is easy to verify the stability of the left converse $\leftconv R$. 

\begin{example}
\label{ex:stable}
\begin{enumerate}
\item A preorder $(U_{1},H_{1})$ is defined as follows: $U_{1}$ = $\setof{0,1,2,3}$ and $H_{1}$ = $\setof{(0,1), (2,3)} \cup \inset{(u,u)}{u \in U_{1}}$. Take a stable relation $R_{1}$ on $(U_{1},H_{1})$ defined by $R_{1}$ = $\setof{(0,3),(1,3)}$ (see the first graph below, where the double solid lines are for $R_{1}$ and the single solid lines are for $H_{1}$ but the reflexive $H_{1}$-arrows are disregarded). Then $\breve{R_{1}}$ = $\setof{(3,0), (3,1)}$ is not stable, but $\leftconv R_{1}$ = $\setof{(3,0),(3,1),(2,0),(2,1)}$ is stable (see the dotted double lines in the second graph below). 
\item A preorder $(U_{2},H_{2})$ is defined as follows: $U_{2}$ = $\setof{a,b,c}$ and $H_{2}$ = $\setof{(a,b)} \cup \inset{(u,u)}{u \in U_{2}}$. Take a stable relation $R_{2}$ on $(U_{2},H_{2})$ defined by $R_{2}$ = $\setof{(a,c),(b,c)}$ (see the third graph below where we follow the same convension as in item (1)). Then $\breve{R_{2}}$ = $\setof{(c,a), (c,b)}$ is already stable and so $\breve{R_{2}}$ = $\leftconv R_{2}$ (see the dotted double lines in the fourth graph below). 
\[
\xymatrix{
*++[o][F-]{1} \ar@{=>}[r]^{R_{1}} & *++[o][F-]{3} \\
*++[o][F-]{0} \ar[u]^{H_{1}} \ar@{=>}[ur]^{R_{1}}  & *++[o][F-]{2} \ar[u]^{H_{1}} \\
}\qquad
\xymatrix{
*++[o][F-]{1}  & *++[o][F-]{3} \ar@{:>}[l] \ar@{:>}[dl] \\
*++[o][F-]{0} \ar[u]^{H_{1}}  & *++[o][F-]{2} \ar[u]^{H_{1}} \ar@{:>}[l] \ar@{:>}[ul] \\
}
\qquad
\xymatrix{
*++[o][F-]{b} \ar@{=>}[r]^{R_{2}} & *++[o][F-]{c} \\
*++[o][F-]{a} \ar[u]^{H_{2}} \ar@{=>}[ur]^{R_{2}}  & \\
}
\qquad
\xymatrix{
*++[o][F-]{b}  & *++[o][F-]{c} \ar@{:>}[l] \ar@{:>}[dl] \\
*++[o][F-]{a} \ar[u]^{H_{2}}  &  \\
}
\]
\end{enumerate}
\end{example}

\subsection{Syntax and Kripke Semantics for Bi-intuitionistic Stable Tense Logic}

Let $\mathsf{Prop}$ be a countable set of propositional variables. Our syntax $\mathcal{L}(\mathsf{Mod})$ for bi-intuitionistic stable tense logic consists of all logical connectives of bi-intuitionistic logic, i.e., two constant symbols  $\bot$ and $\top$, disjunction $\lor$, conjunction $\land$, implication $\to$, coimplication $\coimp$, and a finite set $\mathsf{Mod} \subseteq \setof{\bdia,\Box, \dia, \bbox}$ of modal operators. The set of all formulas in $\mathcal{L}(\mathsf{Mod})$ is defined in a standard way. For example, when $\mathsf{Mod}$ = $\setof{\bdia,\Box, \dia, \bbox}$, the set $\mathsf{Form}_{\mathcal{L}(\mathsf{Mod})}$ of all formulas of the syntax $\mathcal{L}(\mathsf{Mod})$ is defined inductively as follows:
\[
\varphi ::= \top \,|\,  \bot \,|\, p \,|\, \varphi \land \varphi  \,|\,\varphi \lor \varphi \,|\, \varphi \to \varphi \,|\, \varphi \coimp \varphi \,|\,  \bdia \varphi \,|\, \Box \varphi \,|\,  \dia \varphi \,|\, \bbox \varphi \quad (p \in \mathsf{Prop}).
\]
In this section, we mostly assume that $\mathsf{Mod}$ = $\setof{\bdia,\Box, \dia, \bbox}$. 
We define the following abbreviations: 
\[
\neg \varphi := \varphi \to \bot, \quad \coneg \varphi :=  \top \coimp \varphi, \quad \varphi \leftrightarrow \psi := (\varphi \to \psi) \land (\psi \to \varphi).
\]
\begin{definition}
\label{defn-semantics}
$F$ = $(U,H,R)$ is an {\em $H$-frame} if $U$ is a nonempty set, $H$ is a preorder on $U$, and $R$ is a {\em stable} binary relation on $U$. A {\em valuation} on an $H$-frame $F$ = $(U,H,R)$ is a mapping $V$ from $\mathsf{Prop}$ to the set of all $H$-sets on $U$. $M$ = $(F,V)$ is an {\em $H$-model} if $F$ = $(U,H,R)$ is an $H$-frame and $V$ is a valuation. 
Given an $H$-model $M$ = $(U,H,R,V)$, a state $u \in U$ and a formula $\varphi$, the satisfaction relation $M,u \models \varphi$ is defined inductively as follows:
\[
\begin{array}{lll}
M,u \models p &\iff& u \in V(p), \\
M,u \models \top, && \\
M,u \not\models \bot, & & \\
M,u \models  \varphi \lor \psi &\iff& M,u \models  \varphi \text{ or } M,u \models \psi,  \\
M,u \models  \varphi \land \psi &\iff& M,u \models  \varphi \text{ and } M,u \models \psi,  \\
M,u \models  \varphi \to \psi &\iff& \text{ For all $v \in U$ } ((uHv \text{ and } M,v \models \varphi) \text{ imply } M,v \models \psi),\\
M,u \models \varphi \coimp \psi &\iff& \text{ For some $v \in U$ }(vHu \text{ and } M,v\models \varphi \text{ and }M,v \not\models \psi), \\
M,u \models \bdia \varphi &\iff& \text{ For some $v \in U$ } (vRu \text{ and } M,v \models \varphi),\\
M,u \models \Box \varphi &\iff& \text{ For all $v \in U$ } (uRv \text{ implies } M,v \models \varphi),\\
M,u \models \dia \varphi &\iff&  \text{ For some $v \in U$ } ((v,u) \in \leftconv R \text{ and } M,v \models \varphi), \\
M,u \models \bbox \varphi &\iff& \text{ For all $v \in U$ } ((u,v) \in \leftconv R \text{ implies } M,v \models \varphi).\\
\end{array}
\]
The {\em truth set} $\den{\varphi}_{M}$ of a formula $\varphi$ in an $H$-model $M$ is defined by $\den{\varphi}_{M}$ := $\inset{u \in U}{M,u \models \varphi}$. If the underlying model $M$ in $\den{\varphi}_{M}$ is clear from the context, we drop the subscript and simply write $\den{\varphi}$. We write $M \models \varphi$ (read: `$\varphi$ is valid in $M$') to means that $\den{\varphi}_{M}$ = $U$ or $M,u\models \varphi$ for all states $u \in U$. For a set $\Gamma$ of formulas, $M \models \Gamma$ means that $M \models \gamma$ for all $\gamma \in \Gamma$. 
Given any $H$-frame $F$ = $(U,H,R)$, we say that a formula $\varphi$ is {\em valid} in $F$ (written: $F \models \varphi$) if $(F,V)  \models \varphi$ for any valuation $V$ and any state $u \in U$, i.e., $\den{\varphi}_{(F,V)}$ = $U$. 
\end{definition}

As for the abbreviated symbols, we may derive the following satisfaction conditions:
\[
\begin{array}{lll}
M,u \models \neg \varphi &\iff& \text{ For all $v \in U$ } (uHv \text{ implies } M,v \not\models \varphi),\\
M,u \models \coneg \varphi &\iff& \text{ For some $v \in U$ }(vHu \text{ and } M,v \not\models \varphi),\\
M,u \models \varphi \leftrightarrow \psi  &\iff& \text{ For all $v \in U$ } (uHv \text{ implies } (M,v \models \varphi   \iff M,v \models \psi) ). \\
\end{array}
\]

\begin{proposition}[\cite{Stell2016}]
Given any $H$-model $M$, the truth set $\den{\varphi}_{M}$ is an $H$-set.
\end{proposition}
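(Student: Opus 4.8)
The plan is to prove, by induction on the structure of $\varphi$, the \emph{persistence} property that $M,u \models \varphi$ and $uHv$ together imply $M,v \models \varphi$. This is exactly the assertion that $\den{\varphi}_{M}$ is closed under $H$-successors, i.e.\ that it is an $H$-set. The base cases are immediate: $\den{\top}_{M} = U$ and $\den{\bot}_{M} = \emptyset$ are trivially $H$-sets, while $\den{p}_{M} = V(p)$ is an $H$-set because $V$ sends every propositional variable to an $H$-set by the definition of a valuation. For $\varphi \lor \psi$ and $\varphi \land \psi$ I would simply use the elementary fact that unions and intersections of $H$-sets are again $H$-sets, which follows at once from the induction hypothesis applied to the immediate subformulas.

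The two bi-intuitionistic connectives are handled using transitivity of the preorder $H$. For $\varphi \to \psi$, assume $M,u \models \varphi \to \psi$ and $uHv$; given any $w$ with $vHw$ and $M,w \models \varphi$, transitivity yields $uHw$, so the assumption delivers $M,w \models \psi$, whence $M,v \models \varphi \to \psi$. For $\varphi \coimp \psi$, assume $M,u \models \varphi \coimp \psi$ and $uHv$; the witness $w$ with $wHu$, $M,w \models \varphi$ and $M,w \not\models \psi$ already satisfies $wHv$ by transitivity, so the same $w$ witnesses $M,v \models \varphi \coimp \psi$. Note that in these two cases only transitivity of $H$ is used, not the induction hypothesis.

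The four modal cases are where the stability conditions do the real work, and I expect this to be the crux of the argument. For $\bdia$ and $\Box$ I would invoke the two inclusions $R;H \subseteq R$ and $H;R \subseteq R$ that are equivalent to stability of $R$ (as noted immediately after Definition~\ref{defn-stable}); for $\dia$ and $\bbox$ I would invoke the corresponding inclusions for $\leftconv R$, relying on the already-observed fact that $\leftconv R$ is itself stable. Concretely, for $\Box$: if $M,u \models \Box\varphi$, $uHv$ and $vRw$, then $H;R \subseteq R$ gives $uRw$, so $M,w \models \varphi$ and hence $M,v \models \Box\varphi$. For $\bdia$: a witness $w$ with $wRu$ and $M,w \models \varphi$ satisfies $wRv$ by $R;H \subseteq R$ when $uHv$, so it also witnesses $M,v \models \bdia\varphi$. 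The cases for $\bbox$ and $\dia$ are verbatim the same after replacing $R$ by $\leftconv R$.

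The one point requiring care is matching each connective to the correct half of the stability condition: the box-like operators $\Box$ and $\bbox$ need the preorder composed on the \emph{left} of the accessibility relation ($H;R \subseteq R$, respectively $H;\leftconv R \subseteq \leftconv R$), whereas the diamond-like operators $\bdia$ and $\dia$ need it composed on the \emph{right} ($R;H \subseteq R$, respectively $\leftconv R;H \subseteq \leftconv R$). Keeping this direction straight is the only place where a slip is possible, but no genuine difficulty arises once $\leftconv R$ is known to be stable, since then all four inclusions are available and the four modal cases are completely symmetric.
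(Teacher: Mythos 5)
Your proof is correct and follows essentially the same route as the paper: induction on $\varphi$, with the modal cases resolved by the stability inclusions, and your assignment of $R;H \subseteq R$, $H;R \subseteq R$, $\leftconv R;H \subseteq \leftconv R$ and $H;\leftconv R \subseteq \leftconv R$ to $\bdia$, $\Box$, $\dia$ and $\bbox$ respectively matches the paper's exactly. The paper merely sketches these cases, whereas you have spelled out the details (including the non-modal cases), so nothing is missing.
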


\begin{proof}
By induction on $\varphi$. When $\varphi$ is of the form $\bdia \psi$, $\Box \psi$, $\dia \psi$ or $\bbox \psi$, we need to use $R;H \subseteq R$, $H;R \subseteq R$, $\leftconv R;H \subseteq \leftconv R$, $H;\leftconv R \subseteq \leftconv R$, respectively. Note that these properties hold since $R$ and $\leftconv R$ are stable.
\end{proof}

\begin{definition}
Given a set $\Gamma \cup \setof{\varphi}$ of formulas, $\varphi$ is a {\em semantic consequence} of $\Gamma$ (notation: $\Gamma \models \varphi$) if, whenever $M,u \models \gamma$ for all $\gamma \in \Gamma$, $M,u \models \varphi$ holds, for all $H$-models $M$ = $(U,H,R,V)$ and all states $u \in U$. When $\Gamma$ is a singleton $\setof{\psi}$ of formulas, we simply write $\psi \models \varphi$ instead of $\setof{\psi} \models \varphi$. When both $\varphi \models \psi$ and $\psi \models \varphi$ hold, we use $\varphi \eqv \psi$ to mean that they are equivalent with each other. When $\Gamma$ is an emptyset, we also simply write $\models \varphi$ instead of $\emptyset \models \varphi$. 
\end{definition}

The following proposition is easy to verify (cf.~\cite[Lemma 11]{Stell2016}). 

\begin{proposition}[\cite{Stell2016}]
\label{prop:sem-basic}
\begin{enumerate}
\item $\Gamma \models \varphi \to \psi$ iff $\Gamma \cup \setof{\varphi} \models \psi$. 
\item $F \models \varphi \land \psi \to \gamma$ iff 
$F \models \varphi \to (\psi \to \gamma)$. Therefore, $\varphi \land \psi \models \gamma$ iff $\varphi \models \psi \to \gamma$.
\item $F \models (\varphi \coimp \psi) \to \gamma$ iff $F \models \varphi \to (\psi \lor \gamma)$. Therefore, $\varphi \coimp \psi \models \gamma$ iff $\varphi \models \psi \lor \gamma$.
\item $F \models \bdia \varphi \to \psi$ iff $F \models \varphi \to \Box \psi$. Therefore, $\bdia \varphi \models \psi$ iff $\varphi \models \Box \psi$.
\item $F \models \dia \varphi  \to \psi$ iff $F \models\varphi \to \bbox \psi$. Therefore, $\dia \varphi \models \psi$ iff $\varphi \models \bbox \psi$.
\item $\dia \varphi \eqv \coneg \Box \neg \varphi$.
\item $\bbox \varphi \eqv \neg \bdia \coneg \varphi$.
\end{enumerate}
\end{proposition}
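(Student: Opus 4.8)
The plan is to verify all seven items directly from the satisfaction clauses of Definition~\ref{defn-semantics}, using four recurring tools: reflexivity and transitivity of $H$; the just-proved fact that every truth set $\den{\varphi}_{M}$ is an $H$-set; the stability conditions $R;H \subseteq R$, $H;R \subseteq R$ together with the analogues $\leftconv R;H \subseteq \leftconv R$ and $H;\leftconv R \subseteq \leftconv R$; and the defining equation $\leftconv R = H;\breve{R};H$. Items (1)--(5) are residuation (adjunction) statements, whereas (6)--(7) are the De~Morgan-style dualities relating the tense operators. One point worth flagging at the outset is that the residuations in (2)--(5) are genuinely equivalences between \emph{validity} statements (validity in a model or frame), not between pointwise satisfaction at a single state: at an isolated world the two sides can come apart, and each argument reduces a claim at one world to the hypothesis read off at an $H$- or $R$-related world.

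For (1) I would argue semantically. The forward direction uses reflexivity: if $M,u \models \gamma$ for all $\gamma \in \Gamma$ and $M,u \models \varphi$, then $M,u \models \varphi \to \psi$ gives $M,u \models \psi$ by instantiating the universal clause at $uHu$. The backward direction uses the $H$-set property: to check $M,u \models \varphi \to \psi$ I take $uHv$ with $M,v \models \varphi$, push each $\gamma \in \Gamma$ up from $u$ to $v$ along $H$ (legitimate since $\den{\gamma}_{M}$ is an $H$-set), and then apply $\Gamma \cup \setof{\varphi} \models \psi$ at $v$. Items (2) and (3) are the residuations $(-)\land\psi \dashv \psi\to(-)$ and $(-)\coimp\psi \dashv \psi\lor(-)$ of the underlying bi-Heyting structure; each follows by unfolding the clauses for $\to$, $\land$, $\coimp$, $\lor$ and again using reflexivity and transitivity of $H$. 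Items (4) and (5) are the tense adjunctions $\bdia \dashv \Box$ and $\dia \dashv \bbox$. For (4), to prove $F \models \varphi \to \Box\psi$ from $F \models \bdia\varphi \to \psi$ I fix $M=(F,V)$, take $uHv$ with $M,v\models\varphi$ and $vRw$, observe $M,w\models\bdia\varphi$ with witness $v$, and conclude $M,w\models\psi$ by the hypothesis and reflexivity $wHw$; the converse chooses the $R$-predecessor witnessing $\bdia\varphi$ and feeds it to $\varphi \to \Box\psi$. Item (5) is verbatim the same with $R$ replaced by $\leftconv R$. The single-premise corollaries (`Therefore \dots') then follow by specialising each frame equivalence to all frames and invoking (1) with $\Gamma = \emptyset$ and with $\Gamma = \setof{\varphi}$.

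Items (6) and (7) carry the only genuine computation. For (6) I would unfold $\coneg\Box\neg\varphi$ into a three-link chain: $M,u\models\coneg\Box\neg\varphi$ iff there are $v,w,x$ with $vHu$, $vRw$, $wHx$ and $M,x\models\varphi$. Stability in the form $R;H\subseteq R$ collapses $vRw$ and $wHx$ to $vRx$, and then $(x,u)\in H;\breve{R};H = \leftconv R$, with a reflexive $H$-step padding the left end, matches exactly the clause for $\dia\varphi$; the reverse direction recovers the chain using reflexivity of $H$ and the $H$-set property. Item (7) is the dual: one turns the existential chain for $\bdia\coneg$ into a universal one and reads $\neg\bdia\coneg\varphi$ as the $\bbox$ clause via the same stability step.

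I expect the only real obstacle to be the bookkeeping in (6)--(7): one must track the existential (respectively universal) witness-chains through the composite $\leftconv R = H;\breve{R};H$ and apply the collapses via stability and the $H$-set property in the correct direction, taking care that the reflexive $H$-steps inserted as padding land on the intended endpoints. Everything else is routine unfolding of Definition~\ref{defn-semantics}.
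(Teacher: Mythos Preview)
Your proposal is correct and matches the paper's (implicit) approach: the paper does not spell out a proof but simply remarks that the proposition is easy to verify from the semantics (citing~\cite[Lemma 11]{Stell2016}), and your direct verification via the satisfaction clauses, $H$-set persistence, reflexivity/transitivity of $H$, stability, and the defining equation $\leftconv R = H;\breve{R};H$ is exactly such a check. Your bookkeeping in items (6) and (7) is accurate, including the use of $R;H \subseteq R$ to collapse the chain and the $H$-set property of $\den{\varphi}$ in the converse direction.
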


The last two items suggest that $\dia$ and $\bbox$ are definable in the syntax $\mathcal{L}(\bdia,\Box)$. In this sense, we may drop the modal operators $\dia$ and $\bbox$ from the syntax $\mathcal{L}(\bdia,\Box, \dia, \bbox)$. Conversely, we may also ask the question if $\bdia$ and $\Box$ are definable in $\mathcal{L}(\bbox)$ and $\mathcal{L}(\dia)$, respectively. We give negative answers to these two questions. For this purpose, we introduce the appropriate notion of {bounded morphism} for the syntax $\mathcal{L}(\dia,\bbox)$. 

\begin{definition}
Let $M_{i}$ := $(U_{i},H_{i},R_{i},V_{i})$ ($i$ = 1 or 2) be an $H$-model. We say that a mapping $f: U_{1} \to U_{2}$ is a {\em $\mathcal{L}(\dia,\bbox)$-bounded morphism} if it satisfies the following conditions:
\begin{description}
\item[(Atom)] $u \in V_{1}(p)$ $\iff$ $f(u) \in V_{2}(p)$ for all $p \in \mathsf{Prop}$,
\item[($H$-forth)] $uH_{1}v$ implies $f(u)H_{2}f(v)$,
\item[($H$-back)] $f(u)H_{2}v'$ implies that $uH_{1}v$ and $f(v)$ = $v'$ for some $v \in U_{1}$,
\item[($\breve{H}$-back)]  $f(u)\breve{H_{2}}v'$ implies that $u\breve{H_{1}}v$ and $f(v)$ = $v'$ for some $v \in U_{1}$,
\item[($\leftconv R$-forth)] $(u,v) \in \leftconv R_{1}$ implies $(f(u),f(v)) \in \leftconv R_{2}$,
\item[($\leftconv R$-back)] $(f(u),v') \in \leftconv R_{2}$ implies that $(u,v) \in \leftconv R_{1}$ and $f(v)$ = $v'$ for some $v \in U_{1}$,
\item[($\breve{\leftconv R}$-back)]  $(v',f(u)) \in \leftconv R_{2}$ implies that $(v,u) \in \leftconv   R_{1}$ and $f(v)$ = $v'$ for some $v \in U_{1}$.
\end{description}
\end{definition}

By induction on $\varphi$, we can prove the following. 

\begin{proposition}
\label{prop:bbox-dia-bmor}
Let $M_{i}$ := $(U_{i},H_{i},R_{i},V_{i})$ ($i$ = 1 or 2) be an $H$-model and $f:U_{1} \to U_{2}$ an {\em $\mathcal{L}(\dia,\bbox)$-bounded morphism}. For any formula $\varphi$ in $\mathcal{L}(\dia,\bbox)$ and any state $u \in U_{1}$, the following equivalence holds: 
\[
\text{ $M_{1},u \models \varphi$ $\iff$ $M_{2},f(u) \models \varphi$. }
\]
\end{proposition}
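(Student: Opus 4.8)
The plan is to argue by induction on the construction of $\varphi$, establishing both directions of the biconditional simultaneously at each step. The base cases are immediate: for $\varphi = \top$ and $\varphi = \bot$ the claim is trivial, and for $\varphi = p$ it is exactly the \textbf{(Atom)} condition. The propositional cases $\varphi \land \psi$ and $\varphi \lor \psi$ are purely local, since their satisfaction clauses quantify over no further states, so they follow from the induction hypothesis by simply unwinding the definitions.

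The substantive cases are the two intuitionistic connectives and the two modalities, and for each I would treat the $\Rightarrow$ and $\Leftarrow$ directions separately, since they call on different clauses of the bounded-morphism definition. For $\varphi \to \psi$ the forward direction uses \textbf{($H$-back)} to pull an $H_2$-successor $v'$ of $f(u)$ back to an $H_1$-successor $v$ of $u$ with $f(v) = v'$, after which the induction hypothesis transfers satisfaction of $\varphi$ and of $\psi$; the backward direction instead uses \textbf{($H$-forth)} to push an $H_1$-successor of $u$ forward to $U_2$. The case $\varphi \coimp \psi$ is dual and the most delicate, because its clause quantifies over $H$-predecessors (states $v$ with $vH_1u$): one reads $vH_1u$ as $u\breve{H_1}v$ and applies \textbf{($H$-forth)} in the forward direction and \textbf{($\breve{H}$-back)} in the backward direction, being careful not to confuse the predecessor and successor clauses.

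For the modalities, the existential operator $\dia$, whose clause quantifies over $v$ with $(v,u) \in \leftconv R_1$, is handled by \textbf{($\leftconv R$-forth)} forward and by \textbf{($\breve{\leftconv R}$-back)} backward; the universal operator $\bbox$, whose clause quantifies over $v$ with $(u,v) \in \leftconv R_1$, uses \textbf{($\leftconv R$-back)} forward and \textbf{($\leftconv R$-forth)} backward. In every case the shape is identical: a \emph{back} clause supplies a preimage in $U_1$ of the witness or counterexample produced in $U_2$, a \emph{forth} clause ships an $H_1$- or $\leftconv R_1$-edge over to $U_2$, and the induction hypothesis closes the gap. Note that all seven defining conditions get used exactly once in this dispatch.

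I expect no genuine obstacle: the work is entirely bookkeeping driven by the satisfaction conditions of Definition~\ref{defn-semantics}. The single point demanding vigilance is matching each direction of each connective to the correct forth/back clause — in particular, keeping straight that $\to$ and $\bbox$ inspect edges emanating \emph{from} $u$ whereas $\coimp$ and $\dia$ inspect edges entering $u$. This asymmetry is precisely why the converse clauses \textbf{($\breve{H}$-back)} and \textbf{($\breve{\leftconv R}$-back)} are included, and why $\dia$ and $\bbox$ are governed by $\leftconv R$ rather than by $R$ directly.
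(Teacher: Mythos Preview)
Your proposal is correct and follows exactly the approach the paper indicates: the paper's proof of this proposition consists solely of the sentence ``By induction on $\varphi$, we can prove the following,'' and your detailed case analysis is precisely what that induction amounts to. Your matching of each connective and each direction to the appropriate forth/back clause is accurate.
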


\begin{proposition}\label{prop:undefinable}
\begin{enumerate}
\item The modal operator $\bdia$ is not definable in $\mathcal{L}(\bbox)$. 
\item The modal operator $\Box$ is not definable in $\mathcal{L}(\dia)$. 
\end{enumerate}
\end{proposition}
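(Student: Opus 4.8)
The plan is to lean on Proposition~\ref{prop:bbox-dia-bmor}: an $\mathcal{L}(\dia,\bbox)$-bounded morphism preserves the truth value of every $\mathcal{L}(\dia,\bbox)$-formula, and hence of every formula of the two sublanguages $\mathcal{L}(\bbox)$ and $\mathcal{L}(\dia)$. To refute definability it therefore suffices to exhibit two $H$-models, an $\mathcal{L}(\dia,\bbox)$-bounded morphism $f$ between them, and a state at which $\bdia p$ (for~(1)) or $\Box p$ (for~(2)) receives different truth values on the two sides of $f$. Indeed, if $\bdia p$ were equivalent to some $\alpha\in\mathcal{L}(\bbox)$, then on the state $u$ the preservation of $\alpha$ along $f$ would force $M_1,u$ and $M_2,f(u)$ to agree on $\alpha$, hence on $\bdia p$, contradicting the chosen difference; symmetrically for $\Box p$ and $\mathcal{L}(\dia)$.

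The conceptual key is that the seven clauses defining an $\mathcal{L}(\dia,\bbox)$-bounded morphism mention $H$ and $\leftconv R$ but never $R$ itself, whereas $\bdia$ and $\Box$ read $R$ directly. I would therefore seek two \emph{distinct} stable relations $R_{1},R_{2}$ on one and the same $H$-frame with $\leftconv R_{1}=\leftconv R_{2}$, exploiting the fact that the left-converse operation $R\mapsto H;\breve R;H$ is \emph{not} injective on stable relations. For such a pair the identity map $\mathrm{id}:(U,H,R_{1},V)\to(U,H,R_{2},V)$ satisfies every clause trivially --- the $\leftconv R$-clauses precisely because the two left converses coincide --- so it is automatically an $\mathcal{L}(\dia,\bbox)$-bounded morphism. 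Thus $\dia$, $\bbox$ and the bi-intuitionistic connectives cannot separate the two models, while $\bdia$ and $\Box$ can.

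Concretely I would take $U=\setof{0,1,2}$ with $H=\setof{(0,1)}\cup\inset{(u,u)}{u\in U}$, the stable relations $R_{1}=\setof{(0,2),(1,2)}$ and $R_{2}=\setof{(0,2)}$, and the valuation $V(p)=\setof{1}$ (an $H$-set). A direct calculation gives $\leftconv R_{1}=\leftconv R_{2}=\setof{(2,0),(2,1)}$, so $\mathrm{id}$ is a bounded morphism between $M_{1}=(U,H,R_{1},V)$ and $M_{2}=(U,H,R_{2},V)$. At the state $2$ the $R_{1}$-predecessors are $\setof{0,1}$ but the $R_{2}$-predecessors are only $\setof{0}$, so $M_{1},2\models\bdia p$ (witnessed by $1$) while $M_{2},2\not\models\bdia p$; this proves~(1). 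At the state $1$ the $R_{1}$-successors are $\setof{2}$ but the $R_{2}$-successors are empty, so $M_{1},1\not\models\Box p$ (the successor $2$ fails $p$) while $M_{2},1\models\Box p$ vacuously; this proves~(2). A single pair of models and one identity morphism thus settles both parts.

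The only real work is the first step: noticing that the left converse collapses distinct stable relations, and finding a witnessing pair. Once $\leftconv R_{1}=\leftconv R_{2}$ is arranged everything else is mechanical. The point I expect to require care is \emph{stability}: since $H;R;H\subseteq R$ tends to force additional pairs into a relation, the candidate relations must be verified to be genuinely stable (both $R_{1}$ and $R_{2}$ above are), and the two left converses must be recomputed from the definition rather than guessed, so that their coincidence is established beyond doubt.
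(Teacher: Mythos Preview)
Your argument is correct and takes a genuinely different route from the paper. The paper constructs, for each of (1) and (2) separately, two $H$-models on \emph{different} carrier sets (four states mapped to three) and checks by hand that a non-trivial collapsing map is an $\mathcal{L}(\dia,\bbox)$-bounded morphism; it then exhibits a single state whose image disagrees on $\bdia p$ (respectively $\Box p$).

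Your approach is more economical and conceptually sharper: you exploit the non-injectivity of the left-converse operator $R\mapsto H;\breve R;H$ directly, choosing two stable relations $R_1\neq R_2$ on the \emph{same} pre-ordered set with $\leftconv R_1=\leftconv R_2$, so that the identity map is automatically a bounded morphism and no clause needs checking. This yields both parts from one three-point example. What the paper's approach buys is that each counterexample is embedded in the running Example~\ref{ex:stable}, so the reader has already seen the frames; what your approach buys is a cleaner explanation of \emph{why} the undefinability holds --- namely that $\bdia$ and $\Box$ read $R$, while every connective of $\mathcal{L}(\dia,\bbox)$ reads only $H$ and $\leftconv R$, and the passage $R\mapsto\leftconv R$ loses information.
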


\begin{proof}
\noindent (1) Suppose for contradiction that $\bdia p$ is definable by a formula $\varphi$ in $\mathcal{L}(\bbox)$. 
Let us define an $H$-model $M_{1}$ = $(U_{1},H_{1},R_{1},V_{1})$ as a pair of an $H$-frame in Example \ref{ex:stable} (1) and a valuation $V_{1}$ defined by $V_{1}(p)$ = $\setof{1,2,3}$, which is an $H$-set. We observe that $M_{1},2 \not\models \bdia p$, i.e., $M_{1},2 \not\models \varphi$ since there is no $R_{1}$-predecessor from $2$.  Let us define an $H$-model $M_{2}$ = $(U_{2},H_{2},R_{2},V_{2})$ as a pair of an $H$-frame in Example \ref{ex:stable} (2) and a valuation $V_{2}$ defined by $V_{2}(p)$ = $\setof{b,c}$, which is an $H$-set. Then we have $M_{2},c \models \bdia p$ hence $M_{2},c \models \varphi$. Consider the mapping $f:U_{1} \to U_{2}$ defined by $f(0)$ = $a$, $f(1)$ = $b$ and $f(2)$ = $f(3)$ = $c$. We can verify that $f$ is an $\mathcal{L}(\dia,\bbox)$-bounded morphism. 
Since $\varphi$ is also a formula in $\mathcal{L}(\dia,\bbox)$, $M_{1},2 \not\models \varphi$ implies $M_{2},f(2) \not\models \varphi$ 
by Proposition \ref{prop:bbox-dia-bmor}. 
But this is a contradiction with $M_{2},c \models \varphi$.

\noindent (2) Suppose for contradiction that $\Box p$ is definable by a formula $\varphi$ in $\mathcal{L}(\bbox)$. Define $H$-models $N_{1}$ and $N_{2}$ as follows (see the first and the third graphs below, where all reflexive $H$-arrows are omitted). 
\begin{itemize}
\item $N_{1}$ = $(U_{1},H_{1},R_{1},V_{1})$ is defined by: $U_{1}$ = $\setof{0,1,2,3}$, $H_{1}$ = $\setof{(1,0),(3,2)} \cup \inset{(u,u)}{u \in U_{1}}$, $R_{1}$ = $\setof{(1,2),(1,3)}$ and $V_{1}(p)$ = $\setof{0,1,2}$, which is an $H$-set. We have $N_{1},0 \models \Box p$ hence $N_{1},0 \models \varphi$ since there is no $R_{1}$-successor. Moreover, $\leftconv R_{1}$ = $\setof{(2,1),(3,1),(2,0),(3,0)}$ (see the dotted double lines in the second graph below). 
\item $N_{2}$ = $(U_{2},H_{2},R_{2},V_{2})$ is defined as follows: $U_{2}$ = $\setof{a,b,c}$, $H_{2}$ = $\setof{(c,b)} \cup \inset{(u,u)}{u \in U_{2}}$, $R_{2}$ = $\setof{(a,b),(a,c)}$, and $V_{2}(p)$ = $\setof{a,b}$, which is an $H$-set. We note that $N_{2},a \not\models \Box p$ hence $N_{1},0 \models \varphi$. Moreover, $\leftconv R_{2}$ = $\setof{(b,a),(c,a)}$ (see the dotted double lines in the fourth graph below).
\end{itemize}
\[
\xymatrix{
*++[o][F-]{1} \ar@{=>}[r]^{R_{1}} \ar@{=>}[dr]^{R_{1}} \ar[d]^{H_{1}} & *++[o][F-]{3} \ar[d]^{H_{1}}\\
*++[o][F-]{0}  & *++[o][F-]{2}  \\
}
\quad
\xymatrix{
*++[o][F-]{1}   \ar[d]^{H_{1}} & *++[o][F-]{3} \ar[d]^{H_{1}} \ar@{:>}[l] \ar@{:>}[ld] \\
*++[o][F-]{0}  & *++[o][F-]{2}  \ar@{:>}[l] \ar@{:>}[lu] \\
}
\quad
\xymatrix{
*++[o][F-]{a} \ar@{=>}[dr]^{R_{2}}\ar@{=>}[r]^{R_{2}}  & *++[o][F-]{c} \ar[d]^{H_{2}} \\
& *++[o][F-]{b} \\
}
\quad
\xymatrix{
*++[o][F-]{a} & *++[o][F-]{c}   \ar[d]^{H_{2}} \ar@{:>}[l]  \\
& *++[o][F-]{b} \ar@{:>}[lu] \\
}
\]
Consider the mapping $f:U_{1} \to U_{2}$ defined by $f(0)$ = $f(1)$ = $a$, $f(2)$ = $b$, $f(3)$ = $c$. Then $f$ is an $\mathcal{L}(\dia,\bbox)$-bounded morphism. By the similar argument as in (1), a contradiction follows. 
\end{proof}


\begin{definition}
We say that a set $\Gamma$ of formula {\em defines} a class $\mathbb{F}$ of $H$-frames if for all $H$-frames $F$, $F \in \mathbb{F}$ iff $F \models \varphi$ for all formulas $\varphi \in \Gamma$. When $\Gamma$ is a singleton $\setof{\varphi}$, we simply say that $\varphi$ defines a class $\mathbb{F}$. 
\end{definition}

The following frame definability results are already established in~\cite[Theorem 10]{Stell2016}. 

\begin{proposition}[\cite{Stell2016}]
\label{prop:definable}
Let $F$ = $(U,H,R)$ be an $H$-frame. Let $S_{i} \in \setof{R,\leftconv R}$ for $i$ = $1, \ldots, m$ and for each $i$ let
\[
\mathsf{B}_{i} =
\begin{cases}
\Box &\text{ if $S_{i}$ = $R$ } \\
\bbox &\text{ if $S_{i}$ = $\leftconv R$ } \\
\end{cases}
\text{ and let }
\mathsf{D}_{i} =
\begin{cases}
\bdia &\text{ if $S_{i}$ = $R$ } \\
\dia &\text{ if $S_{i}$ = $\leftconv R$ } \\
\end{cases}
\]
Let $0 \leqslant k \leqslant m$ (where the composition of a sequence of length 0 is understood as $H$). Then the following are equivalent: (1) $S_{1};\cdots;S_{k} \subseteq S_{k+1};\cdots;S_{m}$; (2) $F \models \mathsf{D}_{k} \cdots \mathsf{D}_{1}p \to \mathsf{D}_{m} \cdots \mathsf{D}_{k+1}p$; (3) $F \models \mathsf{B}_{k+1} \cdots \mathsf{B}_{m} \to \mathsf{B}_{1}p \cdots \mathsf{B}_{k}p$. 
\end{proposition}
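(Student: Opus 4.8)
The plan is to treat this as a correspondence (frame-definability) result, taking condition (1) as the hub and proving $(1)\Leftrightarrow(2)$ and $(1)\Leftrightarrow(3)$ separately. The first step is to reduce frame validity of an implication to an inclusion of truth sets: since $H$ is reflexive and every truth set is an $H$-set, one checks that $F\models\alpha\to\beta$ holds iff $\den{\alpha}_{(F,V)}\subseteq\den{\beta}_{(F,V)}$ for every valuation $V$. The second step is to read the modal operators as the morphological operations of the introduction. Writing $S_{i}$ for the relation attached to $\mathsf{D}_{i}$ and $\mathsf{B}_{i}$, the satisfaction clauses give, uniformly in the two cases, $\den{\mathsf{D}_{i}\varphi}=\den{\varphi}\dilate S_{i}$ and $\den{\mathsf{B}_{i}\varphi}=S_{i}\erode\den{\varphi}$.

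Next I would compute the two sides of each implication. Using the composition laws $(X\dilate A)\dilate B=X\dilate(A;B)$ and $A\erode(B\erode X)=(A;B)\erode X$ (both immediate from the definitions), iterating gives, with $T:=S_{1};\cdots;S_{k}$ and $T':=S_{k+1};\cdots;S_{m}$,
\[
\den{\mathsf{D}_{k}\cdots\mathsf{D}_{1}p}=V(p)\dilate T,\quad \den{\mathsf{D}_{m}\cdots\mathsf{D}_{k+1}p}=V(p)\dilate T',
\]
\[
\den{\mathsf{B}_{k+1}\cdots\mathsf{B}_{m}p}=T'\erode V(p),\quad \den{\mathsf{B}_{1}\cdots\mathsf{B}_{k}p}=T\erode V(p).
\]
The length-$0$ conventions are consistent here because $V(p)$ is an $H$-set, so $V(p)\dilate H=V(p)$ and $H\erode V(p)=V(p)$. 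Thus (2) says $V(p)\dilate T\subseteq V(p)\dilate T'$ for all $V$, while (3) says $T'\erode V(p)\subseteq T\erode V(p)$ for all $V$. The directions $(1)\Rightarrow(2)$ and $(1)\Rightarrow(3)$ are then immediate from monotonicity of dilation and antitonicity of erosion in the relation argument.

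For the converse $(2)\Rightarrow(1)$, I would argue pointwise. Given $(x,y)\in T$, choose the principal valuation $V(p):=\inset{z\in U}{xHz}$, the smallest $H$-set containing $x$. Then $x\in V(p)$, and since $(x,y)\in T$ this witnesses $y\in V(p)\dilate T$; by (2) we get $y\in V(p)\dilate T'$, i.e.\ there is $w$ with $xHw$ and $(w,y)\in T'$, so $(x,y)\in H;T'$. The decisive move is to absorb the dangling $H$: stability of $R$ and of $\leftconv R$ gives $H;S_{i}\subseteq S_{i}$ for every $i$, whence $H;T'\subseteq T'$ and therefore $(x,y)\in T'$. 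This yields $T\subseteq T'$, i.e.\ (1).

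The converse $(3)\Rightarrow(1)$ is dual. Given $(x,y)\in T$, take the co-principal valuation $V(p):=U\setminus\inset{z\in U}{zHy}$, which is an $H$-set because its complement $\inset{z\in U}{zHy}$ is closed under $H$-predecessors by transitivity of $H$. Since $y\notin V(p)$ while $(x,y)\in T$, the defining condition for $x\in T\erode V(p)$ fails at $y$, so $x\notin T\erode V(p)$; by (3) this forces $x\notin T'\erode V(p)$, giving $w$ with $(x,w)\in T'$ and $wHy$, hence $(x,y)\in T';H$. Using $S_{i};H\subseteq S_{i}$ we absorb the $H$ on the right to obtain $(x,y)\in T'$. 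The only genuine obstacle in the whole argument is these two converse directions: identifying the correct (co-)principal $H$-set valuations so that a single relational triple is probed, and recognising that the stability condition $H;R;H\subseteq R$ is exactly what licenses absorbing the leftover $H$ that the $H$-set constraint forces upon us.
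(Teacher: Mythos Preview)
Your argument is correct. The paper does not give its own proof of this proposition; it is quoted from \cite{Stell2016} (see the sentence immediately preceding the statement), so there is no in-paper proof to compare against. Your approach---rewriting the modalities as dilations and erosions, reducing validity of the implication to an inclusion of truth sets, and for the converse directions choosing the principal $H$-set $\{z: xHz\}$ (respectively the co-principal $H$-set $U\setminus\{z: zHy\}$) and then absorbing the residual $H$ via stability---is the standard and expected correspondence argument, and all the edge cases (empty compositions read as $H$, with $V(p)\dilate H=V(p)=H\erode V(p)$ for $H$-sets) are handled correctly.
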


Table \ref{table:def} demonstrates the content of Proposition \ref{prop:definable}. 

\begin{table}
\begin{center}
\begin{tabular}{@{}lllll}
\hline
\raisebox{-1.25ex}{\rule{0ex}{4ex}}Condition & Inclusion &
\begin{tabular}{@{}c}Diamond\\Form\end{tabular} &
\begin{tabular}{@{}c}Box\\Form\end{tabular}  &
\begin{tabular}{@{}c}Mixed\\Form\end{tabular} \\
\hline &&&&\\

reflexive             & $H \subseteq R$                        & $ p \to
\bD p$         & $\wB p \to p$         &                         \\[1ex]

\begin{tabular}{@{}l}
converse\\ reflexive
\end{tabular}   & $H \subseteq \leftconv R$                   & $p \to \wD p
$        & $  \bB p \to p$       &                       \\[2ex]

pathetic             & $R \subseteq H$                        & $\bD p
\to p$         & $p \to \wB p$         &
\\[1ex]

\begin{tabular}{@{}l}
converse\\ pathetic
\end{tabular}       & $\leftconv R \subseteq H$                   & $\wD p
\to p$         & $p \to \bB p$         &
\\[2ex]

functional           & $\leftconv R ; R \subseteq H$           & $\bD \wD p
\to p $    & $p \to \bB \wB p$     & $\wD p \to \wB p$   \\[1ex]

injective            & $R ; \leftconv R \subseteq H$           & $\wD \bD p
\to p $    & $p \to \wB \bB p$     & $\bD p \to \bB p$   \\[1ex]

surjective           & $H \subseteq \leftconv R ; R$           & $p \to
\bD \wD p $    & $ \bB \wB p  \to p$   & $\bB p \to \bD p$   \\[1ex]

total                & $H \subseteq R ; \leftconv R$           & $p \to
\wD \bD p $    & $ \wB \bB p  \to p$   & $\wB p \to \wD p$   \\[1ex]

\begin{tabular}{@{}l}
weakly \\ symmetric
\end{tabular}       & $R \subseteq \leftconv R$                   & $ \bD p
\to \wD p $   & $\bB p \to \wB p$     & $p \to \wB \wD p$     \\[2ex]

\begin{tabular}{@{}l}
strongly \\ symmetric
\end{tabular}       & $\leftconv R \subseteq R$                   & $ \wD p
\to \bD p $   & $\wB p \to \bB p$     & $\wD \wB p \to p$     \\[2ex]

transitive      & $R ; R \subseteq R$                & $\bD \bD p
\to \bD p$ & $\wB p \to \wB \wB p$ &          \\[1ex]

\begin{tabular}{@{}l}
converse\\ transitive
\end{tabular} & $\leftconv R ; \leftconv R \subseteq \leftconv R$ & $\wD \wD p \to
\wD p$ & $\bB p \to \bB \bB p$ &          \\[2ex]

dense            & $R  \subseteq R ; R $                & $\bD p \to
\bD \bD p$ & $\wB \wB p \to \wB p$ &          \\[1ex]

\begin{tabular}{@{}l}
converse\\ dense
\end{tabular} & $\leftconv R \subseteq \leftconv R ; \leftconv R$ & $\wD p \to \wD
\wD p$ & $\bB \bB p \to \bB p$ &          \\[2ex]

Euclidean         & $\leftconv R ; R \subseteq R$           & $\bD \wD p
\to \bD p$ & $\wB p \to \bB \wB p$ & $\wD \wB p \to \wB p$ \\[1ex]

\begin{tabular}{@{}l}
weak\\ Euclidean
\end{tabular} & $\leftconv R ; R \subseteq \leftconv R$      & $\bD \wD p \to
\wD p$ & $\bB p \to \bB \wB p$ & $\wD p \to \wB \wD p$ \\[2ex]

\begin{tabular}{@{}l}
converse\\ Euclidean
\end{tabular} & $R ; \leftconv R \subseteq R$ & $\wD \bD p \to \bD p$ &
$\wB p \to \wB \bB p$ &   \\[2ex]

\begin{tabular}{@{}l}
weak converse\\ Euclidean
\end{tabular} & $ R ; \leftconv R \subseteq \leftconv R$ & $\wD \bD p \to \wD
p$ & $\bB p \to \wB \bB p$ &  \\[2ex]

confluent            & $\leftconv R ; R \subseteq R ; \leftconv R$
                                                           & $\bD \wD p
\to \wD \bD p$

          & $\wB \bB p \to \bB \wB p$

                                     & $\wD \wB p \to \wB \wD p$\\[1ex]

divergent         & $R ; \leftconv R \subseteq \leftconv R ; R $
                                                           & $\wD \bD p
\to \bD \wD p$

          & $\bB \wB p \to \wB \bB p$

                                     & $\bD \bB p \to \bB \bD p$\\[1ex]

\hline
\end{tabular}
\end{center}
\caption{\label{CorrTable}Modal correspondents arising from inclusions from~\cite{Stell2016}}
\label{table:def}
\end{table}

\section{Hilbert System of Bi-intuitionistic Stable Tense Logic}
\label{sec:axiomatisation}

In view of the last two items of Proposition \ref{prop:sem-basic}, we employ $\mathcal{L}(\bdia,\Box)$ as our syntax in what follows in this paper and we simply write $\mathcal{L}$ to mean $\mathcal{L}(\bdia,\Box)$ if no confusion arises. 

\begin{definition}
We say that a set $\Lambda$ of formulas is a {\em bi-intuitionistic stable tense logic} (for short, $bist$-logic) if $\Lambda$ contains all the axioms of Table \ref{table:hil-biskt} and closed under all the rules of Table \ref{table:hil-biskt}. Given a $bist$-logic $\Lambda$ and a set $\Gamma \cup \setof{\varphi}$ of formulas, we say that $\varphi$ is {\em $\Lambda$-provable} from $\Gamma$ (notation: $\Gamma \vdash_{\Lambda} \varphi$) if there is a finite set $\Gamma' \subseteq \Gamma$ such that $\bigwedge \Gamma' \to \varphi \in \Lambda$, where $\bigwedge \Delta$ is the conjunction of all elements of $\Delta$ and $\bigwedge \Delta$ := $\top$ when $\Delta$ is an emptyset. Moreover, when $\Gamma$ = $\emptyset$, we simply write $\vdash_{\Lambda} \varphi$ instead of $\emptyset \vdash_{\Lambda} \varphi$, which is equivalent to $\varphi \in \Lambda$. 
We define $\mathbf{BiSKt}$ as the smallest $bist$-logic $\bigcap \inset{\Lambda}{\text{$\Lambda$ is a $bist$-logic}}$. Given a set $\Sigma$ of formulas, the smallest $bist$-logic $\mathbf{BiSKt}\Sigma$ containing $\Sigma$ is defined by: $\mathbf{BiSKt}\Sigma := \bigcap \inset{\Lambda}{\text{$\Lambda$ is a $bist$-logic and $\Sigma \subseteq \Lambda$}}$.
\end{definition}

\noindent Table \ref{table:hil-biskt} provides Hilbert-style axiomatisation of $\mathbf{BiSKt}$. In what follows in this paper, we assume that the reader is familiar with theorems and derived inference rules in intuitionistic logic. 

\begin{table}[htbp]
\caption{Hilbert-style axiomatisation of $\mathsf{H}\mathbf{BiSKt}$ 
}
\label{table:hil-biskt}

\begin{center}
\begin{tabular}{|llll|}
\hline
\multicolumn{4}{|l|}{Axioms and Rules for Intuitionistic Logic}\\
\hline
$(\texttt{A0})$ & \multicolumn{3}{l|}{ $p \to (q \to p)$ } \\
$(\texttt{A1})$ & \multicolumn{3}{l|}{ $(p \to (q \to r)) \to ((p \to q) \to (p \to r))$  } \\
$(\texttt{A2})$ & $p \to (p \lor q)$ & $(\texttt{A3})$ & $q \to (p \lor q)$  \\
$(\texttt{A4})$ & $(p \to r) \to ((q \to r)\to(p \lor q \to r))$ & $(\texttt{A5})$ & $(p \land q) \to p$  \\
$(\texttt{A6})$ & $(p \land q )\to q$  & $(\texttt{A7})$ & $(p \to (q \to p \land q))$  \\
$(\texttt{A8})$ & $\bot \to p$ & $(\texttt{A9})$ & $p \to \top$  \\
$(\texttt{MP})$ & \multicolumn{3}{l|}{ From $\varphi$ and $\varphi \to \psi$, infer $\psi$ } \\
$(\texttt{US})$ & \multicolumn{3}{l|}{ From $\varphi$, infer a substitution instance $\varphi'$ of $\varphi$ }  \\
\hline
\multicolumn{4}{|l|}{Additional Axioms and Rules for Bi-intuitionistic Logic}\\
\hline
$(\texttt{A10})$ & $p \to (q \lor (p \coimp q))$ & $(\texttt{A11})$ & $((q \lor r) \coimp q) \to r$ \\
$(\texttt{Mon}\coimp)$ & \multicolumn{3}{l|}{ From $\delta_{1} \to \delta_{2}$, infer $(\delta_{1} \coimp \psi) \to (\delta_{2} \coimp \psi)$} \\
\hline
\multicolumn{4}{|l|}{Additional Axioms and Rules for Tense Operators}\\
\hline
$(\texttt{A12})$ & $p \to \Box \bdia p$ & $(\texttt{A13})$ & $\bdia \Box p \to p$ \\
$(\texttt{Mon}\Box)$ & From $\varphi \to \psi$, infer $\Box \varphi \to \Box \psi$ & $(\texttt{Mon}\bdia)$ & From $\varphi \to \psi$, infer $\bdia \varphi \to \bdia \psi$ \\
\hline
\end{tabular}
\end{center}
\end{table}

\begin{proposition}
\label{prop:prf-basic}
Let $\Lambda$ be a $bist$-logic. 
\begin{multienumerate}
\mitemxx{$\vdash_{\Lambda} (\varphi \coimp \psi) \to \gamma$ $\iff$ $\vdash_{\Lambda} \varphi \to (\psi \lor \gamma)$.}{$\vdash_{\Lambda} \bdia \varphi \to \psi$ $\iff$ $\vdash_{\Lambda} \varphi \to \Box \psi$. }
\mitemxx{$\vdash_{\Lambda} (\Box \varphi \land \Box \psi) \leftrightarrow \Box (\varphi \land \psi)$}{$\vdash_{\Lambda} \top \leftrightarrow \Box \top$.}
\mitemxx{$\vdash_{\Lambda} (\bdia \varphi \lor \bdia \psi) \leftrightarrow \bdia (\varphi \lor \psi)$}{ $\vdash_{\Lambda} \bot \leftrightarrow \bdia \bot$}
\end{multienumerate}
\end{proposition}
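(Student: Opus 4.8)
The plan is to treat all six items as purely syntactic facts derivable in the Hilbert system of Table~\ref{table:hil-biskt}, using the standard derived intuitionistic principles (transitivity of $\to$, disjunction elimination via $(\texttt{A4})$, and the $\land$- and $\lor$-introduction laws) without further comment. The organising observation is that items (1) and (2) are residuation laws---$\coimp$ is a left residual of $\lor$ and $\bdia$ is a left adjoint of $\Box$---while items (3)--(6) merely record that $\Box$ preserves finite meets (the nullary case being (4)) and that $\bdia$ preserves finite joins (the nullary case being (6)). Since a left adjoint preserves joins and a right adjoint preserves meets, once (2) is in hand the remaining items should follow with little extra work.

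First I would prove (1). For the forward direction, instantiate $(\texttt{A10})$ as $\varphi \to (\psi \lor (\varphi \coimp \psi))$ and compose it with the disjunction-monotonicity consequence of the hypothesis $(\varphi \coimp \psi) \to \gamma$ to obtain $\varphi \to (\psi \lor \gamma)$. For the backward direction, apply $(\texttt{Mon}\coimp)$ to the hypothesis $\varphi \to (\psi \lor \gamma)$ to get $(\varphi \coimp \psi) \to ((\psi \lor \gamma) \coimp \psi)$ and compose with the instance $((\psi \lor \gamma) \coimp \psi) \to \gamma$ of $(\texttt{A11})$. Item (2) is dual in spirit: for the forward direction compose $(\texttt{A12})$, namely $\varphi \to \Box\bdia\varphi$, with the image $\Box\bdia\varphi \to \Box\psi$ of the hypothesis under $(\texttt{Mon}\Box)$; for the backward direction compose the image $\bdia\varphi \to \bdia\Box\psi$ of the hypothesis under $(\texttt{Mon}\bdia)$ with the instance $\bdia\Box\psi \to \psi$ of $(\texttt{A13})$.

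With (2) established, I would dispatch the preservation laws. The \emph{easy} inclusions use only monotonicity: $\Box(\varphi \land \psi) \to (\Box\varphi \land \Box\psi)$ comes from $(\texttt{A5})$, $(\texttt{A6})$ and $(\texttt{Mon}\Box)$ together with $\land$-introduction, and dually $(\bdia\varphi \lor \bdia\psi) \to \bdia(\varphi \lor \psi)$ comes from $(\texttt{A2})$, $(\texttt{A3})$, $(\texttt{Mon}\bdia)$ and disjunction elimination. For the remaining inclusions I would transpose across the adjunction (2). The inclusion $(\Box\varphi \land \Box\psi) \to \Box(\varphi \land \psi)$ transposes to $\bdia(\Box\varphi \land \Box\psi) \to (\varphi \land \psi)$, which follows from $(\texttt{A5})$/$(\texttt{A6})$, $(\texttt{Mon}\bdia)$ and $(\texttt{A13})$; the inclusion $\bdia(\varphi \lor \psi) \to (\bdia\varphi \lor \bdia\psi)$ transposes to $(\varphi \lor \psi) \to \Box(\bdia\varphi \lor \bdia\psi)$, which follows from $(\texttt{A12})$, $(\texttt{A2})$/$(\texttt{A3})$, $(\texttt{Mon}\Box)$ and disjunction elimination. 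Finally, (4) and (6) are the nullary instances: $\Box\top \to \top$ and $\bot \to \bdia\bot$ are instances of $(\texttt{A9})$ and $(\texttt{A8})$, while the converse halves $\top \to \Box\top$ and $\bdia\bot \to \bot$ transpose under (2) to $\bdia\top \to \top$ (an instance of $(\texttt{A9})$) and $\bot \to \Box\bot$ (an instance of $(\texttt{A8})$).

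I expect the only genuinely delicate point to be the two \emph{hard} inclusions $(\Box\varphi \land \Box\psi) \to \Box(\varphi \land \psi)$ and $\bdia(\varphi \lor \psi) \to (\bdia\varphi \lor \bdia\psi)$. In a classical normal modal logic these would invoke the $\mathsf{K}$ axiom, which is not among our primitive schemata; the point is that they are not needed as axioms but fall out of the adjunction (2) after transposition, as above. Thus the whole proposition reduces to getting the two residuation laws (1) and (2) exactly right, after which every preservation fact is the syntactic shadow of the slogan that adjoints preserve the corresponding (co)limits.
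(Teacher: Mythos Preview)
Your proposal is correct and follows essentially the same route as the paper: establish the two residuation laws (1) and (2) using $(\texttt{A10})$, $(\texttt{A11})$, $(\texttt{Mon}\coimp)$ and $(\texttt{A12})$, $(\texttt{A13})$, $(\texttt{Mon}\Box)$, $(\texttt{Mon}\bdia)$ respectively, and then appeal to the adjunction $\bdia \dashv \Box$ for (3)--(6). The only difference is one of presentation: the paper stops after (1) and (2) and dismisses (3)--(6) in a single sentence (``left adjoints preserve colimits and right adjoints preserve limits''), whereas you actually spell out the transpositions, which is arguably more informative but not a different method.
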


\begin{proof}
Here we prove (1) and (2) alone, since items (3) to (6) are consequence from the adjoint or the residuation  `$\bdia \dashv \Box$' i.e., left adjoints ($\bdia$) preserve colimits (finite disjunctions) and right adjoints ($\Box$) preserve limits (finite conjunctions). 
\begin{enumerate}
\item $(\Rightarrow)$ Assume that $\vdash_{\Lambda} (\varphi \coimp \psi) \to \gamma$. We obtain $\vdash_{\Lambda} (\psi \lor (\varphi \coimp \psi)) \to (\psi \lor \gamma)$. By axiom $(\texttt{A10})$, we get $\vdash_{\Lambda} \varphi \to (\psi \lor \gamma)$.  ($\Leftarrow$) Assume that $\vdash_{\Lambda} \varphi \to (\psi \lor \gamma)$. By rule $(\texttt{Mon}\coimp)$, $\vdash_{\Lambda} (\varphi \coimp \psi) \to ((\psi \lor \gamma) \coimp \psi)$. It follows from $(\texttt{A10})$ that $\vdash_{\Lambda} (\varphi \coimp \psi) \to \gamma$, as desired. 
\item $(\Rightarrow)$ Assume that $\vdash_{\Lambda} \bdia \varphi \to \psi$. By rule $(\texttt{Mon}\Box)$, $\vdash_{\Lambda} \Box  \bdia \varphi \to \Box \psi$. It follows from $(\texttt{A12})$ that $\vdash_{\Lambda} \varphi \to \Box \psi$. ($\Leftarrow$) Suppose that $\vdash_{\Lambda} \varphi \to \Box \psi$. By rule $(\texttt{Mon}\bdia)$, $\vdash_{\Lambda} \bdia \varphi \to \bdia \Box \psi$. By axiom $(\texttt{A13})$, we obtain $\vdash_{\Lambda} \bdia \varphi \to \psi$. 
\end{enumerate}
\end{proof}

\begin{theorem}[Soundness]
\label{thm:sound}
Given any formula $\varphi$, $\vdash_{\mathbf{BiSKt}} \varphi$ implies $\models \varphi$. 
\end{theorem}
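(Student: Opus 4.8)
The plan is to argue by induction on the length of a $\mathbf{BiSKt}$-derivation of $\varphi$, proving the stronger statement that every axiom in Table~\ref{table:hil-biskt} is valid in every $H$-model and that each rule preserves validity over the class of all $H$-models (whence $\vdash_{\mathbf{BiSKt}}\varphi$ gives $\models\varphi$). Throughout I would lean on the proposition that each truth set $\den{\psi}_{M}$ is an $H$-set (heredity): this is exactly what makes the intuitionistic clauses behave correctly and, crucially, what guarantees that the valuations arising from substitution remain valuations into $H$-sets.

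For the base case I would split the axioms into three groups. The intuitionistic schemata $(\texttt{A0})$--$(\texttt{A9})$ are validated exactly as in the standard Kripke soundness proof for intuitionistic logic, using that $H$ is a preorder together with heredity; the only clauses needing care are those with nested implications such as $(\texttt{A1})$ and $(\texttt{A4})$, where one unwinds the universal $H$-successor quantifier in the semantics of $\to$ and appeals to transitivity of $H$. The coimplication axioms $(\texttt{A10})$ and $(\texttt{A11})$ I would check directly against the existential clause for $\coimp$ over $H$-predecessors, or more cheaply by invoking Proposition~\ref{prop:sem-basic}(3). The tense axioms $(\texttt{A12})$ and $(\texttt{A13})$ follow immediately from the adjunction $\bdia\dashv\Box$ recorded in Proposition~\ref{prop:sem-basic}(4): instantiating the valid $\bdia p\to\bdia p$ and $\Box p\to\Box p$ respectively yields the two schemata. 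Alternatively they admit a one-line direct verification from the satisfaction clauses for $\Box$ and $\bdia$ along $R$, using reflexivity of $H$ only for the outermost implication.

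For the inductive step I would verify that each rule preserves validity. Modus ponens $(\texttt{MP})$ uses reflexivity of $H$: from $M,u\models\varphi$ and $M,u\models\varphi\to\psi$ the successor $v=u$ gives $M,u\models\psi$, so validity of $\varphi$ and of $\varphi\to\psi$ yields validity of $\psi$. Uniform substitution $(\texttt{US})$ needs a routine substitution lemma identifying $M,u\models\varphi^{\sigma}$ with $M^{\sigma},u\models\varphi$, where $M^{\sigma}$ replaces each $V(p)$ by $\den{\sigma(p)}_{M}$; the key point is that each such $\den{\sigma(p)}_{M}$ is again an $H$-set by the heredity proposition, so $M^{\sigma}$ is a legitimate $H$-model and validity transfers. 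Finally the three monotonicity rules $(\texttt{Mon}\coimp)$, $(\texttt{Mon}\Box)$ and $(\texttt{Mon}\bdia)$ follow from monotonicity of the respective connectives in the relevant argument, read off directly from their satisfaction clauses.

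The bulk of this is routine, and the care is concentrated in the connectives that quantify over the preorder, namely $\to$ and $\coimp$: here one must track nested quantifiers over $H$-successors and $H$-predecessors and repeatedly use that $H$ is a preorder and that truth sets are upward closed. I expect the substitution lemma underlying $(\texttt{US})$ to be the one place demanding genuine (if standard) bookkeeping, precisely because its correctness hinges on substituted formulas denoting $H$-sets rather than arbitrary subsets. The modal and coimplication axioms, by contrast, reduce to the already-established adjunction and residuation facts of Proposition~\ref{prop:sem-basic}, so stability of $R$ enters only implicitly, through the proposition guaranteeing that $\den{\psi}_{M}$ is an $H$-set.
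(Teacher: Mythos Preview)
Your proposal is correct and takes essentially the same route as the paper: both reduce the non-intuitionistic axioms $(\texttt{A10})$--$(\texttt{A13})$ to the residuation/adjunction equivalences of Proposition~\ref{prop:sem-basic} by instantiating trivial implications, and both treat the intuitionistic fragment as standard. The only minor divergence is in the monotonicity rules: you verify them directly from the satisfaction clauses, whereas the paper again routes them through the adjunctions (e.g.\ for $(\texttt{Mon}\Box)$ it passes to $\models \bdia\Box\varphi\to\psi$ via Proposition~\ref{prop:sem-basic}(4) and then invokes the already-established validity of $(\texttt{A13})$); your treatment of $(\texttt{US})$ via a substitution lemma and heredity is more explicit than the paper's, which leaves it implicit.
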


\begin{proof}
We demonstrate the soundness of axioms and rules other than those of intuitionistic logic. 
By Proposition \ref{prop:sem-basic}, we have the following equivalences:
\begin{align}
\tag{$\star$}\label{eq:ad-coimp-dis}
\models  (\varphi \coimp \psi) \to \gamma &\iff\, \models \varphi  \to (\psi \lor \gamma)\\
\tag{$\dagger$}\label{eq:ad-bdia-box}
\models \bdia \varphi \to \psi &\iff\, \models \varphi \to \Box \psi 
\end{align}
\begin{itemize}
\item[($\texttt{A10}$)] By $\models (p \coimp q) \to (p \coimp q)$ and (\ref{eq:ad-coimp-dis}), $\models p \to q\lor (p \coimp q)$ holds, as desired. 
\item[($\texttt{A11}$)] By $\models (q \lor r) \to (q \lor r)$ and (\ref{eq:ad-coimp-dis}), we obtain $\models ((q \lor r) \coimp q) \to r$. 
\item[($\texttt{Mon}\coimp$)] Assume $\models \delta_{1} \to \delta_{2}$. To establish $\models (\delta_{1} \coimp \psi) \to (\delta_{2} \coimp \psi)$, it suffices to show $\models \delta_{1}\to (\psi \lor (\delta_{2} \coimp \psi))$. From our assumption and the validity of $(\texttt{A10})$, our goal follows.
\item[($\texttt{A13}$)] By $\models \bdia p \to \bdia p$ and (\ref{eq:ad-bdia-box}), we obtain $\models p \to \Box \bdia p$. 
\item[($\texttt{A14}$)] By $\models \Box p \to \Box p$ and (\ref{eq:ad-bdia-box}), we obtain $\models \bdia \Box p \to p$. 
\item[($\texttt{Mon}\Box$)] Suppose $\models \varphi \to \psi$. To demonstrate $\models \Box \varphi \to \Box \psi$, it suffices to show $\models \bdia \Box \varphi \to \psi$ by (\ref{eq:ad-bdia-box}). By $\models \bdia \Box \varphi \to \varphi$ (due to the validity of $(\texttt{A14})$) and the supposition $\models \varphi \to \psi$), we get our goal. 
\item[($\texttt{Mon}\bdia$)] Suppose $\models \varphi \to \psi$. To conclude $\models \bdia \varphi \to \bdia \psi$, it suffices to show $\models \varphi \to \Box \bdia \psi$ by (\ref{eq:ad-bdia-box}). By the supposition $\models \varphi \to \psi$ and $\models \psi \to \Box \bdia \psi$ (due to the validity of $(\texttt{A13})$), we get our goal. 
\end{itemize}
\end{proof}

\section{Kripke Completeness of Bi-intuitionistic Stable Tense Logic}
\label{sec:completeness}

Given a finite set $\Delta$ of formulas, $\bigvee \Delta$ is defined as the disjunction of all formulas in $\Delta$, where $\bigvee \emptyset$ is understood as $\bot$. 

\begin{definition}
Let $\Lambda$ be a $bist$-logic. A pair $(\Gamma,\Delta)$ of formulas is {\em $\Lambda$-provable} if $\Gamma \vdash_{\Lambda} \bigvee \Delta'$ for some finite $\Delta' \subseteq \Delta$. We say that a pair $(\Gamma,\Delta)$ of formulas is {\em $\Lambda$-unprovable} if it is not $\Lambda$-provable. A pair $(\Gamma,\Delta)$ is {\em complete} if $\Gamma \cup \Delta$ = $\mathsf{Form}_{\mathcal{L}}$, i.e., $\varphi \in \Gamma$ or $\varphi \in \Delta$ for all formulas $\varphi$. 
\end{definition}

We remark that a pair $(\Gamma,\Delta)$ is $\Lambda$-unprovable iff $\not\vdash_{\Lambda} \bigwedge \Gamma' \to \bigvee \Delta'$ for all finite $\Gamma'\subseteq \Gamma$ and all finite $\Delta' \subseteq \Delta$. The following lemma holds because $\Lambda$ `contains' intuitionistic logic. 

\begin{lem}
\label{lem:dist-lat}
Let $\Lambda$ be a $bist$-logic and $(\Gamma,\Delta)$ a complete and $\Lambda$-unprovable pair. 
Then, 
\begin{multienumerate}
\mitemxx{$($$\Gamma \vdash_{\Lambda} \varphi$ implies $\varphi \in \Gamma$$)$ for all formulas $\varphi$}{$\Lambda \subseteq \Gamma$}
\mitemxx{If $\varphi \in \Gamma$ and $\varphi \to \psi \in \Gamma$ then $\psi \in \Gamma$}{$\bot \notin \Gamma$}
\mitemxx{$\top \in \Gamma$}{$\varphi \land \psi \in \Gamma$ iff $\varphi \in \Gamma$ and $\psi \in \Gamma$}
\mitemx{$\varphi \lor \psi \in \Gamma$ iff $\varphi \in \Gamma$ or $\psi \in \Gamma$}
\end{multienumerate}
\end{lem}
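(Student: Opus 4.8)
The plan is to derive everything from item~(1), the deductive-closure property, together with the elementary intuitionistic theorems recorded in Table~\ref{table:hil-biskt}; indeed the lemma says precisely that the pair $(\Gamma,\Delta)$ makes $\Gamma$ behave like a \emph{prime theory}. The single recurring device is to combine \emph{completeness} (every formula lies in $\Gamma$ or in $\Delta$) with \emph{$\Lambda$-unprovability} in the form stated in the remark preceding the lemma, namely $\not\vdash_{\Lambda} \bigwedge\Gamma' \to \bigvee\Delta'$ for all finite $\Gamma' \subseteq \Gamma$ and $\Delta' \subseteq \Delta$.

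First I would prove~(1). Assume $\Gamma \vdash_{\Lambda} \varphi$ but $\varphi \notin \Gamma$. By completeness $\varphi \in \Delta$, so $\{\varphi\}$ is a finite subset of $\Delta$ with $\bigvee\{\varphi\} = \varphi$; thus $\Gamma \vdash_{\Lambda} \bigvee\{\varphi\}$, witnessing that $(\Gamma,\Delta)$ is $\Lambda$-provable and contradicting unprovability. Hence $\varphi \in \Gamma$, i.e.\ $\Gamma$ is closed under $\Lambda$-derivability.

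Items~(2), (3), (5), (6) then reduce to~(1) by exhibiting in each case a finite $\Gamma' \subseteq \Gamma$ with $\bigwedge\Gamma' \to (\text{target}) \in \Lambda$. For~(2), if $\varphi \in \Lambda$ then $\top \to \varphi \in \Lambda$ intuitionistically, so $\Gamma \vdash_{\Lambda} \varphi$ with $\Gamma' = \emptyset$ and $\varphi \in \Gamma$ by~(1). For~(3) one uses the intuitionistic theorem $(\varphi \land (\varphi \to \psi)) \to \psi$; for~(5), $\vdash_{\Lambda} \top$ gives $\top \in \Gamma$. For~(6) the forward direction uses $(\texttt{A5})$ and $(\texttt{A6})$ to get $\Gamma \vdash_{\Lambda} \varphi$ and $\Gamma \vdash_{\Lambda} \psi$, while the backward direction uses that $\bigwedge\{\varphi,\psi\} \to (\varphi \land \psi)$ is provable. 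Item~(4) is slightly different: taking $\Delta' = \emptyset$ in the remark gives $\Gamma \not\vdash_{\Lambda} \bot$ directly (since $\bigvee\emptyset = \bot$), so $\bot \in \Gamma$ — which would yield $\Gamma \vdash_{\Lambda} \bot$ via $\bot \to \bot \in \Lambda$ — is impossible.

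The one genuinely new argument, and the step I expect to be the main obstacle, is the forward direction of~(7), the \emph{primeness} of $\Gamma$. Its backward direction is immediate from $(\texttt{A2})$, $(\texttt{A3})$ and~(1). For the forward direction, suppose $\varphi \lor \psi \in \Gamma$ yet $\varphi \notin \Gamma$ and $\psi \notin \Gamma$; by completeness both $\varphi,\psi \in \Delta$, so $\{\varphi,\psi\}$ is a finite subset of $\Delta$ with $\bigvee\{\varphi,\psi\} = \varphi \lor \psi$, and $\Gamma \vdash_{\Lambda} \varphi \lor \psi$ then makes $(\Gamma,\Delta)$ $\Lambda$-provable — a contradiction. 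This is where the two-sided structure of the pair, not merely deductive closure of $\Gamma$, is essential: it is exactly the place where a genuinely multi-element disjunction over $\Delta$ is needed, so it cannot be absorbed into~(1).
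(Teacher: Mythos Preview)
Your proof is correct in every detail. The paper does not actually give a proof of this lemma: it only prefaces the statement with the remark that it ``holds because $\Lambda$ `contains' intuitionistic logic'' and then moves on, so your argument supplies exactly the routine verification that the authors chose to omit.
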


\begin{lem}
\label{lem:extension}
Let $\Lambda$ be a $bist$-logic. 
Given a $\Lambda$-unprovable pair $(\Gamma,\Delta)$, there exists a complete and $\Lambda$-unprovable pair $(\Gamma^{+},\Delta^{+})$ such that $\Gamma \subseteq \Gamma^{+}$ and $\Delta \subseteq \Delta^{+}$. 
\end{lem}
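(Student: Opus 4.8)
The plan is to run a Lindenbaum-style construction, enumerating the formulas and deciding, one at a time, on which side of the pair each formula should be placed, while preserving $\Lambda$-unprovability at every finite stage. Since $\mathsf{Prop}$ is countable, so is $\mathsf{Form}_{\mathcal{L}}$; fix an enumeration $\varphi_{0}, \varphi_{1}, \ldots$ of all formulas. I set $(\Gamma_{0}, \Delta_{0}) := (\Gamma, \Delta)$ and, given a $\Lambda$-unprovable pair $(\Gamma_{n}, \Delta_{n})$, I define
\[
(\Gamma_{n+1}, \Delta_{n+1}) :=
\begin{cases}
(\Gamma_{n} \cup \setof{\varphi_{n}}, \Delta_{n}) & \text{if $(\Gamma_{n} \cup \setof{\varphi_{n}}, \Delta_{n})$ is $\Lambda$-unprovable,}\\
(\Gamma_{n}, \Delta_{n} \cup \setof{\varphi_{n}}) & \text{otherwise.}
\end{cases}
\]
Finally I put $\Gamma^{+} := \bigcup_{n} \Gamma_{n}$ and $\Delta^{+} := \bigcup_{n} \Delta_{n}$. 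By construction each $\varphi_{n}$ lands in $\Gamma_{n+1}$ or $\Delta_{n+1}$, so $\Gamma^{+} \cup \Delta^{+} = \mathsf{Form}_{\mathcal{L}}$ and the resulting pair is complete, and clearly $\Gamma \subseteq \Gamma^{+}$ and $\Delta \subseteq \Delta^{+}$. The sequences $(\Gamma_{n})$ and $(\Delta_{n})$ are nested increasing by inspection.

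The crux is the \emph{splitting step}: if $(\Gamma_{n}, \Delta_{n})$ is $\Lambda$-unprovable then at least one of the two candidate pairs above is again $\Lambda$-unprovable. Granting this, the invariant propagates: in the first branch it is immediate, and in the second branch $(\Gamma_{n} \cup \setof{\varphi_{n}}, \Delta_{n})$ is $\Lambda$-provable, so by splitting $(\Gamma_{n}, \Delta_{n} \cup \setof{\varphi_{n}})$ must be $\Lambda$-unprovable. I would prove the splitting step by contraposition: suppose both $(\Gamma_{n} \cup \setof{\varphi_{n}}, \Delta_{n})$ and $(\Gamma_{n}, \Delta_{n} \cup \setof{\varphi_{n}})$ are $\Lambda$-provable. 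Unfolding the definition, there are finite $\Gamma_{1}, \Gamma_{2} \subseteq \Gamma_{n}$ and finite $\Delta_{1}, \Delta_{2} \subseteq \Delta_{n}$ with $\vdash_{\Lambda} (\bigwedge \Gamma_{1} \land \varphi_{n}) \to \bigvee \Delta_{1}$ and $\vdash_{\Lambda} \bigwedge \Gamma_{2} \to (\bigvee \Delta_{2} \lor \varphi_{n})$. Writing $A := \bigwedge(\Gamma_{1} \cup \Gamma_{2})$ and $B := \bigvee(\Delta_{1} \cup \Delta_{2})$ and using intuitionistic weakening, I obtain both $\vdash_{\Lambda} (A \land \varphi_{n}) \to B$ and $\vdash_{\Lambda} A \to (B \lor \varphi_{n})$. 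The elementary intuitionistic fact that these two entail $\vdash_{\Lambda} A \to B$ --- argue under the hypothesis $A$, obtain $B \lor \varphi_{n}$, and dispose of the disjuncts by disjunction elimination, applying $(A \land \varphi_{n}) \to B$ on the right disjunct --- then shows $(\Gamma_{n}, \Delta_{n})$ is $\Lambda$-provable, a contradiction. This is exactly where I rely on $\Lambda$ containing intuitionistic logic (as recorded in Lemma \ref{lem:dist-lat}) rather than merely on classical reasoning.

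It remains to verify that the \emph{limit} pair $(\Gamma^{+}, \Delta^{+})$ is $\Lambda$-unprovable, and here the finitary character of $\vdash_{\Lambda}$ does the work. Suppose it were provable; then $\vdash_{\Lambda} \bigwedge \Gamma' \to \bigvee \Delta'$ for some finite $\Gamma' \subseteq \Gamma^{+}$ and finite $\Delta' \subseteq \Delta^{+}$. Because $(\Gamma_{n})$ and $(\Delta_{n})$ are increasing and $\Gamma', \Delta'$ are finite, there is a single index $n$ with $\Gamma' \subseteq \Gamma_{n}$ and $\Delta' \subseteq \Delta_{n}$, whence $(\Gamma_{n}, \Delta_{n})$ would be $\Lambda$-provable, contradicting the invariant. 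I expect the splitting step to be the main obstacle, since everything else is bookkeeping; but it reduces cleanly to the single intuitionistic tautology above, so no genuinely new idea beyond the distributive-lattice behaviour of Lemma \ref{lem:dist-lat} is required.
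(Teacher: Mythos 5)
Your proof is correct and is exactly the standard argument this lemma rests on: the paper states the lemma without giving a proof (treating it as a routine Lindenbaum-style extension), and your construction --- enumerate the countably many formulas, add each one to the left component when $\Lambda$-unprovability is preserved and otherwise to the right, then pass to the union using the finitary character of $\vdash_{\Lambda}$ --- is the intended one. The crux you isolate is also handled correctly: the splitting step reduces to the tautology $(A \to (B \lor C)) \to (((A \land C) \to B) \to (A \to B))$, which is derivable by disjunction elimination without excluded middle, so the argument is available in any $bist$-logic $\Lambda$.
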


\begin{definition}
Let $\Lambda$ be a $bist$-logic. The $\Lambda$-canonical $H$-model $M^{\Lambda}$ = $(U^{\Lambda},H^{\Lambda},R^{\Lambda},V^{\Lambda})$ is defined as follows. 
\begin{itemize}
\item $U^{\Lambda}$ := $\inset{(\Gamma,\Delta)}{\text{ $(\Gamma,\Delta)$ is a complete and $\Lambda$-unprovable pair}}$.
\item $(\Gamma_{1},\Delta_{1})H^{\Lambda}(\Gamma_{2},\Delta_{2})$ iff $\Gamma_{1} \subseteq \Gamma_{2}$. 
\item $(\Gamma_{1},\Delta_{1})R^{\Lambda}(\Gamma_{2},\Delta_{2})$ iff ($\Box \varphi \in \Gamma_{1}$ implies $\varphi \in \Gamma_{2}$)  for all formulas  $\varphi$.
\item $(\Gamma,\Delta)\in V^{\Lambda}(p)$ iff $p \in \Gamma$. 
\end{itemize}
\end{definition}

\noindent It is clear that $H^{\Lambda}$ is not a pre-order but also a partial order. Moreover, $(\Gamma_{1},\Delta_{1})H^{\Lambda}(\Gamma_{2},\Delta_{2})$ implies $\Delta_{2} \subseteq \Delta_{1}$ by completeness. 

\begin{lem}
\label{lem:access}
Let $(\Gamma_{i},\Delta_{i}) \in U^{\Lambda}$ $($$i$ = 1 or 2$)$. The following are all equivalent: 
\begin{enumerate}
\item $(\Gamma_{1},\Delta_{1})R^{\Lambda}(\Gamma_{2},\Delta_{2})$,
\label{item:r1}
\item $($$\varphi \in \Delta_{2}$ implies $\Box \varphi \in \Delta_{1}$$)$ for all formulas  $\varphi$,
\label{item:r2}
 \item  $($$\varphi \in \Gamma_{1}$ implies $\bdia \varphi \in \Gamma_{2}$$)$ for all formulas  $\varphi$,
\label{item:r3}
\item $($$\bdia \varphi \in \Delta_{2}$  implies $\varphi \in \Delta_{1}$$)$ for all formulas $\varphi$.  
\label{item:r4}
\end{enumerate}
\end{lem}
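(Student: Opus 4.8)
The plan is to first record the single structural fact that does almost all the work: a complete and $\Lambda$-unprovable pair $(\Gamma,\Delta)$ partitions the formulas. Completeness gives $\Gamma \cup \Delta = \mathsf{Form}_{\mathcal{L}}$, while $\Lambda$-unprovability forces $\Gamma \cap \Delta = \emptyset$, since any $\varphi \in \Gamma \cap \Delta$ would make $(\Gamma,\Delta)$ $\Lambda$-provable via the witness $\Delta' = \setof{\varphi}$ (here $\Gamma \vdash_{\Lambda} \varphi$ holds because $\varphi \in \Gamma$, using Lemma \ref{lem:dist-lat}). Hence for each pair in $U^{\Lambda}$ I obtain the dualisation principle ``$\psi \in \Delta$ iff $\psi \notin \Gamma$'', which lets me translate every $\Delta$-membership claim into a complementary $\Gamma$-membership claim.

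With this in hand, the equivalences (1) $\Leftrightarrow$ (2) and (3) $\Leftrightarrow$ (4) are pure contraposition. For instance, rewriting the $\Delta$-memberships in (2) as non-memberships in the corresponding $\Gamma$ turns it into ``$\varphi \notin \Gamma_{2}$ implies $\Box\varphi \notin \Gamma_{1}$'', whose contrapositive is exactly (1); the same move reduces (4) to (3). So it remains only to bridge the box side and the diamond side, i.e.\ to prove (1) $\Leftrightarrow$ (3).

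For (1) $\Rightarrow$ (3) I would argue as follows. Suppose $\varphi \in \Gamma_{1}$. Axiom $(\texttt{A12})$ gives $\vdash_{\Lambda} \varphi \to \Box\bdia\varphi$ (by $(\texttt{US})$), so $\Gamma_{1} \vdash_{\Lambda} \Box\bdia\varphi$ and hence $\Box\bdia\varphi \in \Gamma_{1}$ by closure under provability (Lemma \ref{lem:dist-lat}). Applying hypothesis (1) to the formula $\bdia\varphi$ then yields $\bdia\varphi \in \Gamma_{2}$, which is (3). Conversely, for (3) $\Rightarrow$ (1), suppose $\Box\varphi \in \Gamma_{1}$. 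Instantiating (3) at the formula $\Box\varphi$ gives $\bdia\Box\varphi \in \Gamma_{2}$, and axiom $(\texttt{A13})$, namely $\vdash_{\Lambda} \bdia\Box\varphi \to \varphi$, together with closure under provability forces $\varphi \in \Gamma_{2}$, which is (1). Chaining these four implications, all four conditions are equivalent.

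I expect the only real content to lie in the $(1) \Leftrightarrow (3)$ step, where the two tense axioms $(\texttt{A12})$ and $(\texttt{A13})$ act precisely as the unit and counit mediating between $\bdia$ and $\Box$ (the residuation ``$\bdia \dashv \Box$'' of Proposition \ref{prop:prf-basic}); everything else is bookkeeping with the partition property. The one point to handle carefully is to keep the quantifier ``for all formulas $\varphi$'' straight when instantiating a hypothesis at a compound formula such as $\bdia\varphi$ or $\Box\varphi$, since that instantiation is exactly what couples the $\Box$-defined relation $R^{\Lambda}$ to its $\bdia$-characterisation.
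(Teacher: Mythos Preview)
Your proof is correct. The paper states Lemma~\ref{lem:access} without proof, so there is no explicit argument to compare against; your approach---first observing that completeness together with $\Lambda$-unprovability makes $(\Gamma,\Delta)$ a genuine partition of $\mathsf{Form}_{\mathcal{L}}$, then reducing (1)$\Leftrightarrow$(2) and (3)$\Leftrightarrow$(4) to contraposition, and finally bridging (1) and (3) via the unit/counit axioms $(\texttt{A12})$ and $(\texttt{A13})$---is exactly the standard route and is plainly what the authors intend the reader to supply. One tiny remark: the implication ``$\varphi\in\Gamma$ implies $\Gamma\vdash_{\Lambda}\varphi$'' used in your partition argument is immediate from the definition of $\vdash_{\Lambda}$ (take $\Gamma'=\{\varphi\}$) and does not actually require Lemma~\ref{lem:dist-lat}; that lemma is needed only in the opposite direction, where you correctly invoke it for closure under provability in the (1)$\Rightarrow$(3) and (3)$\Rightarrow$(1) steps.
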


\begin{lem}
\label{lem:stable}
$R^{\Lambda}$ is stable in the $\Lambda$-canonical $H$-model $M^{\Lambda}$. 
\end{lem}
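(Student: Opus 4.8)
The plan is to use the characterisation of stability recorded immediately after Definition \ref{defn-stable}: a relation $R$ on a preorder $(U,H)$ is stable if and only if $R;H \subseteq R$ and $H;R \subseteq R$. Applied to the canonical model, it therefore suffices to establish the two inclusions $R^{\Lambda};H^{\Lambda} \subseteq R^{\Lambda}$ and $H^{\Lambda};R^{\Lambda} \subseteq R^{\Lambda}$ separately. Each reduces to a short membership chase once I unfold the definitions of $H^{\Lambda}$ (which is just inclusion of the left components, $\Gamma_{1} \subseteq \Gamma_{2}$) and of $R^{\Lambda}$, exploiting the several equivalent formulations of $R^{\Lambda}$ furnished by Lemma \ref{lem:access}.

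For $H^{\Lambda};R^{\Lambda} \subseteq R^{\Lambda}$, I would suppose $(\Gamma_{1},\Delta_{1})H^{\Lambda}(\Gamma_{2},\Delta_{2})R^{\Lambda}(\Gamma_{3},\Delta_{3})$, so that $\Gamma_{1} \subseteq \Gamma_{2}$ and $\Box\varphi \in \Gamma_{2}$ implies $\varphi \in \Gamma_{3}$, and verify $(\Gamma_{1},\Delta_{1})R^{\Lambda}(\Gamma_{3},\Delta_{3})$ directly from the defining clause: for any $\varphi$ with $\Box\varphi \in \Gamma_{1}$ we have $\Box\varphi \in \Gamma_{2}$ by inclusion, hence $\varphi \in \Gamma_{3}$. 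For $R^{\Lambda};H^{\Lambda} \subseteq R^{\Lambda}$ it is cleanest to invoke the equivalent description of $R^{\Lambda}$ in Lemma \ref{lem:access}(\ref{item:r3}), namely that $(\Gamma_{1},\Delta_{1})R^{\Lambda}(\Gamma_{2},\Delta_{2})$ iff $\varphi \in \Gamma_{1}$ implies $\bdia\varphi \in \Gamma_{2}$. Then from $(\Gamma_{1},\Delta_{1})R^{\Lambda}(\Gamma_{2},\Delta_{2})$ and $\Gamma_{2} \subseteq \Gamma_{3}$ we obtain, for any $\varphi \in \Gamma_{1}$, that $\bdia\varphi \in \Gamma_{2} \subseteq \Gamma_{3}$, which is exactly $(\Gamma_{1},\Delta_{1})R^{\Lambda}(\Gamma_{3},\Delta_{3})$.

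The essential point in both cases is that the condition defining $R^{\Lambda}$ is monotone along $H^{\Lambda}$: enlarging the source component only strengthens the available antecedents $\Box\varphi$, while enlarging the target component only makes the required conclusions (whether phrased as $\varphi \in \Gamma$ or as $\bdia\varphi \in \Gamma$) easier to satisfy. Consequently I expect no genuine obstacle here—the lemma falls out immediately from the definitions together with the fact that $H^{\Lambda}$ is encoded by $\Gamma_{1} \subseteq \Gamma_{2}$. The only matter requiring a little care is choosing, for each of the two inclusions, whichever of the four equivalent formulations of $R^{\Lambda}$ in Lemma \ref{lem:access} turns the verification into a one-step inclusion chase; beyond guaranteeing that the triples involved actually lie in $U^{\Lambda}$, the completeness and $\Lambda$-unprovability of the pairs are never needed in the argument.
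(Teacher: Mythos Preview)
Your proposal is correct and follows essentially the same approach as the paper: split stability into the two inclusions $H^{\Lambda};R^{\Lambda}\subseteq R^{\Lambda}$ and $R^{\Lambda};H^{\Lambda}\subseteq R^{\Lambda}$, handle the first via the defining $\Box$-clause of $R^{\Lambda}$, and handle the second via an equivalent description from Lemma~\ref{lem:access}. The only difference is that for $R^{\Lambda};H^{\Lambda}\subseteq R^{\Lambda}$ you use clause~(\ref{item:r3}) on the $\Gamma$-side (so $\Gamma_{2}\subseteq\Gamma_{3}$ is used directly), whereas the paper uses clause~(\ref{item:r4}) on the $\Delta$-side (so it first invokes $\Delta_{3}\subseteq\Delta_{2}$, which needs completeness of the pairs); your variant thus makes good on your closing remark that completeness is not required.
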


\begin{proof}
It suffices to show show that (i) $H^{\Lambda};R^{\Lambda} \subseteq R^{\Lambda}$ and (ii) $R^{\Lambda};H^{\Lambda}\subseteq R^{\Lambda}$. For (i),  let us assume that $(\Gamma_{1}, \Delta_{1})H^{\Lambda}(\Gamma_{2}, \Delta_{2})R^{\Lambda} (\Gamma_{3}, \Delta_{3})$. To show $(\Gamma_{1}, \Delta_{1})R^{\Lambda}(\Gamma_{3}, \Delta_{3})$, fix any formula $\varphi$ such that $\Box \varphi \in \Gamma_{1}$. Our goal is to show that $\varphi \in \Gamma_{3}$. By $(\Gamma_{1}, \Delta_{1})H^{\Lambda}(\Gamma_{2}, \Delta_{2})$, we get $\Box \varphi \in \Gamma_{2}$. It follows from $(\Gamma_{2}, \Delta_{2})R^{\Lambda} (\Gamma_{3}, \Delta_{3})$ that $\varphi \in \Gamma_{3}$, as required. For (ii), let us assume that $(\Gamma_{1}, \Delta_{1})R^{\Lambda}(\Gamma_{2}, \Delta_{2})H^{\Lambda} (\Gamma_{3}, \Delta_{3})$. To show $(\Gamma_{1}, \Delta_{1})R^{\Lambda}(\Gamma_{3}, \Delta_{3})$, we use Lemma \ref{lem:access} (4) to fix any formula $\bdia \varphi \in \Delta_{3}$. We show that $\varphi \in \Delta_{1}$. Since $\Delta_{3} \subseteq \Delta_{2}$, we have $\bdia \varphi \in \Delta_{2}$. By $(\Gamma_{1}, \Delta_{1})R^{\Lambda}(\Gamma_{2}, \Delta_{2})$, Lemma \ref{lem:access} (4) enables us to conclude $\varphi \in \Delta_{1}$. 
\end{proof}

\begin{lem}
\label{lem:unprov}
Let $(\Gamma,\Delta)$ be a complete $\Lambda$-unprovable pair. 
\begin{enumerate}
\item If $\psi \to \gamma \notin \Gamma$, then $(\setof{\psi}\cup \Gamma,\setof{\gamma})$ is $\Lambda$-unprovable. 
\item If $\psi \coimp \gamma \in \Gamma$, then $(\setof{\psi},\setof{\gamma}\cup \Delta)$ is $\Lambda$-unprovable. 
\item If $\Box \psi \notin \Gamma$, then $(\inset{\gamma}{\Box \gamma \in \Gamma},\setof{\psi})$ is $\Lambda$-unprovable. 
\item If $\bdia \psi \in \Gamma$, then $(\setof{\psi}, \inset{\delta}{\bdia \delta \in \Delta})$ is $\Lambda$-unprovable. 
\end{enumerate}
\end{lem}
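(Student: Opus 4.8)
The plan is to prove all four parts by contraposition, using the reformulation noted after the definition: $(\Gamma,\Delta)$ is $\Lambda$-unprovable iff $\not\vdash_{\Lambda} \bigwedge\Gamma' \to \bigvee\Delta'$ for all finite $\Gamma'\subseteq\Gamma$ and $\Delta'\subseteq\Delta$. In each case I would assume that the newly formed pair is $\Lambda$-provable, extract a finite witness, and push it through the relevant adjunction or monotonicity law to force a formula into $\Gamma$ that contradicts the standing hypothesis. Before treating the cases I would record two facts used throughout. First, since $(\Gamma,\Delta)$ is $\Lambda$-unprovable we have $\Gamma\cap\Delta=\emptyset$: any common formula $\varphi$ would give $\Gamma\vdash_{\Lambda}\varphi=\bigvee\setof{\varphi}$ with $\setof{\varphi}\subseteq\Delta$, contradicting unprovability. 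Second, $\Gamma$ has the closure properties of Lemma \ref{lem:dist-lat}: deductive closure ($\Gamma\vdash_{\Lambda}\varphi$ implies $\varphi\in\Gamma$), closure under $\land$, and the disjunction property $\varphi\lor\psi\in\Gamma$ iff $\varphi\in\Gamma$ or $\psi\in\Gamma$.

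For items (1) and (3) the right-hand set is a singleton, so provability collapses to a single deduction. In (1), provability of $(\setof{\psi}\cup\Gamma,\setof{\gamma})$ gives a finite $\Gamma''\subseteq\Gamma$ with $\vdash_{\Lambda}(\psi\land\bigwedge\Gamma'')\to\gamma$; the intuitionistic equivalence with $\bigwedge\Gamma''\to(\psi\to\gamma)$ yields $\Gamma\vdash_{\Lambda}\psi\to\gamma$, hence $\psi\to\gamma\in\Gamma$ by deductive closure, contradicting the hypothesis (if $\psi$ is not used, then $\gamma\in\Gamma$ and axiom $(\texttt{A0})$ already forces $\psi\to\gamma\in\Gamma$). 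In (3), provability of $(\inset{\gamma}{\Box\gamma\in\Gamma},\setof{\psi})$ gives a finite witness with $\vdash_{\Lambda}\bigwedge_i\gamma_i\to\psi$; applying $(\texttt{Mon}\Box)$ and the distribution $\vdash_{\Lambda}\Box(\bigwedge_i\gamma_i)\leftrightarrow\bigwedge_i\Box\gamma_i$ of Proposition \ref{prop:prf-basic} (with $\vdash_{\Lambda}\top\leftrightarrow\Box\top$ for the empty witness) gives $\vdash_{\Lambda}\bigwedge_i\Box\gamma_i\to\Box\psi$; since each $\Box\gamma_i\in\Gamma$, closure under $\land$ and deductive closure force $\Box\psi\in\Gamma$, a contradiction.

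Items (2) and (4) are the disjunctive duals, driven by the adjunction laws of Proposition \ref{prop:prf-basic}. In (2), provability of $(\setof{\psi},\setof{\gamma}\cup\Delta)$ gives $\vdash_{\Lambda}\psi\to(\gamma\lor\bigvee\Delta'')$ for some finite $\Delta''\subseteq\Delta$, enlarging the witness so that $\gamma$ appears as a disjunct; Proposition \ref{prop:prf-basic}(1) rewrites this as $\vdash_{\Lambda}(\psi\coimp\gamma)\to\bigvee\Delta''$, and since $\psi\coimp\gamma\in\Gamma$ deductive closure gives $\bigvee\Delta''\in\Gamma$, so by the disjunction property some $\delta\in\Delta''\subseteq\Delta$ lies in $\Gamma$, contradicting $\Gamma\cap\Delta=\emptyset$. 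In (4), from $\vdash_{\Lambda}\psi\to\bigvee_j\delta_j$ with each $\bdia\delta_j\in\Delta$, applying $(\texttt{Mon}\bdia)$ and $\vdash_{\Lambda}\bdia(\bigvee_j\delta_j)\leftrightarrow\bigvee_j\bdia\delta_j$ (with $\vdash_{\Lambda}\bot\leftrightarrow\bdia\bot$ for the empty witness, which instead forces $\bot\in\Gamma$) gives $\vdash_{\Lambda}\bdia\psi\to\bigvee_j\bdia\delta_j$; as $\bdia\psi\in\Gamma$ this forces some $\bdia\delta_j\in\Gamma$, yet $\bdia\delta_j\in\Delta$, again contradicting disjointness.

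The intuitionistic manipulations and the modal distribution laws are all already available, so the routine steps carry no difficulty. The step I expect to need the most care is the coimplication case (2): one must invoke the adjunction $\vdash_{\Lambda}(\varphi\coimp\psi)\to\gamma$ iff $\vdash_{\Lambda}\varphi\to(\psi\lor\gamma)$ in exactly the right direction and split the finite right-hand witness so that the pivot formula $\gamma$ is isolated as a single disjunct, after which the disjunction property together with $\Gamma\cap\Delta=\emptyset$ closes the argument. The empty-witness boundary cases in (3) and (4), handled respectively via $\top\leftrightarrow\Box\top$ and $\bot\leftrightarrow\bdia\bot$, are the other places where it is easy to slip.
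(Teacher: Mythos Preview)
Your proposal is correct and follows essentially the same approach as the paper, which also argues by contraposition and uses the adjunctions of Proposition~\ref{prop:prf-basic} together with Lemma~\ref{lem:dist-lat}; the paper only writes out cases (2) and (4) explicitly, and in (2) it reaches the contradiction more directly by observing that $\bigvee\Delta''\in\Gamma$ with $\Delta''\subseteq\Delta$ already violates $\Lambda$-unprovability of $(\Gamma,\Delta)$, without passing through the disjunction property and $\Gamma\cap\Delta=\emptyset$.
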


\begin{proof}
Here we prove items (2) and (4) alone. For (2), assume that $\psi \coimp \gamma \in \Gamma$. Suppose for contradiction that $(\setof{\psi},\setof{\gamma}\cup \Delta)$ is $\Lambda$-provable. We can find a finite set $\Delta' \subseteq \Delta$ such that $\vdash_{\Lambda} \psi \to (\gamma \lor \bigvee \Delta')$. It follows from Proposition \ref{prop:prf-basic} (1) that $\vdash_{\Lambda} (\psi \coimp \gamma) \to \bigvee \Delta'$. Since $\psi \coimp \gamma \in \Gamma$, $\bigvee \Delta' \in \Gamma$ holds by Lemma \ref{lem:dist-lat}. This contradicts the $\Lambda$-unprovability of $(\Gamma,\Delta)$. For (4), suppose that $\bdia \psi \in \Gamma$. Assume for contradiction that $(\setof{\psi}, \inset{\delta}{\bdia \delta \in \Delta})$ is $\Lambda$-provable. 
There exists some formulas $\delta_{1}$, \ldots, $\delta_{n}$ such that each $\bdia \delta_{i} \in \Delta$ and $\vdash_{\Lambda} \psi \to \bigvee_{1 \leqslant i \leqslant n} \delta_{i}$. By rule $(\texttt{Mon}\bdia)$ (monotonicity of $\bdia$) and commutativity of $\bdia$ over finite disjunctions (due to Proposition \ref{prop:prf-basic} (5) and (6)), it holds that $\vdash_{\Lambda} \bdia \psi \to \bigvee_{1 \leqslant i \leqslant n} \bdia \delta_{i}$. By $\bdia \psi \in \Gamma$ and Lemma \ref{lem:dist-lat}, we obtain $\bigvee_{1 \leqslant i \leqslant n} \bdia \delta_{i} \in \Gamma$, which implies $\bdia \delta_{i} \in \Gamma$ for some indices $i$, again by Lemma \ref{lem:dist-lat}. Fix such index $i$. Together with $\bdia \delta_{i} \in \Delta$, we establish that $(\Gamma,\Delta)$ is $\Lambda$-provable, a contradiction. 
\end{proof}

\begin{lem}[Truth Lemma]
\label{lem:truth}
Let $\Lambda$ be a $bist$-logic. Then, for any formula $\varphi$ and any complete $\Lambda$-unprovable pair $(\Gamma,\Delta)$, the following equivalence holds: $\varphi \in \Gamma$ $\iff$ $M^{\Lambda}, (\Gamma,\Delta) \models \varphi$. 
\end{lem}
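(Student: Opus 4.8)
The plan is to prove the equivalence by induction on the complexity of $\varphi$, exploiting the completeness and $\Lambda$-unprovability of $(\Gamma,\Delta)$ together with the closure properties collected in Lemma \ref{lem:dist-lat}. The base cases are immediate: for a propositional variable $p$ the equivalence is precisely the definition of $V^{\Lambda}$, while the constants $\top$ and $\bot$ are settled by the items of Lemma \ref{lem:dist-lat} guaranteeing $\top \in \Gamma$ and $\bot \notin \Gamma$. The Boolean cases $\varphi \land \psi$ and $\varphi \lor \psi$ reduce directly to the induction hypothesis through the membership characterisations of conjunction and disjunction in Lemma \ref{lem:dist-lat}.

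The substantive work lies in the four cases $\psi \to \gamma$, $\psi \coimp \gamma$, $\Box\psi$ and $\bdia\psi$, and in each one direction is routine while the other requires constructing a witnessing world. First I would dispatch the easy halves. For $\psi \to \gamma \in \Gamma$ I use that every $H^{\Lambda}$-successor contains $\Gamma$ and is closed under modus ponens. For $\Box\psi \in \Gamma$ I read $\psi \in \Gamma'$ off the definition of $R^{\Lambda}$ at any successor $(\Gamma',\Delta')$. For $\psi \coimp \gamma \notin \Gamma$ I argue contrapositively: given any $H^{\Lambda}$-predecessor satisfying $\psi$, axiom $(\texttt{A10})$ and completeness force $\gamma$ into that predecessor, so no refuting witness exists. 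For $\bdia\psi \notin \Gamma$ I argue contrapositively using Lemma \ref{lem:access} (3), since a predecessor containing $\psi$ would, by that clause, place $\bdia\psi$ into $\Gamma$, a contradiction. The induction hypothesis converts each of these membership facts into the corresponding satisfaction statement.

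For the harder halves --- $\psi \to \gamma \notin \Gamma$, $\psi \coimp \gamma \in \Gamma$, $\Box\psi \notin \Gamma$, and $\bdia\psi \in \Gamma$ --- the uniform strategy is to invoke the matching clause of Lemma \ref{lem:unprov} to obtain a $\Lambda$-unprovable pair, extend it to a complete $\Lambda$-unprovable pair via Lemma \ref{lem:extension}, and then check that the new point stands in the correct frame relation to $(\Gamma,\Delta)$. In the implication and box cases the new point is an $H^{\Lambda}$- or $R^{\Lambda}$-successor that refutes $\gamma$ or $\psi$ respectively, the latter $R^{\Lambda}$-link being immediate from the definition of $R^{\Lambda}$. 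In the coimplication and diamond cases the new point is a \emph{predecessor}: for $\bdia\psi$ I verify the $R^{\Lambda}$-link using the equivalent formulation in Lemma \ref{lem:access} (4), and for $\psi\coimp\gamma$ I must separately verify the $H^{\Lambda}$-link.

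The step I expect to be the main obstacle is confirming that each freshly built complete pair occupies the correct position in the canonical frame. In the coimplication case this amounts to showing the witness $(\Gamma',\Delta')$ satisfies $\Gamma' \subseteq \Gamma$, so that $(\Gamma',\Delta')\,H^{\Lambda}\,(\Gamma,\Delta)$: because Lemma \ref{lem:unprov} (2) places $\Delta$ on the right-hand side, any formula lying in $\Gamma'$ but not in $\Gamma$ would, by completeness of $(\Gamma,\Delta)$, also lie in $\Delta \subseteq \Delta'$, rendering $(\Gamma',\Delta')$ $\Lambda$-provable and contradicting its construction. For the modal witnesses the corresponding check that the point is $R^{\Lambda}$-related to $(\Gamma,\Delta)$ is exactly what Lemma \ref{lem:access} is designed to supply, so there that lemma relieves the obstacle rather than creating a new one.
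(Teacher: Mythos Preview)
Your proposal is correct and follows essentially the same approach as the paper's own proof: induction on $\varphi$, with the nontrivial directions for $\to$, $\coimp$, $\Box$, $\bdia$ handled by invoking the matching clause of Lemma~\ref{lem:unprov}, extending via Lemma~\ref{lem:extension}, and verifying the resulting frame relation (the paper too uses $\Delta \subseteq \Theta$ to obtain the $H^{\Lambda}$-link in the coimplication case, and Lemma~\ref{lem:access} for the $\bdia$ case). If anything, you spell out the $\Gamma' \subseteq \Gamma$ verification for coimplication more explicitly than the paper does.
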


\begin{proof}
By induction on $\varphi$. When $\varphi$ is a propositional variable from $\mathsf{Prop}$, it is immediate from the definition of $V^{\Lambda}$. When $\varphi$ is of the form $\psi \land \gamma$ or $\psi \lor \gamma$, we can establish the equivalence by Lemma \ref{lem:dist-lat} and induction hypothesis. In what follows, we deal with the remaining cases, in particular the cases where $\varphi$ is of the form of $\psi \coimp \gamma$ or $\bdia \psi$. 

\noindent \textbf{($\varphi$ is of the form $\psi \coimp \gamma$)}  First we show the right-to-left direction and so assume that $M^{\Lambda},(\Gamma,\Delta) \models \psi \coimp \gamma$. Then there exists a pair $(\Sigma,\Theta) \in U^{\Lambda}$ such that $(\Sigma,\Theta)H^{\Lambda}(\Gamma,\Delta)$ (so $\Sigma \subseteq \Gamma$) and $M^{\Lambda},(\Sigma,\Theta) \models \psi$ but $M^{\Lambda},(\Sigma,\Theta) \not\models \gamma$. By induction hypothesis, $\psi \in \Sigma$ and $\gamma \notin \Sigma$. Our goal is to show that $\psi \coimp \gamma \in \Gamma$. We have $\vdash_{\Lambda} \psi \to (\psi \coimp \gamma) \lor \gamma$ by axiom $(\texttt{A10})$. Since $\psi \in \Sigma$, we obtain $(\psi \coimp \gamma) \lor \gamma \in \Sigma$ hence $\psi \coimp \gamma\in \Sigma$ or  $\gamma \in \Sigma$ by Lemma \ref{lem:dist-lat}. 
By $\gamma \notin \Sigma$, $\psi \coimp \gamma\in \Sigma$ holds. It follows from $(\Sigma,\Theta)H^{\Lambda}(\Gamma,\Delta)$ that $\psi \coimp \gamma\in \Sigma$, as required. Second we show the left-to-right direction. Suppose that $\psi \coimp \gamma \in \Gamma$. By Lemma \ref{lem:unprov} (2), $(\setof{\psi},\setof{\gamma}\cup \Delta)$ is $\Lambda$-unprovable. By Lemma \ref{lem:extension}, we can find $(\Sigma,\Theta) \in U^{\Lambda}$ such that $\psi \in \Sigma$ and $\setof{\gamma}\cup \Delta \subseteq \Theta$. 
It follows from $\Delta \subseteq \Theta$ that $(\Sigma,\Theta)H^{\Lambda}(\Gamma,\Delta)$. By induction hypothesis, we also obtain $M^{\Lambda},(\Sigma,\Theta)\models\psi$ but  $M^{\Lambda},(\Sigma,\Theta)\not\models\gamma$. Therefore, $M^{\Lambda},(\Gamma,\Delta) \models \psi \coimp \gamma$, as desired.

\noindent \textbf{($\varphi$ is of the form $\bdia \psi$)} First we show the right-to-left direction and so assume that $M^{\Lambda},(\Gamma,\Delta) \models \bdia \psi$. We can find a pair $(\Sigma,\Theta) \in U^{\Lambda}$ such that $(\Sigma,\Theta) R^{\Lambda} (\Gamma,\Delta)$ and $M^{\Lambda}, (\Sigma,\Theta) \models \psi$. By induction hypothesis, we get $\psi \in \Sigma$. By Lemma \ref{lem:access}, $\bdia \psi \in \Gamma$, as desired. Second we show the left-to-right direction. Suppose that $\bdia \psi \in \Gamma$. By Lemma \ref{lem:unprov}, $(\setof{\psi}, \inset{\delta}{\bdia \delta \in \Delta})$ is $\Lambda$-unprovable. By Lemma \ref{lem:extension}, we can find a pair $(\Sigma,\Theta) \in U^{\Lambda}$ such that $\psi \in \Sigma$ and $\inset{\delta}{\bdia \delta \in \Delta} \subseteq \Theta$. By Lemma \ref{lem:access} and induction hypothesis, we obtain $(\Sigma,\Theta)R^{\Sigma}(\Gamma,\Delta)$ and $\psi \in \Sigma$. Therefore, $M^{\Lambda}, (\Gamma,\Delta)\models \bdia\psi$. 
\end{proof}

\begin{theorem}[Strong Completeness of $\mathbf{BiSKt}$]
\label{thm:complete}
Given any set $\Gamma \cup \setof{\varphi}$ of formulas, 
\[
\text{ $\Gamma \models \varphi$ implies $\Gamma \vdash_{\mathbf{BiSKt}} \varphi$. }
\]
\end{theorem}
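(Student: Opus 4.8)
The plan is to prove the contrapositive using the canonical model $M^{\mathbf{BiSKt}}$ constructed above. Writing $\Lambda = \mathbf{BiSKt}$, I would assume $\Gamma \not\vdash_{\Lambda} \varphi$ and manufacture an $H$-model together with a state that verifies every formula of $\Gamma$ while refuting $\varphi$, which is exactly $\Gamma \not\models \varphi$.

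First I would recast the derivability hypothesis as a statement about unprovable pairs. I claim $\Gamma \not\vdash_{\Lambda} \varphi$ is equivalent to the pair $(\Gamma, \setof{\varphi})$ being $\Lambda$-unprovable. The only finite subsets of $\setof{\varphi}$ are $\emptyset$ and $\setof{\varphi}$, with $\bigvee \emptyset = \bot$ and $\bigvee \setof{\varphi} = \varphi$; since $\bot \to \varphi$ is a theorem by axiom $(\texttt{A8})$, the case $\Gamma \vdash_{\Lambda} \bot$ already forces $\Gamma \vdash_{\Lambda} \varphi$. Hence $(\Gamma, \setof{\varphi})$ is $\Lambda$-provable precisely when $\Gamma \vdash_{\Lambda} \varphi$, so under our assumption this pair is $\Lambda$-unprovable.

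Next I would pass into the canonical model via the Lindenbaum-style extension. By Lemma \ref{lem:extension} there is a complete $\Lambda$-unprovable pair $(\Gamma^{+}, \Delta^{+})$ with $\Gamma \subseteq \Gamma^{+}$ and $\varphi \in \Delta^{+}$, and by definition this pair is a state of $M^{\Lambda}$. I would record that $M^{\Lambda}$ is a genuine $H$-model to which semantic consequence applies: $H^{\Lambda}$ is a partial order, $V^{\Lambda}(p)$ is an $H$-set because membership $p \in \Gamma_{1} \subseteq \Gamma_{2}$ propagates along $H^{\Lambda}$, and $R^{\Lambda}$ is stable by Lemma \ref{lem:stable}.

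Finally I would read off both halves of the refutation from the Truth Lemma (Lemma \ref{lem:truth}). For each $\gamma \in \Gamma$ we have $\gamma \in \Gamma^{+}$, so $M^{\Lambda}, (\Gamma^{+}, \Delta^{+}) \models \gamma$, and thus every member of $\Gamma$ holds at this state. Conversely, completeness together with $\Lambda$-unprovability yields $\Gamma^{+} \cap \Delta^{+} = \emptyset$ (a formula lying in both components would make the pair provable), so $\varphi \in \Delta^{+}$ gives $\varphi \notin \Gamma^{+}$, whence the Truth Lemma delivers $M^{\Lambda}, (\Gamma^{+}, \Delta^{+}) \not\models \varphi$. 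This establishes $\Gamma \not\models \varphi$ and hence the contrapositive. Since the substantive work—the extension lemma, the stability of $R^{\Lambda}$, and especially the co-implication and modal cases of the Truth Lemma—has already been discharged, the only delicate bookkeeping left is the equivalence between $\Gamma \not\vdash_{\Lambda}\varphi$ and unprovability of $(\Gamma, \setof{\varphi})$ (the $\bot$ case) and the disjointness $\Gamma^{+} \cap \Delta^{+} = \emptyset$; I expect these to be the only points requiring genuine care.
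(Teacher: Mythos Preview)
Your proposal is correct and follows essentially the same route as the paper: prove the contrapositive, pass from $\Gamma \not\vdash_{\Lambda} \varphi$ to the $\Lambda$-unprovable pair $(\Gamma,\setof{\varphi})$, extend via Lemma~\ref{lem:extension} to a complete $\Lambda$-unprovable pair, and apply the Truth Lemma together with Lemma~\ref{lem:stable} to exhibit a countermodel. The extra bookkeeping you flag (the $\bot$ case in unprovability and the disjointness $\Gamma^{+}\cap\Delta^{+}=\emptyset$) is handled more tersely in the paper but your explicit treatment is sound.
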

\begin{proof}
Put $\Lambda$ := $\mathbf{BiSKt}$. Fix any set $\Gamma \cup \setof{\varphi}$ of formulas. We prove the contrapositive implication and so assume that $\Gamma \not\vdash_{\Lambda} \varphi$. It follows that $(\Gamma,\setof{\varphi})$ is $\Lambda$-unprovable. By Lemma \ref{lem:extension},  we can find a complete and $\Lambda$-unprovable pair $(\Sigma,\Theta) \in U^{\Lambda}$ such that $\Gamma \subseteq \Sigma$ and $\varphi \in \Theta$. By Lemma \ref{lem:truth} (Truth Lemma), $M^{\Lambda},(\Sigma,\Theta) \models \gamma$ for all $\gamma \in \Gamma$ and $M^{\Lambda},(\Sigma,\Theta) \not\models \varphi$. Since $M^{\Lambda}$ is an $H$-model by Lemma \ref{lem:stable}, we can conclude $\Gamma \not\models \varphi$, as desired. 
\end{proof}

\section{Kripke Completeness of Extensions of BiSKt}
\label{sec:extension}

This section establishes that a $bist$-logic extended with {\em any} set of formulas from Table \ref{table:def} enjoys the strongly completeness.

\begin{definition}
Let $\mathbb{F}$ be a frame class. We say that $\varphi$ is a {\em semantic $\mathbb{F}$-consequence from} $\Gamma$ (written: $\Gamma \models_{\mathbb{F}} \varphi$) if, whenever $(F,V),u \models \gamma$ for all $\gamma \in \Gamma$, it holds that $(F,V),u\models \varphi$, for any $H$-frame $F$ = $(U,H,R) \in \mathbb{F}$, any valuation $V$ on $U$ and any state $u \in U$. 
\end{definition}

\noindent When $\Gamma$ is empty in the notation `$\Gamma \models_{\mathbb{F}} \varphi$', we simply write $\models_{\mathbb{F}} \varphi$, which is equivalent to the statement that $F \models \varphi$ for all $F \in \mathbb{F}$.  

Given a set $\Sigma$ of formulas, recall that $\mathbf{BiSKt}\Sigma$ is the smallest $bist$-logic containing $\Sigma$. By Proposition \ref{prop:definable}, we obtain the following soundness result. 

\begin{theorem}
\label{thm:sound-extension}
Let $\Sigma$ be a possibly infinite set of formulas of the form $\mathsf{D}_{k} \cdots \mathsf{D}_{1}p \to \mathsf{D}_{m} \cdots \mathsf{D}_{k+1}p$ and $\mathbb{F}_{\Sigma}$ be the class of $H$-frames defined by $\Sigma$. 
Then $\mathbf{BiSKt}\Sigma$ is sound for the class $\mathbb{F}_{\Sigma}$, i.e., $\vdash_{\mathbf{BiSKt}\Sigma }\varphi$ implies $\mathbb{F}_{\Sigma} \models \varphi$, for all formulas $\varphi$. 
\end{theorem}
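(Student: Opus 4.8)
The plan is to show that the set of formulas valid throughout $\mathbb{F}_{\Sigma}$ is itself a $bist$-logic containing $\Sigma$, and then to appeal to the minimality of $\mathbf{BiSKt}\Sigma$. The crucial observation is that the soundness proof for the base system (Theorem \ref{thm:sound}) is really carried out frame by frame: every axiom of Table \ref{table:hil-biskt} is valid in \emph{each} $H$-frame, and every rule preserves validity \emph{within a single} $H$-frame, because the adjunction equivalences it invokes are precisely the frame-relative statements $F \models \cdots$ of Proposition \ref{prop:sem-basic}.

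First I would fix an arbitrary $F = (U,H,R) \in \mathbb{F}_{\Sigma}$ and form $\Lambda_{F} := \inset{\varphi}{F \models \varphi}$. I claim $\Lambda_{F}$ is a $bist$-logic. All axioms of Table \ref{table:hil-biskt} belong to $\Lambda_{F}$ since, by Theorem \ref{thm:sound}, they are valid in every $H$-frame and in particular in $F$. For the rules: $(\texttt{MP})$ preserves $F$-validity directly (using reflexivity of $H$); $(\texttt{US})$ preserves $F$-validity because $F \models \varphi$ quantifies over all valuations, so any substitution instance of an $F$-valid formula is again $F$-valid; and the monotonicity rules $(\texttt{Mon}\coimp)$, $(\texttt{Mon}\Box)$, $(\texttt{Mon}\bdia)$ preserve $F$-validity by exactly the computations in the proof of Theorem \ref{thm:sound}, each of which uses only the frame-relative equivalences of Proposition \ref{prop:sem-basic} and therefore survives the replacement of $\models$ by $F \models$.

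Next I would note that $\Sigma \subseteq \Lambda_{F}$: since $\mathbb{F}_{\Sigma}$ is by definition the class of $H$-frames defined by $\Sigma$ (a characterisation made explicit by Proposition \ref{prop:definable}), the hypothesis $F \in \mathbb{F}_{\Sigma}$ gives $F \models \sigma$ for every $\sigma \in \Sigma$. Thus $\Lambda_{F}$ is a $bist$-logic containing $\Sigma$, and since $\mathbf{BiSKt}\Sigma$ is by definition the smallest such logic, we obtain $\mathbf{BiSKt}\Sigma \subseteq \Lambda_{F}$, i.e.\ $\vdash_{\mathbf{BiSKt}\Sigma}\varphi$ implies $F \models \varphi$. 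As $F \in \mathbb{F}_{\Sigma}$ was arbitrary, this yields $\mathbb{F}_{\Sigma} \models \varphi$, completing the argument. Equivalently one may work throughout with the single logic $\bigcap_{F \in \mathbb{F}_{\Sigma}} \Lambda_{F}$, using that an arbitrary intersection of $bist$-logics is again a $bist$-logic.

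The step I expect to be the main obstacle is the verification that the monotonicity rules and $(\texttt{US})$ preserve validity on the \emph{restricted} class $\mathbb{F}_{\Sigma}$ rather than merely across the class of all $H$-frames. This reduces to recording that the argument of Theorem \ref{thm:sound} is genuinely local to one frame, which it is; once this locality is noted, the remaining steps are just the bookkeeping of minimality.
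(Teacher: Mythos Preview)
Your argument is correct and matches the paper's approach: the paper gives no formal proof at all, merely noting before the theorem statement that it follows from Proposition~\ref{prop:definable} (which guarantees $\Sigma \subseteq \Lambda_{F}$ for each $F \in \mathbb{F}_{\Sigma}$), leaving the frame-local preservation of axioms and rules implicit from Theorem~\ref{thm:sound}. Your version simply makes explicit the minimality-of-$\mathbf{BiSKt}\Sigma$ step and the fact that the soundness proof for the base system is carried out frame by frame.
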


In what follows, we show that, for any set $\Sigma$ of formulas of the form $\mathsf{D}_{k} \cdots \mathsf{D}_{1}p \to \mathsf{D}_{m} \cdots \mathsf{D}_{k+1}p$, $\mathbf{BiSKt}\Sigma$ is strongly complete for the class of $H$-frames defined by $\Sigma$. 

\begin{proposition}
\label{prop:coneg-dia}
Let $\Lambda$ be a $bist$-logic. 
\begin{multienumerate}
\mitemxx{$\vdash_{\Lambda} \varphi \to \neg \psi$ iff $\vdash_{\Lambda} \psi \to \neg \varphi$.}{$\vdash_{\Lambda} \coneg \varphi \to \psi$ iff $\vdash_{\Lambda} \coneg \psi \to \varphi$. }
\mitemxx{$\vdash_{\Lambda} \dia \varphi \to \psi$ iff $\vdash_{\Lambda}  \varphi \to  \bbox \psi$.}{$\vdash_{\Lambda}  \dia \bot \leftrightarrow \bot$.}
\mitemx{$\vdash_{\Lambda}  \dia(\varphi \lor \psi) \leftrightarrow (\dia \varphi \lor \dia \psi)$. }
\end{multienumerate}
\end{proposition}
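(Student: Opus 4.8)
The plan is to reduce every item to the two residuation facts already in hand, namely Proposition \ref{prop:prf-basic}(1) (the coimplication residuation $\vdash_{\Lambda}(\alpha \coimp \beta)\to\gamma$ iff $\vdash_{\Lambda}\alpha\to(\beta\lor\gamma)$) and Proposition \ref{prop:prf-basic}(2) (the adjunction $\vdash_{\Lambda}\bdia\alpha\to\beta$ iff $\vdash_{\Lambda}\alpha\to\Box\beta$), together with the defining abbreviations $\dia\varphi := \coneg\Box\neg\varphi$ and $\bbox\varphi := \neg\bdia\coneg\varphi$ available in $\mathcal{L}(\bdia,\Box)$ and elementary intuitionistic reasoning. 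Throughout I would freely use that $\vdash_{\Lambda}$ respects intuitionistic theorems and that $\coneg\varphi$ unfolds to $\top\coimp\varphi$ while $\neg\varphi$ unfolds to $\varphi\to\bot$.

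For item (1), I would observe that $\vdash_{\Lambda}\varphi\to\neg\psi$ and $\vdash_{\Lambda}\psi\to\neg\varphi$ are each intuitionistically equivalent to $\vdash_{\Lambda}(\varphi\land\psi)\to\bot$ by currying, so they coincide. For item (2), unfolding $\coneg$ and applying Proposition \ref{prop:prf-basic}(1) turns $\vdash_{\Lambda}\coneg\varphi\to\psi$ into $\vdash_{\Lambda}\top\to(\varphi\lor\psi)$, i.e. into $\vdash_{\Lambda}\varphi\lor\psi$; since $\lor$ is intuitionistically commutative this is symmetric in $\varphi$ and $\psi$, and running the equivalence backwards gives $\vdash_{\Lambda}\coneg\psi\to\varphi$.

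Item (3) is the heart of the proposition and I expect it to be the main obstacle, since it requires composing three of the preceding dualities while correctly unfolding both $\dia$ and $\bbox$. I would chain as follows: $\vdash_{\Lambda}\dia\varphi\to\psi$, i.e. $\vdash_{\Lambda}\coneg\Box\neg\varphi\to\psi$, is by item (2) equivalent to $\vdash_{\Lambda}\coneg\psi\to\Box\neg\varphi$; by Proposition \ref{prop:prf-basic}(2) this is equivalent to $\vdash_{\Lambda}\bdia\coneg\psi\to\neg\varphi$; and by item (1) this is equivalent to $\vdash_{\Lambda}\varphi\to\neg\bdia\coneg\psi$, which is exactly $\vdash_{\Lambda}\varphi\to\bbox\psi$. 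The care needed here is purely in matching the schematic roles of $\varphi$ and $\psi$ in items (1) and (2) at each step, since intuitionistic logic does not license the classical double-negation shortcuts one might be tempted to take.

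Finally, items (4) and (5) are the standard consequences of the adjunction $\dia\dashv\bbox$ just established in (3): a left adjoint preserves the least element and finite joins. For (4), $\vdash_{\Lambda}\bot\to\dia\bot$ is an instance of $(\texttt{A8})$, while $\vdash_{\Lambda}\dia\bot\to\bot$ follows from item (3), since its transpose $\vdash_{\Lambda}\bot\to\bbox\bot$ is again an instance of $(\texttt{A8})$. For (5), the right-to-left direction follows from monotonicity of $\dia$ (itself a consequence of (3): from $\vdash_{\Lambda}\dia\beta\to\dia\beta$ one gets $\vdash_{\Lambda}\beta\to\bbox\dia\beta$, composes with $\vdash_{\Lambda}\alpha\to\beta$, and transposes back) applied to $(\texttt{A2})$ and $(\texttt{A3})$ and combined via $(\texttt{A4})$; for the left-to-right direction I would transpose along (3) to reduce $\vdash_{\Lambda}\dia(\varphi\lor\psi)\to(\dia\varphi\lor\dia\psi)$ to $\vdash_{\Lambda}(\varphi\lor\psi)\to\bbox(\dia\varphi\lor\dia\psi)$, split the disjunctive antecedent with $(\texttt{A4})$, and transpose each disjunct back along (3) to the trivially provable $\vdash_{\Lambda}\dia\varphi\to(\dia\varphi\lor\dia\psi)$ and $\vdash_{\Lambda}\dia\psi\to(\dia\varphi\lor\dia\psi)$.
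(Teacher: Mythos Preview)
Your proposal is correct and follows essentially the same route as the paper: item~(1) by currying to $(\varphi\land\psi)\to\bot$, item~(2) via Proposition~\ref{prop:prf-basic}(1) and commutativity of $\lor$, item~(3) by chaining (2), Proposition~\ref{prop:prf-basic}(2), and (1) in exactly that order, and items~(4)--(5) as formal consequences of the adjunction $\dia\dashv\bbox$. The only difference is that you spell out (4) and (5) in more detail than the paper, which simply refers back to the ``left adjoints preserve colimits'' remark from the proof of Proposition~\ref{prop:prf-basic}.
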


\begin{proof}
(4) and (5) follows from (3) similarly as in the proof of Proposition \ref{prop:prf-basic}, i.e., `left adjoints ($\dia$) preserves colimits (finite disjunctions).' Recall that $\dia$ := $\coneg \Box \neg$ and $\bbox$ := $\neg \bdia \coneg$. 
(1) is easy to show. So, we focus on items  (2) and (3). 
For (2), let us first recall that $\coneg \varphi$ := $\top \coimp \varphi$. 
By Proposition \ref{prop:prf-basic}, we proceed as follows: $\vdash_{\Lambda} \coneg \varphi \to \psi$ iff $\vdash_{\Lambda} \top \to (\varphi \lor \psi)$ iff $\vdash_{\Lambda} \top \to (\psi \lor \varphi)$ iff  $\vdash_{\Lambda} \coneg \psi \to \varphi$. We finished to prove (2). For (3), we proceed as follows: $\vdash_{\Lambda} \coneg \Box \neg \varphi \to \psi$ iff 
$\vdash_{\Lambda} \coneg \psi \to \Box \neg \varphi$ (by item (2)) iff 
$\vdash_{\Lambda} \bdia \coneg \psi \to \neg \varphi$ (by Proposition \ref{prop:prf-basic}) 
iff $\vdash_{\Lambda} \varphi \to \neg \bdia \coneg \psi$ (by item (1)). 
\end{proof}

\begin{lem}
\label{lem:leftconv}
Let $\Lambda$ be a $bist$-logic and $(\Gamma,\Delta), (\Sigma,\Theta) \in U^{\Lambda}$. Then,
\[
\begin{array}{lll}
(\Gamma,\Delta) H^{\Lambda};\breve{R^{\Lambda}};H^{\Lambda}(\Sigma,\Theta) &\iff& \inset{\varphi }{\coneg \Box \neg \varphi \in \Theta} \subseteq \Delta.\\
\end{array}
\]
Therefore, $(\Gamma,\Delta) \leftconv R^{\Lambda} (\Sigma,\Theta)$ iff $($$\dia \varphi \in \Theta$ implies $\varphi \in \Delta$$)$ for all formulas $\varphi$. 
\end{lem}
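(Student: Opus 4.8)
The plan is to establish the displayed biconditional; the closing ``Therefore'' sentence is then immediate, since by definition $\leftconv R^{\Lambda} = H^{\Lambda};\breve{R^{\Lambda}};H^{\Lambda}$ and $\dia\varphi$ abbreviates $\coneg\Box\neg\varphi$, so $\inset{\varphi}{\coneg\Box\neg\varphi\in\Theta}\subseteq\Delta$ is literally the condition ``$\dia\varphi\in\Theta$ implies $\varphi\in\Delta$''. Throughout I use that a complete $\Lambda$-unprovable pair partitions $\mathsf{Form}_{\mathcal{L}}$ (so $\varphi\notin\Gamma \iff \varphi\in\Delta$), that $\Gamma$ is deductively closed and closed under $\land$ (Lemma \ref{lem:dist-lat}), and the four reformulations of $R^{\Lambda}$ from Lemma \ref{lem:access}. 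As a preliminary I record that $(\texttt{A10})$ yields $\vdash_{\Lambda}\rho\lor\coneg\rho$ for every $\rho$, whence (Lemma \ref{lem:dist-lat}) $\rho\notin\Gamma$ forces $\coneg\rho\in\Gamma$.

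For the left-to-right direction, suppose witnesses $(\Gamma_{1},\Delta_{1}),(\Gamma_{2},\Delta_{2})\in U^{\Lambda}$ exist with $\Gamma\subseteq\Gamma_{1}$ (so $\Delta_{1}\subseteq\Delta$), $(\Gamma_{2},\Delta_{2})R^{\Lambda}(\Gamma_{1},\Delta_{1})$, and $\Gamma_{2}\subseteq\Sigma$ (so $\Theta\subseteq\Delta_{2}$). Given $\coneg\Box\neg\varphi\in\Theta\subseteq\Delta_{2}$, I want $\varphi\in\Delta$, and since $\Delta_{1}\subseteq\Delta$ it suffices to show $\varphi\in\Delta_{1}$. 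I argue contrapositively: if $\varphi\in\Gamma_{1}$ then $\neg\varphi\in\Delta_{1}$ (as $\varphi,\neg\varphi$ cannot both lie in the consistent $\Gamma_{1}$), so Lemma \ref{lem:access}(2) applied to $(\Gamma_{2},\Delta_{2})R^{\Lambda}(\Gamma_{1},\Delta_{1})$ gives $\Box\neg\varphi\in\Delta_{2}$, i.e.\ $\Box\neg\varphi\notin\Gamma_{2}$; the preliminary observation then yields $\coneg\Box\neg\varphi\in\Gamma_{2}$, contradicting $\coneg\Box\neg\varphi\in\Delta_{2}$.

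For the right-to-left direction I must build the witnesses; first rewrite the hypothesis contrapositively as ``$\varphi\in\Gamma$ implies $\coneg\Box\neg\varphi\in\Sigma$'' for all $\varphi$. The plan is to place $(\Gamma_{2},\Delta_{2})$ $H^{\Lambda}$-below $(\Sigma,\Theta)$ and then $(\Gamma_{1},\Delta_{1})$ $H^{\Lambda}$-above $(\Gamma,\Delta)$ as an $R^{\Lambda}$-successor of $(\Gamma_{2},\Delta_{2})$. Concretely, set $S:=\inset{\Box\neg\varphi}{\varphi\in\Gamma}$ and extend, via Lemma \ref{lem:extension}, the pair $(\emptyset,\Theta\cup S)$ to a complete $\Lambda$-unprovable pair $(\Gamma_{2},\Delta_{2})$; since $\Theta\subseteq\Delta_{2}$ we get $\Gamma_{2}\subseteq\Sigma$. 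I then take $(\Gamma_{1},\Delta_{1})$ to be any extension (Lemma \ref{lem:extension}) of $(\Gamma\cup\inset{\chi}{\Box\chi\in\Gamma_{2}},\emptyset)$: it contains $\Gamma$, and $\inset{\chi}{\Box\chi\in\Gamma_{2}}\subseteq\Gamma_{1}$ is exactly $(\Gamma_{2},\Delta_{2})R^{\Lambda}(\Gamma_{1},\Delta_{1})$. Assembling, $(\Gamma,\Delta)H^{\Lambda}(\Gamma_{1},\Delta_{1})$, $(\Gamma_{1},\Delta_{1})\breve{R^{\Lambda}}(\Gamma_{2},\Delta_{2})$ and $(\Gamma_{2},\Delta_{2})H^{\Lambda}(\Sigma,\Theta)$, as required.

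Two facts make the construction legitimate, and I expect the second to be the crux. First, $(\emptyset,\Theta\cup S)$ is $\Lambda$-unprovable: otherwise $\vdash_{\Lambda}\bigvee\Theta'\lor\bigvee_{i}\Box\neg\varphi_{i}$ for finite $\Theta'\subseteq\Theta$ and $\varphi_{i}\in\Gamma$, and using monotonicity of $\Box$ together with $\vdash_{\Lambda}\bigvee_{i}\neg\varphi_{i}\to\neg\bigwedge_{i}\varphi_{i}$ this collapses to $\vdash_{\Lambda}\bigvee\Theta'\lor\Box\neg\varphi$ with $\varphi:=\bigwedge_{i}\varphi_{i}\in\Gamma$; but $\varphi\in\Gamma$ gives $\coneg\Box\neg\varphi\in\Sigma$, whence Lemma \ref{lem:unprov}(2) (with $\psi=\top$, $\gamma=\Box\neg\varphi$) delivers $\not\vdash_{\Lambda}\bigvee\Theta'\lor\Box\neg\varphi$, a contradiction. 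Second, $\Gamma\cup\inset{\chi}{\Box\chi\in\Gamma_{2}}$ must be consistent so that the completion above exists, and this is the delicate point, caused by the intuitionistic failure of $\neg\neg\chi\to\chi$: inconsistency would produce (after contracting with $\bigwedge$ via Proposition \ref{prop:prf-basic}(3)) a single $\chi$ with $\Box\chi\in\Gamma_{2}$ and $\neg\chi\in\Gamma$; taking $\varphi:=\neg\chi\in\Gamma$, the seed $S$ puts $\Box\neg\neg\chi\in\Delta_{2}$, whereas $\Box\chi\in\Gamma_{2}$ and $\vdash_{\Lambda}\Box\chi\to\Box\neg\neg\chi$ force $\Box\neg\neg\chi\in\Gamma_{2}$, contradicting disjointness. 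The main obstacle is precisely to choose the seed $S$ so that the \emph{single} negation supplied by the hypothesis $\coneg\Box\neg(\cdot)$ lines up, through $\Box$-monotonicity applied to $\chi\to\neg\neg\chi$, with the \emph{double} negation that the consistency argument inevitably produces.
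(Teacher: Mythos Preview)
The paper states this lemma without proof, so there is nothing to compare against directly; your argument stands on its own and is correct. Both directions go through as you claim: the left-to-right direction is a straightforward chase using Lemma~\ref{lem:access}(\ref{item:r2}) and the preliminary fact $\vdash_{\Lambda}\rho\lor\coneg\rho$, and in the right-to-left direction your two unprovability claims are exactly right. In particular, your handling of the consistency of $\Gamma\cup\{\chi:\Box\chi\in\Gamma_{2}\}$ is the genuine subtlety and you resolve it cleanly: by seeding $\Delta_{2}$ with $S=\{\Box\neg\varphi:\varphi\in\Gamma\}$, an inconsistency yields $\neg\chi\in\Gamma$ with $\Box\chi\in\Gamma_{2}$, and then $\Box\neg\neg\chi$ lands simultaneously in $\Gamma_{2}$ (via $(\texttt{Mon}\Box)$ applied to $\chi\to\neg\neg\chi$) and in $\Delta_{2}$ (via $S$ with $\varphi=\neg\chi$). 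The only cosmetic point is that your reference to ``Proposition~\ref{prop:prf-basic}(3)'' for the contraction step should really cite both (3) (to combine the $\Box\chi_{i}$ into a single $\Box\chi\in\Gamma_{2}$) and Lemma~\ref{lem:dist-lat}(6) (to combine the $\Gamma'$-conjuncts), but this is clear from context.
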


\begin{lem}
\label{lem:definable}
Let $\Lambda$ be a $bist$-logic and $S_{i}^{\Lambda} \in \setof{R^{\Lambda},\leftconv R^{\Lambda}}$ for $1 \leqslant i \leqslant m$ and for each $i$ let $\mathsf{D}_{i}$ be $\bdia$ if $S_{i}^{\Lambda}$ = $R^{\Lambda}$; $\dia$ if $S_{i}^{\Lambda}$ = $\leftconv R^{\Lambda}$. For all pairs $(\Gamma,\Delta),(\Sigma,\Theta)\in U^{\Lambda}$, we have the following equivalence: $(\Gamma,\Delta) S_{1}^{\Lambda}; \cdots ;S_{m}^{\Lambda}(\Sigma,\Theta)$ $\iff$ $\inset{\varphi}{\mathsf{D}_{m} \cdots \mathsf{D}_{1} \varphi \in \Theta } \subseteq \Delta$. 
\end{lem}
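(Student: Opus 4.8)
Lemma \ref{lem:definable} — composition of canonical relations corresponds to a stacked modal formula.

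My plan is to prove the biconditional by induction on $m$, the length of the relational composition, using Lemma \ref{lem:leftconv} together with Lemma \ref{lem:access} as the two "base step" ingredients — one handling the $\dia$ case (i.e. $S^\Lambda_i = \leftconv R^\Lambda$) and the other handling the $\bdia$ case (i.e. $S^\Lambda_i = R^\Lambda$). First I would record the base case $m=1$. Here the claim is that $(\Gamma,\Delta)\,S^\Lambda_1\,(\Sigma,\Theta)$ iff $\inset{\varphi}{\mathsf{D}_1\varphi \in \Theta}\subseteq \Delta$. If $S^\Lambda_1 = R^\Lambda$, so $\mathsf{D}_1 = \bdia$, this is exactly the equivalence of items (1) and (4) in Lemma \ref{lem:access}. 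If $S^\Lambda_1 = \leftconv R^\Lambda$, so $\mathsf{D}_1 = \dia$, this is precisely the second assertion of Lemma \ref{lem:leftconv}. So the base case is immediate from the two lemmas already available.

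For the inductive step I would split the leftmost factor off the composition. Write $S^\Lambda_1;\cdots;S^\Lambda_m = S^\Lambda_1;(S^\Lambda_2;\cdots;S^\Lambda_m)$, so that $(\Gamma,\Delta)\,S^\Lambda_1;\cdots;S^\Lambda_m\,(\Sigma,\Theta)$ holds iff there is an intermediate pair $(\Pi,\Xi)\in U^\Lambda$ with $(\Gamma,\Delta)\,S^\Lambda_1\,(\Pi,\Xi)$ and $(\Pi,\Xi)\,S^\Lambda_2;\cdots;S^\Lambda_m\,(\Sigma,\Theta)$. By the induction hypothesis applied to the tail $S^\Lambda_2;\cdots;S^\Lambda_m$ (length $m-1$), the second conjunct is equivalent to $\inset{\psi}{\mathsf{D}_m\cdots\mathsf{D}_2\psi\in\Theta}\subseteq\Xi$. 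The forward direction of the lemma then follows by a direct chase: given such a witness $(\Pi,\Xi)$ and a formula $\varphi$ with $\mathsf{D}_m\cdots\mathsf{D}_1\varphi = \mathsf{D}_m\cdots\mathsf{D}_2(\mathsf{D}_1\varphi)\in\Theta$, the induction hypothesis puts $\mathsf{D}_1\varphi\in\Xi$, and then the base-case equivalence for $S^\Lambda_1$ (Lemma \ref{lem:access} or Lemma \ref{lem:leftconv}, according to whether $\mathsf{D}_1$ is $\bdia$ or $\dia$) gives $\varphi\in\Delta$, as required.

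The converse direction — constructing the intermediate witness $(\Pi,\Xi)$ from the set inclusion $\inset{\varphi}{\mathsf{D}_m\cdots\mathsf{D}_1\varphi\in\Theta}\subseteq\Delta$ — is where I expect the real work to lie, and it is the step most likely to need an unprovability-and-extension argument in the style of Lemma \ref{lem:unprov} and Lemma \ref{lem:extension} rather than a formal appeal to the earlier lemmas alone. The natural candidate is to take $\Xi := \inset{\psi}{\mathsf{D}_m\cdots\mathsf{D}_2\psi\in\Theta}$ as the target "$\Delta$-side" and build a complete $\Lambda$-unprovable pair around it so that $(\Pi,\Xi)\,S^\Lambda_2;\cdots;S^\Lambda_m\,(\Sigma,\Theta)$ holds by the induction hypothesis, while simultaneously arranging $(\Gamma,\Delta)\,S^\Lambda_1\,(\Pi,\Xi)$ via the base-case characterisation; the hypothesised inclusion is exactly what makes the relevant pair $\Lambda$-unprovable so that Lemma \ref{lem:extension} applies. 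The delicate point is that $S^\Lambda_1$ can be either $R^\Lambda$ or $\leftconv R^\Lambda$, so this witness construction must be carried out in two parallel cases, invoking the appropriate unprovability lemma (for $\Box$/$\bdia$) or its $\dia$-analogue derived from Lemma \ref{lem:leftconv} and Proposition \ref{prop:coneg-dia}. Handling both modal cases uniformly in the existential witness step — and verifying in each that the constructed pair is genuinely $\Lambda$-unprovable — is the main obstacle; once that existence is secured, the two backward chases mirror the forward ones and the induction closes.
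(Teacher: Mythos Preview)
Your plan is correct and matches the paper's argument in all essentials: induction on the length of the composition, splitting off the leftmost factor $S_1^{\Lambda}$, the forward direction by a direct chase through Lemmas~\ref{lem:access} and~\ref{lem:leftconv}, and the backward direction by exhibiting an intermediate pair via an unprovability claim and Lemma~\ref{lem:extension}.

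Two small differences are worth noting. First, the paper takes $m=0$ as the base case (the empty composition being $H^{\Lambda}$, so the claim becomes $(\Gamma,\Delta)H^{\Lambda}(\Sigma,\Theta)$ iff $\Theta\subseteq\Delta$), whereas you start at $m=1$; both work. Second, and more substantively, the case split on $\mathsf{D}_1$ that you flag as the main obstacle is not actually needed. The paper uses the single candidate pair
\[
\bigl(\{\mathsf{D}_1\gamma : \gamma\in\Gamma\},\ \{\varphi : \mathsf{D}_{m}\cdots\mathsf{D}_2\varphi\in\Theta\}\bigr)
\]
and proves its $\Lambda$-unprovability by one uniform argument: monotonicity of $\mathsf{D}_1$ together with the fact that $\mathsf{D}_{m}\cdots\mathsf{D}_2$ commutes with finite disjunctions, both of which hold for $\bdia$ and $\dia$ alike (Propositions~\ref{prop:prf-basic} and~\ref{prop:coneg-dia}). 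Once extended to a complete pair $(\Pi,\Xi)$, the inclusion $\{\mathsf{D}_1\gamma:\gamma\in\Gamma\}\subseteq\Pi$ gives $(\Gamma,\Delta)\,S_1^{\Lambda}\,(\Pi,\Xi)$ in both cases (for $\leftconv R^{\Lambda}$ this is the contrapositive of Lemma~\ref{lem:leftconv}, using completeness of the pairs). So specifying the $\Pi$-side as $\{\mathsf{D}_1\gamma:\gamma\in\Gamma\}$ is the missing concrete choice in your sketch, and it dissolves the anticipated bifurcation.
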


\begin{proof}
By induction on $m$. (\textbf{Basis}) When $m$ = 0, we need to show the equivalence: $(\Gamma,\Delta) H^{\Lambda}(\Sigma,\Theta)$ iff $\Theta \subseteq \Delta$. This is immediate. (\textbf{Inductive Step}) Let $m$ = $k+1$. We show the equivalence:
\[
\begin{array}{lll}
(\Gamma,\Delta) S_{1}^{\Lambda}; \cdots ;S_{k+1}^{\Lambda}(\Sigma,\Theta) &\iff& \inset{\varphi}{\mathsf{D}_{k+1} \cdots \mathsf{D}_{1} \varphi \in \Theta } \subseteq \Delta. 
\end{array}
\]
By Lemmas \ref{lem:access} and \ref{lem:leftconv}, the left-to-right direction is easy to establish, so we focus on the converse direction. Assume $\inset{\varphi}{\mathsf{D}_{k+1} \cdots \mathsf{D}_{1} \varphi \in \Theta } \subseteq \Delta$. We show that there exists a pair $(\Gamma_{1},\Delta_{1}) \in U^{\Lambda}$ such that $(\Gamma,\Delta) S_{1}^{\Lambda}(\Gamma_{1},\Delta_{1})$ and $(\Gamma_{1},\Delta_{1})S_{2}^{\Lambda}; \cdots ;S_{k+1}^{\Lambda}(\Sigma,\Theta)$. It suffices to show that 
\[
(\inset{\mathsf{D}_{1} \gamma}{\gamma \in \Gamma},\inset{\varphi}{\mathsf{D}_{k+1} \cdots \mathsf{D}_{2} \varphi \in \Theta })
\] 
is $\Lambda$-unprovable. This because Lemmas \ref{lem:access} and \ref{lem:leftconv} and our induction hypothesis allow us to get the desired goal. Suppose otherwise. Then we can find a formulas $\gamma_{1},\ldots,\gamma_{a} \in \Gamma$ and $\varphi_{1}, \ldots, \varphi_{b}$ such that $\mathsf{D}_{k+1} \cdots \mathsf{D}_{2} \varphi_{j} \in \Theta$ for all indices $j$ and $\vdash_{\Lambda} {\bigwedge}_{i \in I} \mathsf{D}_{1} \gamma_{i} \to {\bigvee}_{j \in J} \varphi_{j}$ where $I$ = $\setof{1,\dots,a}$ and $J$ = $\setof{1,\dots,b}$. 
Now by monotonicity of $\mathsf{D}_{1}$ (due to  Propositions \ref{prop:prf-basic} and \ref{prop:coneg-dia}), $\vdash_{\Lambda}  \mathsf{D}_{1}{\bigwedge}_{i \in I} \gamma_{i} \to {\bigvee}_{j \in J} \varphi_{j}$. 
By monotonicity of `$\mathsf{D}_{k+1} \cdots \mathsf{D}_{2}$' (by Propositions \ref{prop:prf-basic} and \ref{prop:coneg-dia}), $\vdash_{\Lambda}  \mathsf{D}_{k+1} \cdots \mathsf{D}_{2}  \mathsf{D}_{1}{\bigwedge}_{i \in I} \gamma_{i} \to \mathsf{D}_{k+1} \cdots \mathsf{D}_{2} {\bigvee}_{j \in J} \varphi_{j}$. By commutativity of `$\mathsf{D}_{k+1} \cdots \mathsf{D}_{2}$' over finite disjunctions (again due to  Propositions \ref{prop:prf-basic} and \ref{prop:coneg-dia}), 
\[
\vdash_{\Lambda}  \mathsf{D}_{k+1} \cdots \mathsf{D}_{2}  \mathsf{D}_{1}{\bigwedge}_{i \in I} \gamma_{i} \to {\bigvee}_{j \in J}  \mathsf{D}_{k+1} \cdots \mathsf{D}_{2} \varphi_{j}. 
\]
Since $\mathsf{D}_{k+1} \cdots \mathsf{D}_{2} \varphi_{j} \in \Theta$ for all $j \in J$, we get ${\bigvee}_{j \in J}  \mathsf{D}_{k+1} \cdots \mathsf{D}_{2} \varphi_{j} \in \Theta$. By the implication established above, we obtain $\mathsf{D}_{k+1} \cdots \mathsf{D}_{2}  \mathsf{D}_{1}{\bigwedge}_{i \in I} \gamma_{i} \in \Theta$. By our initial assumption of $\inset{\varphi}{\mathsf{D}_{k+1} \cdots \mathsf{D}_{1} \varphi \in \Theta } \subseteq \Delta$, we obtain ${\bigwedge}_{i \in I} \gamma_{i}  \in \Delta$. On the other hand, since all $\delta_{i}$s belong to $\Gamma$, this implies that ${\bigwedge}_{i \in I} \gamma_{i} \in \Gamma$, a contradiction with $\Lambda$-unprovability of $(\Gamma,\Delta)$. 
\end{proof}

\begin{theorem}
\label{thm:complete-extension}
Let $\Sigma$ be a possibly infinite set of formulas of the form $\mathsf{D}_{k} \cdots \mathsf{D}_{1}p \to \mathsf{D}_{m} \cdots \mathsf{D}_{k+1}p$ $($where $\mathsf{D}_{i} \in \setof{\bdia,\dia}$$)$ and $\mathbb{F}_{\Sigma}$ be the class of $H$-frames defined by $\Sigma$. Then $\mathbf{BiSKt}\Sigma$ is strongly complete for the class $\mathbb{F}_{\Sigma}$, i.e., if $\Gamma \models_{\mathbb{F}_{\Sigma}} \varphi$ then $\Gamma \vdash_{\mathbf{BiSKt}\Sigma }\varphi$, for all sets $\Gamma \cup \setof{\varphi}$ of formulas. 
\end{theorem}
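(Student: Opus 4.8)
The plan is to run the canonical-model argument of Theorem~\ref{thm:complete} for the specific logic $\Lambda := \mathbf{BiSKt}\Sigma$, and to show in addition that its canonical frame already lies in $\mathbb{F}_{\Sigma}$. Since $\mathbf{BiSKt}\Sigma$ is by definition a $bist$-logic, all of the general machinery developed for the base case applies verbatim: the canonical $H$-model $M^{\Lambda} = (U^{\Lambda}, H^{\Lambda}, R^{\Lambda}, V^{\Lambda})$ is defined, $R^{\Lambda}$ is stable by Lemma~\ref{lem:stable} (so $F^{\Lambda} := (U^{\Lambda}, H^{\Lambda}, R^{\Lambda})$ is a genuine $H$-frame), and the Truth Lemma (Lemma~\ref{lem:truth}) holds. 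Granting for the moment that $F^{\Lambda} \in \mathbb{F}_{\Sigma}$, the completeness proof is then identical to that of Theorem~\ref{thm:complete}: assuming $\Gamma \not\vdash_{\Lambda} \varphi$, the pair $(\Gamma, \setof{\varphi})$ is $\Lambda$-unprovable, Lemma~\ref{lem:extension} extends it to a complete $\Lambda$-unprovable pair $(\Gamma_{0},\Delta_{0})$ with $\Gamma \subseteq \Gamma_{0}$ and $\varphi \in \Delta_{0}$, and the Truth Lemma makes this point of $M^{\Lambda}$ satisfy all of $\Gamma$ while refuting $\varphi$; because $F^{\Lambda} \in \mathbb{F}_{\Sigma}$, this witnesses $\Gamma \not\models_{\mathbb{F}_{\Sigma}} \varphi$.

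The crux is therefore to establish $F^{\Lambda} \in \mathbb{F}_{\Sigma}$, i.e.\ that $F^{\Lambda} \models \chi$ for every $\chi \in \Sigma$. By Proposition~\ref{prop:definable}, for an axiom $\chi = \mathsf{D}_{k} \cdots \mathsf{D}_{1} p \to \mathsf{D}_{m} \cdots \mathsf{D}_{k+1} p$ this is equivalent to the frame inclusion $S_{1}^{\Lambda}; \cdots; S_{k}^{\Lambda} \subseteq S_{k+1}^{\Lambda}; \cdots; S_{m}^{\Lambda}$, where each $S_{i}^{\Lambda} \in \setof{R^{\Lambda}, \leftconv R^{\Lambda}}$ corresponds to $\mathsf{D}_{i}$. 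I would verify this inclusion using Lemma~\ref{lem:definable}, which translates each composed relation into a syntactic condition on complete pairs: for all $(\Gamma_{1},\Delta_{1}),(\Gamma_{2},\Delta_{2}) \in U^{\Lambda}$, the inclusion amounts to the implication
\[
\inset{\psi}{\mathsf{D}_{k} \cdots \mathsf{D}_{1} \psi \in \Delta_{2}} \subseteq \Delta_{1} \;\Longrightarrow\; \inset{\psi}{\mathsf{D}_{m} \cdots \mathsf{D}_{k+1} \psi \in \Delta_{2}} \subseteq \Delta_{1}.
\]

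To discharge it, assume the antecedent and take any $\psi$ with $\mathsf{D}_{m} \cdots \mathsf{D}_{k+1} \psi \in \Delta_{2}$. Since $\chi \in \Sigma \subseteq \Lambda$, rule $(\texttt{US})$ gives $\vdash_{\Lambda} \mathsf{D}_{k} \cdots \mathsf{D}_{1} \psi \to \mathsf{D}_{m} \cdots \mathsf{D}_{k+1} \psi$. If $\mathsf{D}_{k} \cdots \mathsf{D}_{1} \psi$ were not in $\Delta_{2}$, then by completeness of $(\Gamma_{2},\Delta_{2})$ it would lie in $\Gamma_{2}$, and deductive closure of $\Gamma_{2}$ (Lemma~\ref{lem:dist-lat}) together with the provable implication would force $\mathsf{D}_{m} \cdots \mathsf{D}_{k+1} \psi \in \Gamma_{2}$, contradicting $\mathsf{D}_{m} \cdots \mathsf{D}_{k+1} \psi \in \Delta_{2}$ and the disjointness of $\Gamma_{2}$ and $\Delta_{2}$. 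Hence $\mathsf{D}_{k} \cdots \mathsf{D}_{1} \psi \in \Delta_{2}$, so by the antecedent $\psi \in \Delta_{1}$, as required.

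I expect the main obstacle to be establishing that the canonical frame validates every axiom in $\Sigma$, i.e.\ the reduction of the composed-relation inclusions $S_{1}^{\Lambda}; \cdots; S_{k}^{\Lambda} \subseteq S_{k+1}^{\Lambda}; \cdots; S_{m}^{\Lambda}$ to a statement purely about membership in complete pairs. That reduction is exactly the content packaged into Lemma~\ref{lem:definable}; once it is available, the verification above is routine, the only delicate points being the bookkeeping between the two blocks $\mathsf{D}_{k} \cdots \mathsf{D}_{1}$ and $\mathsf{D}_{m} \cdots \mathsf{D}_{k+1}$ and the use of uniform substitution to pass from the schematic axiom in $\Sigma$ to its instance at an arbitrary formula $\psi$.
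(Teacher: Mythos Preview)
Your proposal is correct and follows essentially the same approach as the paper: run the canonical-model argument for $\Lambda = \mathbf{BiSKt}\Sigma$ and use Lemma~\ref{lem:definable} together with the provable instance $\vdash_{\Lambda} \mathsf{D}_{k}\cdots\mathsf{D}_{1}\psi \to \mathsf{D}_{m}\cdots\mathsf{D}_{k+1}\psi$ to verify the required inclusion $S_{1}^{\Lambda};\cdots;S_{k}^{\Lambda} \subseteq S_{k+1}^{\Lambda};\cdots;S_{m}^{\Lambda}$ on the canonical frame. Your contrapositive argument via completeness and disjointness of $(\Gamma_{2},\Delta_{2})$ is just an unpacking of the paper's one-line step ``it follows from our assumption that $\mathsf{D}_{k}\cdots\mathsf{D}_{1}\varphi \in \Delta'$.''
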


\begin{proof}
Let us put $\Lambda$ := $\mathbf{BiSKt}\Sigma$. Suppose that $\Gamma \not\vdash_{\Lambda}\varphi$. Our argument for $\Gamma \models_{\mathbb{F}_{\Sigma}} \varphi$ is almost the same as in the proof of Theorem \ref{thm:complete}, but we need to check that the frame part $F^{\Lambda}$ = $(U^{\Lambda},H^{\Lambda},R^{\Lambda})$ of the $\Lambda$-canonical model $M^{\Lambda}$ belongs to the class $\mathbb{F}_{\Sigma}$.  In the notation of Lemma \ref{lem:definable}, it suffices to show that $S_{1}^{\Lambda};\cdots;S_{k}^{\Lambda} \subseteq S_{k+1}^{\Lambda};\cdots;S_{m}^{\Lambda}$ for any formula $\mathsf{D}_{k} \cdots \mathsf{D}_{1}p \to \mathsf{D}_{m} \cdots \mathsf{D}_{k+1}p$ from $\Sigma$.  Suppose that $(\Gamma,\Delta)S_{1}^{\Lambda};\cdots;S_{k}^{\Lambda} (\Gamma',\Delta')$. To show 
$(\Gamma,\Delta) S_{k+1}^{\Lambda};\cdots;S_{m}^{\Lambda} (\Gamma',\Delta')$, we assume that $\mathsf{D}_{m}\cdots\mathsf{D}_{k+1} \varphi \in \Delta'$ by Lemma \ref{lem:definable}. Our goal is to establish $\varphi \in \Delta$. Since $\mathsf{D}_{k} \cdots \mathsf{D}_{1}p \to \mathsf{D}_{m} \cdots \mathsf{D}_{k+1}p \in \Lambda$, it holds that $\vdash_{\Lambda} \mathsf{D}_{k} \cdots \mathsf{D}_{1}\varphi \to \mathsf{D}_{m} \cdots \mathsf{D}_{k+1}\varphi$. It follows from our assumption that $\mathsf{D}_{k} \cdots \mathsf{D}_{1} \varphi \in \Delta'$. By the supposition $(\Gamma,\Delta)S_{1}^{\Lambda};\cdots;S_{k}^{\Lambda} (\Gamma',\Delta')$, Lemma \ref{lem:definable} allows us to conclude $\varphi \in \Delta$. Therefore, $F^{\Lambda} \in \mathbb{F}_{\Sigma}$, as required. 
\end{proof}

\begin{corollary}
\label{cor:strong-comp-table}
Let $\Sigma$ be a set of formulas from Table \ref{table:def} and $\mathbb{F}_{\Sigma}$ be the class of $H$-frames defined by $\Sigma$. Then $\mathbf{BiSKt}\Sigma$ is strongly complete for the class $\mathbb{F}_{\Sigma}$.
\end{corollary}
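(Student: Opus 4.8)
The plan is to derive Corollary \ref{cor:strong-comp-table} as a direct instance of Theorem \ref{thm:complete-extension}, since the corollary is essentially just the specialisation of the general strong-completeness result to the concrete formulas listed in Table \ref{table:def}. The key observation is that every entry in the ``Diamond Form'' column of Table \ref{table:def} is exactly a formula of the shape $\mathsf{D}_{k} \cdots \mathsf{D}_{1}p \to \mathsf{D}_{m} \cdots \mathsf{D}_{k+1}p$ with each $\mathsf{D}_{i} \in \setof{\bdia,\dia}$, which is precisely the syntactic hypothesis imposed on $\Sigma$ in Theorem \ref{thm:complete-extension}.

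First I would recall from Proposition \ref{prop:definable} that each inclusion condition in Table \ref{table:def} of the form $S_{1};\cdots;S_{k} \subseteq S_{k+1};\cdots;S_{m}$ (with $S_{i} \in \setof{R,\leftconv R}$) is frame-equivalent to the corresponding diamond-form formula $\mathsf{D}_{k} \cdots \mathsf{D}_{1}p \to \mathsf{D}_{m} \cdots \mathsf{D}_{k+1}p$, where $\mathsf{D}_{i}$ is $\bdia$ when $S_{i} = R$ and $\dia$ when $S_{i} = \leftconv R$. Thus the diamond forms appearing in the table all lie in the class of formulas covered by Theorem \ref{thm:complete-extension}. The remaining task is to argue that taking $\Sigma$ to be \emph{any} set of such formulas still lands inside the theorem's scope: since the theorem explicitly allows $\Sigma$ to be a possibly infinite set of diamond-form formulas, an arbitrary selection of rows from the table yields an admissible $\Sigma$.

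Concretely, given a set $\Sigma$ of formulas drawn from Table \ref{table:def}, I would replace (where necessary) each box-form or mixed-form member by its frame-equivalent diamond-form counterpart, which is legitimate because Proposition \ref{prop:definable} guarantees items (2) and (3) there define the same class of $H$-frames; hence the frame class $\mathbb{F}_{\Sigma}$ is unchanged. Applying Theorem \ref{thm:complete-extension} to the resulting diamond-form set then yields that $\mathbf{BiSKt}\Sigma$ is strongly complete for $\mathbb{F}_{\Sigma}$, which is exactly the claim.

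I do not anticipate a genuine obstacle here, since the corollary is a translation-and-instantiation of the preceding theorem rather than a new argument; the only point requiring a little care is ensuring that when $\Sigma$ contains box-form or mixed-form entries, one invokes the frame-equivalence of Proposition \ref{prop:definable} to justify that such entries define the \emph{same} frame class as their diamond-form correspondents before appealing to Theorem \ref{thm:complete-extension}. If one restricts attention to the diamond-form column outright, the corollary is literally a special case of the theorem and the proof is a single sentence.
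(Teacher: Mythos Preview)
Your approach is correct and matches the paper's treatment: the corollary is stated immediately after Theorem~\ref{thm:complete-extension} with no proof, being regarded as a direct specialisation, and your observation that every diamond-form entry in Table~\ref{table:def} has the shape $\mathsf{D}_{k}\cdots\mathsf{D}_{1}p \to \mathsf{D}_{m}\cdots\mathsf{D}_{k+1}p$ is exactly the point.

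One refinement is worth making in your handling of the box-form and mixed-form entries. You invoke Proposition~\ref{prop:definable} to say that replacing such a formula by its diamond-form counterpart leaves $\mathbb{F}_{\Sigma}$ unchanged, and then conclude that $\mathbf{BiSKt}\Sigma$ is strongly complete. But Theorem~\ref{thm:complete-extension} applied to the replacement set $\Sigma'$ yields strong completeness of $\mathbf{BiSKt}\Sigma'$, not of $\mathbf{BiSKt}\Sigma$; frame-equivalence of axioms does not by itself guarantee that the two generated logics coincide. What closes the gap is that the diamond, box, and mixed forms in each row of Table~\ref{table:def} are \emph{interderivable} over $\mathbf{BiSKt}$ via the adjunction rules of Propositions~\ref{prop:prf-basic}(2) and~\ref{prop:coneg-dia}(3) (for instance, from $\Box p \to p$ one obtains $\Box\bdia p \to \bdia p$ by substitution and then $p \to \bdia p$ via axiom~(\texttt{A12}), and conversely). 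Hence $\mathbf{BiSKt}\Sigma = \mathbf{BiSKt}\Sigma'$, and your conclusion follows. If you restrict $\Sigma$ to the diamond-form column, as you note, the corollary is a one-line instantiation.
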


\section{Finite Model Property for Bi-intuitionistic Stable Tense Logics}
\label{sec:fmp}

A $bist$-logic $\Lambda$ has the {\em finite model property} if for every non-theorem $\varphi \notin \Lambda$, there is a finite frame $F$ such that $F \models \Lambda$ but $F \not\models \varphi$. 
We say that a $bist$-logic $\Lambda$ is {\em finitely axiomatizable} if $\Lambda$ = $\mathbf{BiSKt}\Sigma$ for some finite set $\Sigma$ of formulas. It is well-known that if $\Lambda$ is finitely axiomatizable and has the finite model property then it is decidable. In this section, we show that some strongly complete extensions of $\mathbf{BiSKt}$ enjoy the finite model property and so the decidability. We employ the filtration method in~\cite{Hasimoto2001} for intuitionistic modal logics to establish the finite model property also for some $bist$-logics. 

Let $M$ = $(U,H,R,V)$ be an $H$-model and $\Delta$ a subformula closed set of formulas. We define an equivalence relation $\sim_{\Delta}$ by: $x \sim_{\Delta} y$ $\iff$ $(M,x \models \varphi \text{ iff } M,y \models \varphi)$ for all $\varphi \in \Sigma$. 
When $x \sim_{\Delta} y$ holds, we say that $x$ and $y$ are $\Delta$-equivalent. We use $[x]$ to mean an equivalence class $\inset{y \in U}{x \sim_{\Delta} y}$ of $x \in U$. 

\begin{definition}[Filtration]
We say that a model $M_{\Delta}$ = $(U_{\Delta}, H_{\Delta}, R_{\Delta}, V_{\Delta})$ is a {\em filtration} of an $H$-model $M$ = $(U,H,R,V)$ through a subformula closed set $\Delta$ of formulas if the following conditions are satisfied. 
\begin{enumerate}
\item $U_{\Delta}$ = $\inset{[x]}{x \in U}$. 
\item For all $x,y \in U$, if $xHy$ then $[x]H_{\Delta}[y]$.
\item For all $x,y \in U$ and $\varphi \in \Delta$, if $[x]H_{\Delta}[y]$ and $M,x \models \varphi$ then $M,y \models \varphi$. 
\item For all $x,y \in U$, if $xRy$ then $[x]R_{\Delta}[y]$.
\item For all $x,y \in U$ and $\Box \varphi \in \Delta$, if $[x]R_{\Delta}[y]$ and $M,x \models \Box \varphi$ then $M,y \models \varphi$. 
\item For all $x,y \in U$ and $\bdia \varphi \in \Delta$, if $[x]R_{\Delta}[y]$ and $M,x \models \varphi$ then $M,y \models \bdia \varphi$. 
\item $V_{\Delta}(p)$ = $\inset{[x]}{x \in V(p)}$ for all $p \in \Delta$.
\end{enumerate}
\end{definition}

When $\Delta$ is finite, we note that $U_{\Delta}$ is also finite. 

\begin{proposition}
\label{prop:fil-lem}
Let $M_{\Delta}$ = $(U_{\Delta}, H_{\Delta}, R_{\Delta}, V_{\Delta})$ is a filtration of an $H$- model $M$ = $(U,H,R,V)$ through a subformula closed set $\Delta$ of formulas. Then for every $x \in U$ and every $\varphi \in \Delta$, the following equivalence holds: $M,x \models \varphi$ $\iff$ $M_{\Delta},[x] \models \varphi$. 
\end{proposition}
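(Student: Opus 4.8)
The plan is to argue by induction on the construction of $\varphi$, using that $\Delta$ is subformula closed so that the induction hypothesis is available for every immediate subformula of $\varphi$; throughout I refer to the clauses of the filtration definition by their numbers (1)--(7). The base cases are routine: $\top$ and $\bot$ follow from their satisfaction clauses, while for a propositional variable $p \in \Delta$ I use clause (7) together with the fact that $x \sim_{\Delta} y$ forces $M,x \models p \iff M,y \models p$, so that $[x] \in V_{\Delta}(p)$ holds exactly when $x \in V(p)$. The conjunction and disjunction cases follow immediately from the induction hypothesis and the corresponding satisfaction clauses.

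The two modal cases split cleanly into a ``forth'' half handled by clause (4) and a second half handled by clause (5) or (6). For $\varphi = \Box\psi$: if $M,x \models \Box\psi$ and $[x]R_{\Delta}[w]$, then clause (5) yields $M,w \models \psi$, hence $M_{\Delta},[w] \models \psi$ by the induction hypothesis; conversely, if $M_{\Delta},[x] \models \Box\psi$ and $xRv$, then clause (4) gives $[x]R_{\Delta}[v]$, so $M_{\Delta},[v] \models \psi$ and therefore $M,v \models \psi$. The case $\varphi = \bdia\psi$ is dual: the forward direction transports an $R$-predecessor of $x$ into $R_{\Delta}$ by clause (4), and the backward direction uses clause (6) to produce $\bdia\psi$ at $[x]$ from a witness $w$ with $M,w \models \psi$ and $[w]R_{\Delta}[x]$.

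The genuinely delicate cases are $\to$ and $\coimp$, because their semantics quantify over $H$-successors and $H$-predecessors while clauses (2) and (3) only push $H$-edges forward and never reflect an $H_{\Delta}$-edge back to an actual $H$-edge. For $\varphi = \psi \to \gamma$ the backward direction is easy: given $M_{\Delta},[x] \models \psi \to \gamma$ and $xHv$ with $M,v \models \psi$, clause (2) gives $[x]H_{\Delta}[v]$ and the induction hypothesis yields $M,v \models \gamma$. For the forward direction I cannot extract an $H$-successor of $x$ from an arbitrary $[w]$ with $[x]H_{\Delta}[w]$, so instead I transport the compound formula itself: since $\psi \to \gamma \in \Delta$ and $[x]H_{\Delta}[w]$, clause (3) gives $M,w \models \psi \to \gamma$ directly, and then reflexivity of the preorder $H$ (applied to $wHw$) together with $M,w \models \psi$ forces $M,w \models \gamma$, hence $M_{\Delta},[w] \models \gamma$. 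The coimplication case is the mirror image: the forward direction sends an $H$-predecessor witness into $H_{\Delta}$ via clause (2), while the backward direction again combines clause (3) with reflexivity --- from $[w]H_{\Delta}[x]$ and a world $w$ with $M,w \models \psi$ and $M,w \not\models \gamma$ one first observes $M,w \models \psi \coimp \gamma$ (taking $w$ itself as the $H$-predecessor witness, using $wHw$), and then clause (3) transports $\psi \coimp \gamma \in \Delta$ along $[w]H_{\Delta}[x]$ to give $M,x \models \psi \coimp \gamma$.

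I expect the main obstacle to be exactly this asymmetry in the intuitionistic cases: the filtration conditions are deliberately one-directional, so the argument must never try to reflect an $H_{\Delta}$-edge back into $M$, and must instead apply the persistence clause (3) to the whole implication or coimplication and use reflexivity of $H$ to re-witness the connective at the appropriate world. Once this device is identified, every inductive step closes using only clauses (2)--(7) and the induction hypothesis, with no appeal to stability of $R_{\Delta}$ or to any property of $H_{\Delta}$ beyond what clauses (2) and (3) supply.
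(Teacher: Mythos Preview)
Your proposal is correct and follows essentially the same approach as the paper. In particular, the device you single out as ``delicate'' for $\to$ and $\coimp$---namely, applying clause~(3) to the compound formula itself and then re-witnessing via reflexivity of $H$---is exactly the argument the paper gives for the $\coimp$ case (the only case it spells out), and your treatment of the remaining cases is the standard one.
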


\begin{proof}
We only show the case where $\varphi$ is of the form $\psi \coimp \gamma$. Note that $\psi,\gamma \in \Delta$. 
For the left-to-right direction, assume $M,x \models \psi \coimp \gamma$, i.e., there exists $y \in U$ such that $yHx$ and $M,y \models \psi$ and $M,y \not\models \gamma$. 
Fix such $y$. By the condition (2) of filtration, $[y]H_{\Delta}[x]$. 
It follows from induction hypothesis that $M_{\Delta},[y] \models \psi$ and $M_{\Delta},[y] \not\models \gamma$. 
Therefore, $M_{\Delta},[x]\models \psi \coimp \gamma$. 
For the right-to-left direction, assume $M_{\Delta},[x] \models \psi \coimp \gamma$. 
So, we can find an equivalence class $[y]$ such that $[y]H_{\Delta}[x]$ and 
$M_{\Delta},[y] \models \psi$ and $M_{\Delta},[y] \not\models \gamma$. 
By induction hypothesis, $M,y \models \psi$ and $M,y \not\models \gamma$. 
Since $yHy$, we obtain $M,y \models \psi \coimp \gamma$. 
By $[y]H_{\Delta}[x]$ and the condition (3) of filtration, $M,x \models \psi \coimp \gamma$ holds.
\end{proof}

While our definition of filtration does not guarantee us the {\em existence} of a filtration, the following definition and proposition provide an example of filtration. We remark that the filtration in Definition \ref{dfn:fin-fil} is called the {\em finest filtration} and is shown to be the smallest filtration in~\cite{Hasimoto2001}.

\begin{definition}
\label{dfn:fin-fil}
Given an $H$-frame $F$ = $(U,H,R)$ and a subformula closed set $\Delta$, $\underline{H}_{\Delta}$ and $\underline{R}_{\Delta}$ are defined by:
\[
\begin{array}{lll}
[x] \underline{H}_{\Delta} [y] & \iff & x'Hy' \text{ for some $x' \in [x]$ and some $y' \in [y]$, } \\

[x] \underline{R}_{\Delta} [y] & \iff & x'Ry' \text{ for some $x' \in [x]$ and some $y' \in [y]$. } \\
\end{array}
\]
Put $\underline{R}_{\Delta}^{s}$ := $\underline{H}_{\Delta}^{+};\underline{R}_{\Delta};\underline{H}_{\Delta}^{+}$ where $X^{+}$ is the transitive closure of a binary relation $X$ on a set. 
\end{definition}

As noted in~\cite{Hasimoto2001}, we need to take the `stability-closure' of $\underline{R}_{\Delta}$ to define $\underline{R}_{\Delta}^{s}$, because we cannot always assure that $\underline{R}_{\Delta}$ is stable. 

\begin{proposition}
\label{prop::fin-fil}
Let $M$ = $(U,H,R,V)$ be an $H$-model and $\Delta$ a subformula closed set. Then $M_{\Delta}^{s}$ := $(U_{\Delta},\underline{H}^{+}_{\Delta},\underline{R}_{\Delta}^{s},V_{\Delta})$ is an $H$-model and a filtration of $M$ through $\Delta$. 
\end{proposition}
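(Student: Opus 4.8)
The plan is to establish the two assertions separately: first that $M_\Delta^s$ is a genuine $H$-model, and then that it satisfies the seven conditions in the definition of filtration. Throughout I would rely on two facts already at hand, namely the earlier proposition that each truth set $\den{\varphi}_M$ is an $H$-set, and the immediate observation that $\sim_\Delta$-equivalent points agree on the truth of every formula in $\Delta$.

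For the $H$-model part, $U_\Delta$ is nonempty since $U$ is, and $\underline{H}_\Delta^+$ is a preorder because $\underline{H}_\Delta$ is reflexive (reflexivity of $H$ gives $xHx$, hence $[x]\underline{H}_\Delta[x]$) and its transitive closure is transitive while remaining reflexive. The crux here is the stability of $\underline{R}_\Delta^s$ with respect to $\underline{H}_\Delta^+$. I would use that a reflexive and transitive relation is idempotent under composition, i.e. $\underline{H}_\Delta^+;\underline{H}_\Delta^+ = \underline{H}_\Delta^+$, whence $\underline{H}_\Delta^+;\underline{R}_\Delta^s;\underline{H}_\Delta^+ = \underline{H}_\Delta^+;\underline{H}_\Delta^+;\underline{R}_\Delta;\underline{H}_\Delta^+;\underline{H}_\Delta^+ = \underline{H}_\Delta^+;\underline{R}_\Delta;\underline{H}_\Delta^+ = \underline{R}_\Delta^s$, which is exactly the stability condition. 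That $V_\Delta(p)$ is an $\underline{H}_\Delta^+$-set for $p \in \Delta$ follows once condition (3) is proven (apply it with $\varphi := p$, using $\sim_\Delta$-agreement to pass into $V(p)$); for $p \notin \Delta$ I simply set $V_\Delta(p) := \emptyset$.

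For the filtration conditions, (1) and (7) hold by the definitions of $U_\Delta$ and $V_\Delta$, and (2), (4) are immediate since $\underline{H}_\Delta \subseteq \underline{H}_\Delta^+$ and, by reflexivity of $\underline{H}_\Delta^+$, $\underline{R}_\Delta \subseteq \underline{H}_\Delta^+;\underline{R}_\Delta;\underline{H}_\Delta^+ = \underline{R}_\Delta^s$. The substantive condition is (3). I would first prove its one-step version: if $[x]\underline{H}_\Delta[y]$, $\varphi \in \Delta$ and $M,x \models \varphi$, then $M,y \models \varphi$ — unfolding $[x]\underline{H}_\Delta[y]$ to some $x' \in [x]$, $y' \in [y]$ with $x'Hy'$, moving from $x$ to $x'$ by $\sim_\Delta$-agreement, from $x'$ to $y'$ by the $H$-set property of $\den{\varphi}_M$, and from $y'$ to $y$ again by $\sim_\Delta$-agreement. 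I would then lift this to $\underline{H}_\Delta^+$ by induction along a connecting $\underline{H}_\Delta$-chain.

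The remaining conditions (5) and (6) are where the stability-closure built into $\underline{R}_\Delta^s$ costs some work, and I expect this to be the main obstacle: a witness for $[x]\underline{R}_\Delta^s[y]$ factors as $[x]\underline{H}_\Delta^+[u]$, $[u]\underline{R}_\Delta[v]$, $[v]\underline{H}_\Delta^+[y]$, so I must absorb the two $\underline{H}_\Delta^+$-steps using condition (3) while treating the middle $R$-step through the semantics. For (5), from $M,x \models \Box\varphi$ (with $\Box\varphi \in \Delta$) condition (3) gives $M,u \models \Box\varphi$; unfolding $[u]\underline{R}_\Delta[v]$ to $u'Rv'$ and using $\sim_\Delta$-agreement together with the clause for $\Box$ yields $M,v' \models \varphi$, hence $M,v \models \varphi$ (here $\varphi \in \Delta$ by subformula closure), and condition (3) once more gives $M,y \models \varphi$. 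Condition (6) runs dually: $M,x \models \varphi$ propagates to $M,u \models \varphi$ and then $M,u' \models \varphi$; since $u'Rv'$ this makes $\bdia\varphi$ true at $v'$, hence at $v$ (as $\bdia\varphi \in \Delta$), and condition (3) carries it to $y$. Assembling these verifications yields that $M_\Delta^s$ is both an $H$-model and a filtration of $M$ through $\Delta$.
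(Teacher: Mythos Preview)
Your proposal is correct and follows essentially the same approach as the paper. The paper's proof only spells out condition (6), deferring the $H$-model structure and conditions (1)--(5), (7) to Hasimoto's earlier work on intuitionistic modal logics; your argument for (6) mirrors the paper's exactly (factoring $[x]\underline{R}_\Delta^s[y]$ as $\underline{H}_\Delta^+;\underline{R}_\Delta;\underline{H}_\Delta^+$, transporting $\varphi$ along the first $\underline{H}_\Delta^+$-step via condition (3), passing through the $R$-step using $\sim_\Delta$-agreement and the semantic clause for $\bdia$, and finishing with condition (3) again), while you additionally supply the details the paper cites away.
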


\begin{proof}
We show that all conditions for filtration are satified. We check the condition (6) alone, since the others are already shown in~\cite{Hasimoto2001} for intuitionistic modal logics. \\
\noindent (6) Assume $[x]\underline{R}^{s}_{\Delta}[y]$ and $M,x \models \varphi$. We show that $M,y \models \bdia \varphi$. 
By $[x]\underline{R}^{s}_{\Delta}[y]$, there exists $[x'], [y'] \in U_{\Delta}$ such that $[x]\underline{H}_{\Delta}^{+}[x']$ and $[x']\underline{R}_{\Delta}[y']$ and $[y']\underline{H}_{\Delta}^{+}[y]$. It follows from (3) in the above, our assumption and $[x]\underline{H}_{\Delta}^{+}[x']$ that $M,x' \models \varphi$. By $[x']\underline{R}_{\Delta}[y']$, there exists $x'' \in [x']$ and $y'' \in [y']$ such that $x''Ry''$. We obtain $M,x'' \models \varphi$ by $x'' \in [x']$. So $M,y'' \models \bdia \varphi$. By $y'' \in [y']$, $M,y' \models \bdia \varphi$. By (3) in the above and $[y']\underline{H}_{\Delta}^{+}[y]$, we can conclude $M,y \models \bdia \varphi$.
\end{proof}

\begin{proposition}
\label{prop:preserve-finitest}
Let $F$ = $(U,H,R)$ be an $H$-frame. Let $S_{i} \in \setof{R,\leftconv R}$ for $1 \leqslant i \leqslant m$. 
\begin{enumerate}
\item If $(x,y) \in \leftconv {R}$ then $([x],[y]) \in \leftconv \underline{R}_{\Delta}^{s}$. 
\item If $F$ satisfies $H \subseteq S_{1};\cdots;S_{m}$ then $(U_{\Sigma},\underline{H}^{+}_{\Delta},\underline{R}_{\Delta}^{s})$ also satisfies the corresponding property. 
\item If $F$ satisfies $R \subseteq S_{1};\cdots;S_{m}$ then $(U_{\Sigma},\underline{H}^{+}_{\Delta},\underline{R}_{\Delta}^{s})$ also satisfies the corresponding property. 
\end{enumerate}
\end{proposition}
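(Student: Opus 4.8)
The plan is to establish (1) directly and then to feed it, together with the stability of the relations of the finest filtration, into (2) and (3). Write $\underline{S}^s_i$ for $\underline{R}^s_\Delta$ when $S_i = R$ and for $\leftconv\underline{R}^s_\Delta$ when $S_i = \leftconv R$, where the left converse is taken in the filtration frame, i.e. $\leftconv\underline{R}^s_\Delta = \underline{H}^+_\Delta;\breve{\underline{R}^s_\Delta};\underline{H}^+_\Delta$; the ``corresponding property'' to be proved is then $\underline{H}^+_\Delta \subseteq \underline{S}^s_1;\cdots;\underline{S}^s_m$ for (2) and $\underline{R}^s_\Delta \subseteq \underline{S}^s_1;\cdots;\underline{S}^s_m$ for (3). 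For (1), unfold $(x,y) \in \leftconv R = H;\breve{R};H$ to obtain $a,b \in U$ with $xHa$, $bRa$ and $bHy$. Passing to equivalence classes gives $[x]\underline{H}^+_\Delta[a]$, $[b]\underline{R}_\Delta[a]$ (hence $[b]\underline{R}^s_\Delta[a]$, as $\underline{H}^+_\Delta$ is reflexive), and $[b]\underline{H}^+_\Delta[y]$; these three facts are exactly a witness for $([x],[y]) \in \underline{H}^+_\Delta;\breve{\underline{R}^s_\Delta};\underline{H}^+_\Delta = \leftconv\underline{R}^s_\Delta$.

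The engine for (2) and (3) is a single-step transfer observation together with an absorption identity. First, if $(a,b) \in S_1;\cdots;S_m$ in $F$, then $([a],[b]) \in \underline{S}^s_1;\cdots;\underline{S}^s_m$: transfer each factor of a witnessing chain $a = w_0 S_1 w_1 \cdots S_m w_m = b$, using $w_{i-1}Rw_i \Rightarrow [w_{i-1}]\underline{R}^s_\Delta[w_i]$ for an $R$-factor and part (1) for a $\leftconv R$-factor. Second, since $\underline{R}^s_\Delta$ is stable (Proposition \ref{prop::fin-fil}) and $\leftconv\underline{R}^s_\Delta$, being the left converse of a stable relation, is stable as well, each $\underline{S}^s_i$ absorbs $\underline{H}^+_\Delta$ on both sides; hence for $m \geq 1$ we have $\underline{H}^+_\Delta;(\underline{S}^s_1;\cdots;\underline{S}^s_m) = \underline{S}^s_1;\cdots;\underline{S}^s_m = (\underline{S}^s_1;\cdots;\underline{S}^s_m);\underline{H}^+_\Delta$.

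Granting these, (2) and (3) are short. For (2), $H \subseteq S_1;\cdots;S_m$ and the transfer observation give $\underline{H}_\Delta \subseteq \underline{S}^s_1;\cdots;\underline{S}^s_m$; then for $[x]\underline{H}^+_\Delta[y]$ write $[x]\underline{H}_\Delta[z]$ followed by $[z]\underline{H}^+_\Delta[y]$ for some $[z]$ (using reflexivity of $\underline{H}^+_\Delta$ for the one-step case), so $([x],[y]) \in (\underline{S}^s_1;\cdots;\underline{S}^s_m);\underline{H}^+_\Delta = \underline{S}^s_1;\cdots;\underline{S}^s_m$ by right absorption. For (3) with $m \geq 1$, decompose $[x]\underline{R}^s_\Delta[y]$ as $[x]\underline{H}^+_\Delta[u]$, $[u]\underline{R}_\Delta[v]$, $[v]\underline{H}^+_\Delta[y]$; the middle step is realized by some $u'Rv'$ with $u' \in [u]$, $v' \in [v]$, and $R \subseteq S_1;\cdots;S_m$ plus transfer put $([u],[v]) \in \underline{S}^s_1;\cdots;\underline{S}^s_m$, after which absorbing the two outer $\underline{H}^+_\Delta$-blocks yields $([x],[y]) \in \underline{S}^s_1;\cdots;\underline{S}^s_m$. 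The degenerate case $m = 0$ of (3), i.e. $R \subseteq H$ with target $\underline{R}^s_\Delta \subseteq \underline{H}^+_\Delta$, follows directly: $R \subseteq H$ forces $\underline{R}_\Delta \subseteq \underline{H}^+_\Delta$, whence $\underline{R}^s_\Delta = \underline{H}^+_\Delta;\underline{R}_\Delta;\underline{H}^+_\Delta \subseteq \underline{H}^+_\Delta$ by transitivity.

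The step I expect to be the main obstacle is exactly the passage from single-step transfer to the full inclusions: reconciling the transitive closure inside $\underline{H}^+_\Delta$ (in (2)) and the stability closure hidden in $\underline{R}^s_\Delta = \underline{H}^+_\Delta;\underline{R}_\Delta;\underline{H}^+_\Delta$ (in (3)) with a composite of the fixed length $m$. The absorption identities are what make this go through, and they rest essentially on the stability of $\underline{R}^s_\Delta$ and of $\leftconv\underline{R}^s_\Delta$; this is precisely why the finest filtration must be built from the stability closure $\underline{R}^s_\Delta$ rather than from $\underline{R}_\Delta$ itself.
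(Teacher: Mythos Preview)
Your proof is correct and follows essentially the same route as the paper: the paper proves (1) by the same unfolding of $\leftconv R$ and passage to classes, and proves (3) by decomposing $\underline{R}^{s}_{\Delta}$ as $\underline{H}^{+}_{\Delta};\underline{R}_{\Delta};\underline{H}^{+}_{\Delta}$, transferring the middle step via (1) and filtration condition (4), and then invoking stability of the $(S_i)_{\Delta}$ to absorb the outer $\underline{H}^{+}_{\Delta}$-segments. Your explicit isolation of the ``transfer'' and ``absorption'' steps, the treatment of (2) (which the paper leaves implicit), and the separate handling of the degenerate case $m=0$ in (3) are welcome additions but do not change the underlying argument.
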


\begin{proof}
We focus on items (1) and (3) alone. For (1), assume that $(x,y) \in \leftconv R$. This means that $xHx'$, $y'Rx'$ and $y'Hy$ for some $x',y' \in U$. 
It follows that $[y']\underline{H}_{\Delta}[y']\underline{R}_{\Delta}[x']\underline{H}_{\Delta}[x']$ hence $[y']\underline{R}_{\Delta}^{s}[x']$. By assumption, we also obtain $[x]\underline{H}_{\Delta}^{+}[x']$ and $[y']\underline{H}_{\Delta}^{+}[y]$. 
We can now conclude $([x],[y]) \in \leftconv \underline{R}_{\Delta}^{s}$. 
For (3), assume that $R \subseteq S_{1};\cdots;S_{m}$. We show that $\underline{R}_{\Delta}^{s} \subseteq (S_{1})_{\Delta}; \cdots ;(S_{m})_{\Delta}$ and so suppose that $[x] \underline{R}_{\Delta}^{s} [y]$. 
This means that $[x] \underline{H}_{\Delta}^{+} [x'] \underline{R}_{\Delta} [y'] \underline{H}_{\Delta}^{+} [x]$ for some $[x']$, $[y'] \in U_{\Delta}$. By definition, there exist $a \in [x']$ and $b \in [y']$ such that $a R b$. 
By assumption, $(a,b)\in S_{1};\cdots;S_{m}$. By item (1) and the condition (4) of filtration for $\underline{R}_{\Delta}^{s}$, we obtain $([x'],[y'])\in (S_{1})_{\Delta};\cdots;(S_{m})_{\Delta}$. Since $(S_{i})_{\Delta}$ is stable, it follows from $[x] \underline{H}_{\Delta}^{+} [x']$ and $[y'] \underline{H}_{\Delta}^{+} [x]$ that $([x],[y])\in (S_{1})_{\Delta};\cdots;(S_{m})_{\Delta}$, as desired. 
\end{proof}

\begin{theorem}
\label{thm:fmp-fin}
Let $\Sigma$ be a possibly empty {\em finite} set of formulas of the form $p \to \mathsf{D}_{1} \cdots \mathsf{D}_{m}p$ or $\bdia p \to \mathsf{D}_{1} \cdots \mathsf{D}_{m}p$ $($where $\mathsf{D}_{i} \in \setof{\bdia,\dia}$$)$. Then $\mathbf{BiSKt}\Sigma$ enjoys the finite model property. Therefore, $\mathbf{BiSKt}\Sigma$ is decidable. 
\end{theorem}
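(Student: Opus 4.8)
The plan is to combine the completeness result for extensions (Theorem \ref{thm:complete-extension}) with the finest filtration of Proposition \ref{prop::fin-fil}, using Proposition \ref{prop:preserve-finitest} to transport the frame conditions defined by $\Sigma$ from the canonical frame to the finite filtrated frame. First I would record how the two permitted axiom shapes translate into inclusions. By Proposition \ref{prop:definable}, an axiom $p \to \mathsf{D}_1 \cdots \mathsf{D}_m p$ (the case $k=0$) defines $H \subseteq S_1; \cdots; S_m$, while an axiom $\bdia p \to \mathsf{D}_1 \cdots \mathsf{D}_m p$ (the case $k=1$, whose outer left diamond $\bdia$ forces $S_1 = R$) defines $R \subseteq S_1; \cdots; S_m$, where each $S_i \in \setof{R, \leftconv R}$. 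These are exactly the two shapes of inclusion that survive the finest filtration according to clauses (2) and (3) of Proposition \ref{prop:preserve-finitest}; this is precisely why the theorem is restricted to these two forms (an axiom carrying $\dia$ on the left would define an inclusion $\leftconv R \subseteq \cdots$, which is not covered).

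Next, given a non-theorem $\varphi \notin \mathbf{BiSKt}\Sigma$, I would produce a refuting canonical point. Writing $\Lambda := \mathbf{BiSKt}\Sigma$, the pair $(\emptyset, \setof{\varphi})$ is $\Lambda$-unprovable, so by Lemma \ref{lem:extension} it extends to a complete $\Lambda$-unprovable pair, i.e.\ a point $w \in U^{\Lambda}$ with $\varphi$ in its right-hand component; the Truth Lemma (Lemma \ref{lem:truth}) then gives $M^{\Lambda}, w \not\models \varphi$. Moreover, since each axiom of $\Sigma$ is an instance of the general form treated in Theorem \ref{thm:complete-extension} (the $k=0$ and $k=1$ cases above), the argument inside that proof shows that the canonical frame $F^{\Lambda} = (U^{\Lambda}, H^{\Lambda}, R^{\Lambda})$ lies in $\mathbb{F}_{\Sigma}$, i.e.\ it validates every inclusion associated with $\Sigma$.

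Then I would filtrate. Let $\Delta$ be the (finite, subformula-closed) set of subformulas of $\varphi$, and form the finest filtration $M^{s}_{\Delta} = (U_{\Delta}, \underline{H}^{+}_{\Delta}, \underline{R}^{s}_{\Delta}, V_{\Delta})$ of $M^{\Lambda}$ through $\Delta$; by Proposition \ref{prop::fin-fil} this is again an $H$-model, and $U_{\Delta}$ is finite because $\Delta$ is. Since $\varphi \in \Delta$, the filtration lemma (Proposition \ref{prop:fil-lem}) gives $M^{s}_{\Delta}, [w] \not\models \varphi$, so the underlying finite frame $F^{s}_{\Delta} = (U_{\Delta}, \underline{H}^{+}_{\Delta}, \underline{R}^{s}_{\Delta})$ refutes $\varphi$. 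It remains to verify $F^{s}_{\Delta} \models \mathbf{BiSKt}\Sigma$. Being an $H$-frame, $F^{s}_{\Delta}$ validates $\mathbf{BiSKt}$ by soundness (Theorem \ref{thm:sound}); and applying clauses (2) and (3) of Proposition \ref{prop:preserve-finitest} to each inclusion holding on $F^{\Lambda}$, the frame $F^{s}_{\Delta}$ satisfies the corresponding inclusions, so $F^{s}_{\Delta} \in \mathbb{F}_{\Sigma}$ and hence $F^{s}_{\Delta} \models \sigma$ for every $\sigma \in \Sigma$ by Proposition \ref{prop:definable}. Since the set of formulas valid on $F^{s}_{\Delta}$ is a $bist$-logic containing $\Sigma$, it contains the smallest such logic $\mathbf{BiSKt}\Sigma$; thus $F^{s}_{\Delta}$ is a finite frame with $F^{s}_{\Delta} \models \mathbf{BiSKt}\Sigma$ and $F^{s}_{\Delta} \not\models \varphi$, which establishes the finite model property. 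Decidability then follows because $\Sigma$ is finite, so $\mathbf{BiSKt}\Sigma$ is finitely axiomatizable, and a finitely axiomatizable logic with the finite model property is decidable.

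The only genuinely delicate point — already discharged in Proposition \ref{prop:preserve-finitest} — is the preservation of the inclusions under filtration: one must work with the stability-closed relation $\underline{R}^{s}_{\Delta} = \underline{H}^{+}_{\Delta}; \underline{R}_{\Delta}; \underline{H}^{+}_{\Delta}$ rather than the raw $\underline{R}_{\Delta}$, and the preservation succeeds exactly for inclusions with $H$ or $R$ (not $\leftconv R$) on the left. The remaining steps are a routine assembly of the canonical-model, filtration, and soundness results already in hand.
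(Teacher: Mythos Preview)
Your proposal is correct and follows essentially the same route as the paper: refute $\varphi$ in a model whose frame lies in $\mathbb{F}_{\Sigma}$ (the paper cites Theorem \ref{thm:complete-extension} directly rather than unwinding the canonical construction as you do, but that is cosmetic), take the finest filtration through the finite subformula set $\Delta$, and invoke Proposition \ref{prop:preserve-finitest} to transfer the inclusions. Your additional explanation of why the two permitted axiom shapes match exactly clauses (2) and (3) of Proposition \ref{prop:preserve-finitest} is a useful clarification that the paper leaves implicit.
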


\begin{proof}
It suffices to show the former part, i.e., the finite model property. 
Let $\varphi \notin \mathbf{BiSKt}\Sigma$, i.e., $\varphi$ is a non-theorem of $\mathbf{BiSKt}\Sigma$. 
By Theorem \ref{thm:complete-extension}, there is a model $M$ = $(U,H,R,V)$ such that $M \not\models \varphi$ and $F$ = $(U,R,V) \models \Sigma$. By Proposition \ref{prop:definable}, $F$ satisfies the corresponding properties to all the elements of $\Sigma$. 
Put $\Delta$ as the set of all subformulas of $\varphi$. By Proposition \ref{prop:fil-lem}, we obtain $M_{\Delta}^{s} \not\models \varphi$ hence $F_{\Delta}^{s} \not\models \varphi$ where $F_{\Delta}^{s}$ is the frame part of $M_{\Delta}^{s}$. Moreover, Proposition \ref{prop:preserve-finitest} implies $F_{\Delta}^{s} \models \Sigma$, which implies $F_{\Delta}^{s} \models \mathbf{BiSKt}\Sigma$. 
\end{proof}

When $\Sigma$ is a set of formulas from Table \ref{table:def} which satisfies the syntactic condition in the statement of Theorem \ref{thm:fmp-fin}, then the theorem implies that $\mathbf{BiSKt}\Sigma$ is always decidable. In particular:

\begin{corollary}[\cite{Stell2016}]
\label{cor:dec-biskt}
$\mathbf{BiSKt}$ is decidable. 
\end{corollary}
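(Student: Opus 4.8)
The plan is to obtain this as the special case of Theorem \ref{thm:fmp-fin} in which the set of extra axioms is empty. First I would record the identity $\mathbf{BiSKt}\emptyset = \mathbf{BiSKt}$: unpacking the definition, $\mathbf{BiSKt}\emptyset = \bigcap \inset{\Lambda}{\text{$\Lambda$ is a $bist$-logic and $\emptyset \subseteq \Lambda$}}$, and since the side condition $\emptyset \subseteq \Lambda$ holds for every $bist$-logic, this intersection is exactly the smallest $bist$-logic, namely $\mathbf{BiSKt}$.

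Next I would check that $\Sigma = \emptyset$ satisfies the hypotheses of Theorem \ref{thm:fmp-fin}. The empty set is a \emph{finite} set of formulas, and it vacuously meets the syntactic restriction that each of its members have the shape $p \to \mathsf{D}_{1} \cdots \mathsf{D}_{m}p$ or $\bdia p \to \mathsf{D}_{1} \cdots \mathsf{D}_{m}p$, since it has no members at all. Therefore Theorem \ref{thm:fmp-fin} applies directly to $\Sigma = \emptyset$ and delivers that $\mathbf{BiSKt}\emptyset = \mathbf{BiSKt}$ enjoys the finite model property and is consequently decidable.

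There is essentially no obstacle here: the corollary is an immediate instance of the theorem, and the only point worth flagging is the role of finite axiomatisability. Decidability is derived from the finite model property only for finitely axiomatisable logics, and this hypothesis is met because $\mathbf{BiSKt}$ is axiomatised by the finite list of axioms and rules in Table \ref{table:hil-biskt}. This finiteness is precisely what converts the finite model property into a decision procedure, by interleaving an enumeration of $\mathbf{BiSKt}$-proofs with an enumeration of finite $H$-models refuting $\varphi$; for any formula $\varphi$, exactly one of these searches terminates.
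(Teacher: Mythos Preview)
Your proposal is correct and matches the paper's own approach: the corollary is stated immediately after Theorem~\ref{thm:fmp-fin} with no separate proof, the intended argument being exactly the instantiation $\Sigma = \emptyset$ (note that the theorem statement explicitly says ``possibly empty''). Your additional remarks unpacking $\mathbf{BiSKt}\emptyset = \mathbf{BiSKt}$ and the role of finite axiomatisability are accurate elaborations of what the paper leaves implicit.
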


When $\Sigma$ contains the formula $\bdia \bdia p \to \bdia p$ which defines the transitivity of $R$ (recall Table \ref{table:def}), we have the following partial results on the decidability of $\mathbf{BiSKt}\Sigma$. 

\begin{proposition}
\label{prop:tra-fil}
Let $M$ = $(U,H,R,V)$ be an $H$-model and $\Delta$ a subformula closed set. If $R$ is transitive, then $M_{\Delta}^{s+}$ := $(U_{\Delta},\underline{H}^{+}_{\Delta},(\underline{R}_{\Delta}^{s})^{+},V_{\Delta})$ is an $H$-model and a filtration of $M$ through $\Delta$, where $(\underline{R}_{\Delta}^{s})^{+}$ is the transitive closure of $\underline{R}_{\Delta}^{s}$. 
\end{proposition}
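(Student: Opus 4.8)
The plan is to observe that $M_\Delta^{s+}$ differs from the finest filtration $M_\Delta^{s}$ of Proposition \ref{prop::fin-fil} only in its accessibility relation, where $\underline{R}_\Delta^{s}$ has been replaced by its transitive closure $(\underline{R}_\Delta^{s})^{+}$; the carrier $U_\Delta$, the preorder $\underline{H}_\Delta^{+}$ and the valuation $V_\Delta$ are all unchanged. Consequently the preorder property of $\underline{H}_\Delta^{+}$, the requirement that each $V_\Delta(p)$ be an $\underline{H}_\Delta^{+}$-set, and filtration conditions (1), (2), (3) and (7) are inherited verbatim from Proposition \ref{prop::fin-fil}. For condition (4), note that $xRy$ gives $[x]\underline{R}_\Delta^{s}[y]$ by the corresponding condition for $M_\Delta^{s}$, and hence $[x](\underline{R}_\Delta^{s})^{+}[y]$ since $\underline{R}_\Delta^{s}\subseteq(\underline{R}_\Delta^{s})^{+}$. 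It therefore remains to verify that $(\underline{R}_\Delta^{s})^{+}$ is stable and that filtration conditions (5) and (6) hold for the enlarged relation.

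Stability is routine. Since $\underline{R}_\Delta^{s}$ is stable by Proposition \ref{prop::fin-fil}, we have both $\underline{H}_\Delta^{+};\underline{R}_\Delta^{s}\subseteq\underline{R}_\Delta^{s}$ and $\underline{R}_\Delta^{s};\underline{H}_\Delta^{+}\subseteq\underline{R}_\Delta^{s}$, so any $\underline{H}_\Delta^{+}$ prefixed to, or suffixed to, an $(\underline{R}_\Delta^{s})^{+}$-chain can be absorbed into its first, respectively last, $\underline{R}_\Delta^{s}$-step; thus $\underline{H}_\Delta^{+};(\underline{R}_\Delta^{s})^{+};\underline{H}_\Delta^{+}\subseteq(\underline{R}_\Delta^{s})^{+}$. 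In particular $(\underline{R}_\Delta^{s})^{+}$ is transitive, which is exactly what makes this filtration suited to logics containing $\bdia\bdia p\to\bdia p$.

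The heart of the argument is conditions (5) and (6), and here I would use transitivity of $R$ through a one-step \emph{propagation lemma} in each case. For (5) I claim: if $\Box\varphi\in\Delta$, $M,x\models\Box\varphi$ and $[x]\underline{R}_\Delta^{s}[z]$, then $M,z\models\Box\varphi$. Unfolding $[x]\underline{R}_\Delta^{s}[z]$ as $[x]\underline{H}_\Delta^{+}[x']\underline{R}_\Delta[y']\underline{H}_\Delta^{+}[z]$ with a witness $aRb$, $a\in[x']$, $b\in[y']$, condition (3) and $\Delta$-equivalence transport $\Box\varphi$ from $x$ to $a$; since $R$ is transitive every $R$-successor of $b$ is an $R$-successor of $a$, so $M,a\models\Box\varphi$ yields $M,b\models\Box\varphi$; finally $\Delta$-equivalence and condition (3) carry $\Box\varphi$ from $b$ along to $z$. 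Iterating this claim along a chain $[x]=[z_0]\underline{R}_\Delta^{s}\cdots\underline{R}_\Delta^{s}[z_n]=[y]$ witnessing $[x](\underline{R}_\Delta^{s})^{+}[y]$ keeps $\Box\varphi$ true at each $z_i$, and a single application of condition (5) for $M_\Delta^{s}$ at the last step delivers $M,y\models\varphi$.

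Condition (6) is handled symmetrically: if $\bdia\varphi\in\Delta$, $M,z\models\bdia\varphi$ and $[z]\underline{R}_\Delta^{s}[w]$, then $M,w\models\bdia\varphi$. With the same decomposition and witness $aRb$, condition (3) and $\Delta$-equivalence move $\bdia\varphi$ to $a$, so some $v$ satisfies $vRa$ and $M,v\models\varphi$; transitivity of $R$ gives $vRb$, whence $M,b\models\bdia\varphi$, and this then transports to $w$. Starting from condition (6) for $M_\Delta^{s}$ at the first step and iterating this propagation along the chain gives $M,y\models\bdia\varphi$ from $M,x\models\varphi$. The main obstacle is precisely these two lemmas: everything reduces to showing that a single $\underline{R}_\Delta^{s}$-step cannot destroy the truth of $\Box\varphi$ (resp. $\bdia\varphi$), and it is exactly transitivity of $R$ --- used to push the modal witness across the one genuine $R$-edge $aRb$ hidden inside such a step --- that secures this.
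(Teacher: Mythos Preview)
The paper states this proposition without proof, so there is no authorial argument to compare against; your proof is correct and is precisely the natural extension of Hasimoto's technique that the paper has in mind. The key insight---that transitivity of $R$ lets one propagate the truth of $\Box\varphi$ (resp.\ $\bdia\varphi$) across a single $\underline{R}_\Delta^{s}$-step, so that conditions (5) and (6) survive passage to the transitive closure---is exactly what is needed, and your handling of the stability of $(\underline{R}_\Delta^{s})^{+}$ and the inheritance of the remaining filtration conditions from Proposition~\ref{prop::fin-fil} is clean.
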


\begin{theorem}
\label{thm:fmp-K4-S4}
$\mathbf{BiSKt}\setof{\bdia \bdia p \to \bdia p}$ and 
$\mathbf{BiSKt}\setof{\bdia \bdia p \to \bdia p, p \to \bdia p}$ enjoy the finite model property.
Therefore, they are decidable.  
\end{theorem}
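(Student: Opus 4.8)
The plan is to mimic the proof of Theorem \ref{thm:fmp-fin}, but with the finest filtration $M_{\Delta}^{s}$ replaced by the \emph{transitive-closure} filtration $M_{\Delta}^{s+}$ of Proposition \ref{prop:tra-fil}. The reason the earlier theorem does not apply directly is that the transitivity axiom $\bdia\bdia p \to \bdia p$ corresponds to the inclusion $R;R \subseteq R$ (recall Table \ref{table:def}), which has a composition on the left and hence lies outside the syntactic format $p \to \mathsf{D}_{1}\cdots\mathsf{D}_{m}p$ or $\bdia p \to \mathsf{D}_{1}\cdots\mathsf{D}_{m}p$ covered by Proposition \ref{prop:preserve-finitest}; moreover, as is well known, the finest filtration need not preserve transitivity.

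First I would fix $\Sigma$ to be $\setof{\bdia \bdia p \to \bdia p}$ (respectively $\setof{\bdia \bdia p \to \bdia p, p \to \bdia p}$) and take a non-theorem $\varphi \notin \mathbf{BiSKt}\Sigma$. By the strong completeness result of Theorem \ref{thm:complete-extension} there is an $H$-model $M = (U,H,R,V)$ with $M \not\models \varphi$ whose underlying frame lies in $\mathbb{F}_{\Sigma}$; by Proposition \ref{prop:definable} this means that $R$ is transitive (respectively, $R$ is transitive and $H \subseteq R$). Let $\Delta$ be the finite set of all subformulas of $\varphi$. Since $R$ is transitive, Proposition \ref{prop:tra-fil} guarantees that $M_{\Delta}^{s+} = (U_{\Delta},\underline{H}^{+}_{\Delta},(\underline{R}_{\Delta}^{s})^{+},V_{\Delta})$ is an $H$-model and a filtration of $M$ through $\Delta$, with $U_{\Delta}$ finite. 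Applying the Filtration Lemma (Proposition \ref{prop:fil-lem}) to this filtration and to $\varphi \in \Delta$ yields $M_{\Delta}^{s+} \not\models \varphi$, hence $F_{\Delta}^{s+} \not\models \varphi$ for the frame part $F_{\Delta}^{s+}$.

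It then remains to check that $F_{\Delta}^{s+} \models \Sigma$, for which, by Proposition \ref{prop:definable}, it suffices to verify the corresponding frame conditions. Transitivity of $(\underline{R}_{\Delta}^{s})^{+}$ is immediate, since this relation is by construction the transitive closure of $\underline{R}_{\Delta}^{s}$; this is exactly the step the finest filtration could not deliver. For the second logic I must additionally secure reflexivity $\underline{H}^{+}_{\Delta} \subseteq (\underline{R}_{\Delta}^{s})^{+}$: Proposition \ref{prop:preserve-finitest}(2) applied to the inclusion $H \subseteq R$ gives $\underline{H}^{+}_{\Delta} \subseteq \underline{R}_{\Delta}^{s}$ for the finest filtration, and since $\underline{R}_{\Delta}^{s} \subseteq (\underline{R}_{\Delta}^{s})^{+}$ the inclusion persists after passing to the transitive closure (an inclusion of the form $H \subseteq R$ can only be helped by enlarging $R$). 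Thus $F_{\Delta}^{s+} \in \mathbb{F}_{\Sigma}$, and the soundness Theorem \ref{thm:sound-extension} gives $F_{\Delta}^{s+} \models \mathbf{BiSKt}\Sigma$. As $F_{\Delta}^{s+}$ is finite and refutes $\varphi$, the finite model property follows; decidability then follows because $\Sigma$ is finite.

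The main obstacle is precisely the failure of filtration to preserve transitivity, which blocks a direct appeal to Theorem \ref{thm:fmp-fin}; the genuine work is isolated in Proposition \ref{prop:tra-fil}, namely that replacing $\underline{R}_{\Delta}^{s}$ by its transitive closure still yields a \emph{filtration} (the delicate filtration conditions being (5) and (6), involving $\Box$ and $\bdia$). By contrast, the reflexivity bookkeeping for the S4-style logic and the invocations of strong completeness and soundness are routine.
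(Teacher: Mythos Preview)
Your proposal is correct and follows essentially the same route as the paper: use strong completeness to obtain a transitive (and possibly reflexive) countermodel, apply the transitive-closure filtration $M_{\Delta}^{s+}$ of Proposition~\ref{prop:tra-fil} together with the Filtration Lemma, and then observe that the filtrated frame is transitive by construction and inherits reflexivity. The only cosmetic difference is that you justify reflexivity via Proposition~\ref{prop:preserve-finitest}(2) while the paper simply asserts that reflexivity of $R$ passes to $(\underline{R}_{\Delta}^{s})^{+}$; these amount to the same observation.
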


\begin{proof}
Put $\Lambda$ as one of the $bist$-logics in the statement. We only show the finite model property. 
Let $\varphi \notin \Lambda$, i.e., $\varphi$ is a non-theorem of $\Lambda$. 
By Theorem \ref{thm:complete-extension}, there is an $H$-model $M$ = $(U,H,R,V)$ such that $M \not\models \varphi$ and $F$ = $(U,R,V) \models \Sigma$. By Proposition \ref{prop:definable}, $F$ satisfies the corresponding properties to $\bdia \bdia p \to \bdia p$, $p \to \bdia p$ in $\Lambda$. Put $\Delta$ is the set of all subformulas of $\varphi$. By Propositions \ref{prop:fil-lem} and \ref{prop:tra-fil}, we obtain $M_{\Delta}^{s+} \not\models \varphi$ hence $F_{\Delta}^{s+} \not\models \varphi$ where $F_{\Delta}^{s+}$ is the frame part of $M_{\Delta}^{s+}$. We note that $(\underline{R}_{\Delta}^{s})^{+}$ is clearly transitive. If $R$ is reflexive, so is $(\underline{R}_{\Delta}^{s})^{+}$. This implies $F_{\Delta}^{s+} \models \Lambda$. 
\end{proof}

\section{Related Literature and Further Work}
\label{sec:Related}
There is some closely related work in the existing literature which we only became aware of after completing the work reported above. 
We are grateful to the reviewers for drawing this to our attention.

Kripke semantics in which there is an accessibility relation together with an ordering on the set of worlds occurs, for instance, in~\cite{GhilardiMeloni1997,CelaniJanasa1997}.
In particular, Celani and Jansana~\cite{CelaniJanasa1997} show that the semantics for positive modal logic which they
give using this ordering has more convenient properties than earlier work by Dunn.
The stability condition used in our work
is essentialy the bimodule axiom in~\cite[p7]{GhilardiMeloni1997}. Their $\prec$ being our $\breve{H}$ and their accessibility relation, $\mid$,
 being $\conv{H} \comp R \comp \conv{H}$. Thus the (ordinary) converse of the accessibility relation in~\cite{GhilardiMeloni1997} corresponds to 
 our left converse $\leftconv{R}$ so that the adjoint pair of modalities in~\cite{GhilardiMeloni1997} would then be our $\dia$ and $\bbox$. 
However, in our system these two do not suffice to define $\bdia$ and $\Box$ as shown by our Proposition~\ref{prop:undefinable},
so the connection with our results needs further work before it can be made clear.
There is a similar situation with~\cite{GehrkeNagahashiVenema2005} where there are four independent accessibility relations and a partial order $\leqslant$. In this context our $R$ would be the relation
denoted in~\cite{GehrkeNagahashiVenema2005} by $R_\Box$ and our $\conv{H} \comp R \comp \conv{H}$ would be $R_{\dia}$.
Thus the modalities in~\cite{GehrkeNagahashiVenema2005} appear to be our $\dia$ and $\bbox$ and it is not immediately clear how our
$\bdia$ and $\Box$ fit into that framework.
Another difference between our work and these papers is that~\cite{CelaniJanasa1997} and~\cite{GehrkeNagahashiVenema2005} use a language without implication.
In fact, it is stated in~\cite[p101]{GehrkeNagahashiVenema2005} that intuitionistic implication does fit into their framework but they do not include details 
of how this is done.

Although Kripke semantics for intuitionistic modal logic generally take the form of a single set $U$ equipped with two relations $H$ and $R$ on $U$,
the approach of Ewald~\cite{Ewald1986} is somewhat different. 
Ewald uses a family of relations indexed by a poset $(\Gamma, \leqslant)$, so for each $\gamma \in \Gamma$ there is a set $T_\gamma$ and a relation
$u_\gamma \subseteq T_\gamma \times T_\gamma$. These are required to satisfy the condition that if $\gamma_1 \leqslant \gamma_2$ then
$T_{\gamma_1} \subseteq T_{\gamma_2}$ and $u_{\gamma_1} \subseteq u_{\gamma_2}$. We can however define
$U = \{(t,\gamma) \mid \gamma \in \Gamma \text{ and } t \in T_\gamma\}$ to get a single set, and then define relations $H$ and $R$ on $U$ as follows.
We put $(t_1, \gamma_1) \mathrel{H} (t_2, \gamma_2)$ iff $t_1 = t_2$ and $\gamma_1 \leqslant \gamma_2$,
and we put $(t_1, \gamma_1) \mathrel{R} (t_2, \gamma_2)$ iff $\gamma_1 = \gamma_2$ and $t_1 \mathrel{u_{\gamma_1}} t_2$.
This $R$ will not necessarily be stable in our sense, 
but it does satisfy the weaker condition that for any $X \subseteq U$ which is an $H$-set as in Definition~\ref{defn-stable},
we have $X \oplus R = X \oplus (H \comp R \comp H)$
where $\oplus$ denotes the dilation defined in the Introduction. Although this is a weaker condition than stability, the models used by
Ewald satisfy the additional constraint that for any $H$-set $X$ we have $X \oplus \breve{R} = X \oplus (H \comp \breve{R} \comp H)$.
Applying the same semantics as in our Definition~\ref{defn-semantics},
but for a language without $\coimp$,
we then have the same semantics as in~\cite{Ewald1986}
by taking the tense modalities $F, P, G, H$ to be $\dia, \bdia, \Box$ and $\bbox$ respectively.
Extending Ewald's approach to the language with $\coimp$ is then immediate but it can be shown that,
due to the weakening of stability,
the formula $\dia p \leftrightarrow \coneg \Box \neg p$ is no longer valid in all frames.
It is interesting to note that $\dia p \to \coneg \Box \neg p$ is valid in this weaker setting.

Bi-intuitionistic tense logic is studied proof-theoretically by Gor{\'e} et al.\ in~\cite{GoreBiIntAiML2010}.
This includes a discussion which obtains Ewald's semantics by identifying the two separate accessibility relations that are
used in~\cite{GoreBiIntAiML2010}. 
The use of two relations  makes the work of Gor{\'e} et al. more general than ours, a point already made in~\cite{Stell2016}, however
we contend that
the connection between $\dia$ and $\Box$ that appears explicitly in~\cite{Stell2016}, and implicitly as just noted in~\cite{Ewald1986},
is sufficiciently interesting to merit further study. 

The correspondence results mentioned here and published in~\cite{Stell2016} also appear to be closely related to earlier work. Sahlqvist theorems for positive tense logic and for intuitionistic modal logic are proved in~\cite{GehrkeNagahashiVenema2005}, and in~\cite{GhilardiMeloni1997} respectively. There is a Sahlqvist theorem for bi-intuitionistic modal mu-calculus which was established by Conradie et al~\cite{ConradieFomatatiPalmigianoSourabh2015}. Further work is needed to determine whether these results could be generalized in a straighforward way to our setting. 

In this paper we have focussed on the logical theory rather than the applications that originally motivated this work.
Within artificial intelligence the topic of spatial reasoning is of considerable practical importance~\cite{CohnRenzHbk2008}.
Spatial regions can be modelled as  subgraphs, which can also be identified with collections of pixels in images to make the connection with the 
mathematical morphology in the Introduction. Investigating the ability of the logics presented here to express spatial relations, in the sense
of~\cite{CohnRenzHbk2008}, between subgraphs is another topic for future work.
\footnote{The authors would like to thank the anonymous reviewers for their helpful and constructive comments that greatly contributed to improving the final version of the paper. The work of the first author was partially supported by JSPS KAKENHI Grant-in-Aid for Young Scientists (B) Grant Number 15K21025 and JSPS Core-to-Core Program (A. Advanced Research Networks).}

\bibliographystyle{eptcs}
\bibliography{M4Mrefs,JGSxtra}

\begin{thebibliography}{10}
\providecommand{\bibitemdeclare}[2]{}
\providecommand{\surnamestart}{}
\providecommand{\surnameend}{}
\providecommand{\urlprefix}{Available at }
\providecommand{\url}[1]{\texttt{#1}}
\providecommand{\href}[2]{\texttt{#2}}
\providecommand{\urlalt}[2]{\href{#1}{#2}}
\providecommand{\doi}[1]{doi:\urlalt{http://dx.doi.org/#1}{#1}}
\providecommand{\bibinfo}[2]{#2}

\bibitemdeclare{article}{BlochJANCL2002}
\bibitem{BlochJANCL2002}
\bibinfo{author}{I.~\surnamestart Bloch\surnameend} (\bibinfo{year}{2002}):
  \emph{\bibinfo{title}{Modal Logics Based on Mathematical Morphology for
  Qualitative Spatial Reasoning}}.
\newblock {\sl \bibinfo{journal}{Journal of Applied Non-Classical Logics}}
  \bibinfo{volume}{12}(\bibinfo{number}{3--4}), pp. \bibinfo{pages}{399--423},
  \doi{10.3166/jancl.12.399-423}.

\bibitemdeclare{article}{CelaniJanasa1997}
\bibitem{CelaniJanasa1997}
\bibinfo{author}{S.~\surnamestart Celani\surnameend} \&
  \bibinfo{author}{R.~\surnamestart Jansana\surnameend} (\bibinfo{year}{1997}):
  \emph{\bibinfo{title}{A new semantics for positive modal logic}}.
\newblock {\sl \bibinfo{journal}{Notre Dame Journal of Formal Logic}}
  \bibinfo{volume}{38}(\bibinfo{number}{1}), pp. \bibinfo{pages}{1--19},
  \doi{10.1305/ndjfl/1039700693}.

\bibitemdeclare{incollection}{CohnRenzHbk2008}
\bibitem{CohnRenzHbk2008}
\bibinfo{author}{A.~G. \surnamestart Cohn\surnameend} \&
  \bibinfo{author}{J.~\surnamestart Renz\surnameend} (\bibinfo{year}{2008}):
  \emph{\bibinfo{title}{Qualitative Spatial Representation and Reasoning}}.
\newblock In \bibinfo{editor}{F.~\surnamestart {van Harmelen}\surnameend},
  \bibinfo{editor}{V.~\surnamestart Lifschitz\surnameend} \&
  \bibinfo{editor}{B.~\surnamestart Porter\surnameend}, editors: {\sl
  \bibinfo{booktitle}{Handbook of knowledge representation}},
  \bibinfo{publisher}{Elsevier}, pp. \bibinfo{pages}{551--596},
  \doi{10.1016/S1574-6526(07)03013-1}.

\bibitemdeclare{article}{ConradieFomatatiPalmigianoSourabh2015}
\bibitem{ConradieFomatatiPalmigianoSourabh2015}
\bibinfo{author}{W.~\surnamestart Conradie\surnameend},
  \bibinfo{author}{Y.~\surnamestart Fomatati\surnameend},
  \bibinfo{author}{A.~\surnamestart Palmigiano\surnameend} \&
  \bibinfo{author}{S.~\surnamestart Sourabh\surnameend} (\bibinfo{year}{2015}):
  \emph{\bibinfo{title}{Algorithmic Correspondence for Intuitionistic Modal
  Mu-calculus}}.
\newblock {\sl \bibinfo{journal}{Theoretical Computer Science}}
  \bibinfo{volume}{564}, pp. \bibinfo{pages}{30--62},
  \doi{10.1016/j.tcs.2014.10.027}.

\bibitemdeclare{article}{CoustyNajmanCVIU2013}
\bibitem{CoustyNajmanCVIU2013}
\bibinfo{author}{J.~\surnamestart Cousty\surnameend},
  \bibinfo{author}{L.~\surnamestart Najman\surnameend},
  \bibinfo{author}{F.~\surnamestart Dias\surnameend} \&
  \bibinfo{author}{J.~\surnamestart Serra\surnameend} (\bibinfo{year}{2013}):
  \emph{\bibinfo{title}{Morphological Filtering on Graphs}}.
\newblock {\sl \bibinfo{journal}{Computer Vision and Image Understanding}}
  \bibinfo{volume}{117}, pp. \bibinfo{pages}{370--385},
  \doi{10.1016/j.cviu.2012.08.016}.

\bibitemdeclare{article}{Esakia2006}
\bibitem{Esakia2006}
\bibinfo{author}{L.~\surnamestart Esakia\surnameend} (\bibinfo{year}{2006}):
  \emph{\bibinfo{title}{The modalized Heyting calculus: a conservative modal
  expantion of the intuitionistic logic}}.
\newblock {\sl \bibinfo{journal}{Journal of Applied Non-Classical Logic}}
  \bibinfo{volume}{16}(\bibinfo{number}{3-4}), pp. \bibinfo{pages}{349--366},
  \doi{10.3166/jancl.16.349-366}.

\bibitemdeclare{article}{Ewald1986}
\bibitem{Ewald1986}
\bibinfo{author}{W.~B. \surnamestart Ewald\surnameend} (\bibinfo{year}{1986}):
  \emph{\bibinfo{title}{Intuitionistic Tense and Modal Logic}}.
\newblock {\sl \bibinfo{journal}{Journal of Symbolic Logic}}
  \bibinfo{volume}{51}(\bibinfo{number}{1}), pp. \bibinfo{pages}{166--179},
  \doi{10.2307/2273953}.

\bibitemdeclare{article}{GehrkeNagahashiVenema2005}
\bibitem{GehrkeNagahashiVenema2005}
\bibinfo{author}{M.~\surnamestart Gehrke\surnameend},
  \bibinfo{author}{H.~\surnamestart Nagahashi\surnameend} \&
  \bibinfo{author}{Y.~\surnamestart Venema\surnameend} (\bibinfo{year}{2005}):
  \emph{\bibinfo{title}{A {S}ahlqvist theorem for distributive modal logic}}.
\newblock {\sl \bibinfo{journal}{Annals of Pure and Applied Logic}}
  \bibinfo{volume}{131}, pp. \bibinfo{pages}{65--102},
  \doi{10.1016/j.apal.2004.04.007}.

\bibitemdeclare{article}{GhilardiMeloni1997}
\bibitem{GhilardiMeloni1997}
\bibinfo{author}{S.~\surnamestart Ghilardi\surnameend} \&
  \bibinfo{author}{G.~\surnamestart Meloni\surnameend} (\bibinfo{year}{1997}):
  \emph{\bibinfo{title}{Constructive canonicity in non-classical logics}}.
\newblock {\sl \bibinfo{journal}{Annals of Pure and Applied Logic}}
  \bibinfo{volume}{86}, pp. \bibinfo{pages}{1--32},
  \doi{10.1016/S0168-0072(96)00048-6}.

\bibitemdeclare{inproceedings}{GoreBiIntAiML2010}
\bibitem{GoreBiIntAiML2010}
\bibinfo{author}{R.~\surnamestart Gor{\'e}\surnameend},
  \bibinfo{author}{L.~\surnamestart Postniece\surnameend} \&
  \bibinfo{author}{A.~\surnamestart Tiu\surnameend} (\bibinfo{year}{2010}):
  \emph{\bibinfo{title}{Cut-elimination and Proof Search for Bi-Intuitionistic
  Tense Logic}}.
\newblock In: {\sl \bibinfo{booktitle}{Advances in Modal Logic}}, pp.
  \bibinfo{pages}{156--177}.

\bibitemdeclare{article}{Hasimoto2001}
\bibitem{Hasimoto2001}
\bibinfo{author}{Y.~\surnamestart Hasimoto\surnameend} (\bibinfo{year}{2001}):
  \emph{\bibinfo{title}{Finite Model Property for Some Intuitionistic Modal
  Logics}}.
\newblock {\sl \bibinfo{journal}{Bulletin of the Section of Logic}}
  \bibinfo{volume}{30}(\bibinfo{number}{2}), pp. \bibinfo{pages}{87--97}.
\newblock
  \urlprefix\url{http://www.filozof.uni.lodz.pl/bulletin/pdf/30_2_4.pdf}.

\bibitemdeclare{book}{NajmanTalbot2010}
\bibitem{NajmanTalbot2010}
\bibinfo{author}{L.~\surnamestart Najman\surnameend} \&
  \bibinfo{author}{H.~\surnamestart Talbot\surnameend} (\bibinfo{year}{2010}):
  \emph{\bibinfo{title}{Mathematical Morphology. {F}rom theory to
  applications}}.
\newblock \bibinfo{publisher}{Wiley}.

\bibitemdeclare{article}{Ono1977}
\bibitem{Ono1977}
\bibinfo{author}{Hiroakira \surnamestart Ono\surnameend}
  (\bibinfo{year}{1977}): \emph{\bibinfo{title}{On some intuitionistic modal
  logics}}.
\newblock {\sl \bibinfo{journal}{Publications of the Research Institute for
  Mathematical Sciences}} \bibinfo{volume}{13}(\bibinfo{number}{3}), pp.
  \bibinfo{pages}{687--722}, \doi{10.2977/prims/1195189604}.
\newblock \urlprefix\url{http://projecteuclid.org/euclid.prims/1195189604}.

\bibitemdeclare{article}{Rauszer1974a}
\bibitem{Rauszer1974a}
\bibinfo{author}{C.~\surnamestart Rauszer\surnameend} (\bibinfo{year}{1974}):
  \emph{\bibinfo{title}{Semi-{B}oolean algebras and their applications to
  intuitionistic logic with dual operations}}.
\newblock {\sl \bibinfo{journal}{Fundamenta Mathematicae}}
  \bibinfo{volume}{LXXXIII}, pp. \bibinfo{pages}{219--249}.
\newblock \urlprefix\url{https://eudml.org/doc/214696}.

\bibitemdeclare{inproceedings}{Sotirov1980}
\bibitem{Sotirov1980}
\bibinfo{author}{V.~H. \surnamestart Sotirov\surnameend}
  (\bibinfo{year}{1980}): \emph{\bibinfo{title}{Modal Theories with
  Intuitionistic Logic}}.
\newblock In: {\sl \bibinfo{booktitle}{Proceedings of the Conference on
  Mathematical Logic, Sofia, 1980}}, \bibinfo{organization}{Bulgarian Academy
  of Sciences}, pp. \bibinfo{pages}{139--171}.

\bibitemdeclare{article}{Stell2015}
\bibitem{Stell2015}
\bibinfo{author}{J.~G. \surnamestart Stell\surnameend} (\bibinfo{year}{2015}):
  \emph{\bibinfo{title}{Symmetric {H}eyting Relation Algebras with Applications
  to Hypergraphs}}.
\newblock {\sl \bibinfo{journal}{Journal of Logical and Algebraic Methods in
  Programming}} \bibinfo{volume}{84}, pp. \bibinfo{pages}{440--455},
  \doi{10.1016/j.jlamp.2014.12.001}.

\bibitemdeclare{article}{Stell2016}
\bibitem{Stell2016}
\bibinfo{author}{J.~G. \surnamestart Stell\surnameend}, \bibinfo{author}{R.~A.
  \surnamestart Schmidt\surnameend} \& \bibinfo{author}{D.~\surnamestart
  Rydeheard\surnameend} (\bibinfo{year}{2016}): \emph{\bibinfo{title}{A
  bi-intuitionistic modal logic: Foundations and automation}}.
\newblock {\sl \bibinfo{journal}{Journal of Logical and Algebraic Methods in
  Programming}} \bibinfo{volume}{85}(\bibinfo{number}{4}), pp.
  \bibinfo{pages}{500--519}, \doi{10.1016/j.jlamp.2015.11.003}.

\bibitemdeclare{incollection}{Wolter1999}
\bibitem{Wolter1999}
\bibinfo{author}{F.~\surnamestart Wolter\surnameend} \&
  \bibinfo{author}{M.~\surnamestart Zakharyaschev\surnameend}
  (\bibinfo{year}{1999}): \emph{\bibinfo{title}{Intuitionistic Modal Logic}}.
\newblock In \bibinfo{editor}{Andrea \surnamestart Cantini\surnameend}, editor:
  {\sl \bibinfo{booktitle}{Logic and Foundations of Mathematics}},
  \bibinfo{publisher}{Kluwer Academic Publishers}, pp.
  \bibinfo{pages}{227--238}, \doi{10.1007/978-94-017-2109-7_17}.

\end{thebibliography}

\end{document}